\patchcmd{\ttlh@hang}{\parindent\z@}{\parindent\z@\leavevmode}{}{}
\patchcmd{\ttlh@hang}{\noindent}{}{}{}
\titleformat{\section}{\large\bfseries}{\thesection.}{.5em}{}
\titlespacing*{\section}{0pt}{*3}{*2}
\titleformat{\subsection}{\normalfont\bfseries}{\thesubsection.}{.5em}{}
\titlespacing*{\subsection} {0pt}{*3}{*2}
\titleformat{\subsubsection}{\normalfont\bfseries}{\thesubsubsection.}{.5em}{}
\titlespacing*{\subsubsection} {0pt}{*3}{*2}
\long\def\symbolfootnote[#1]#2{\begingroup
\def\thefootnote{\fnsymbol{footnote}}\footnote[#1]{#2}\endgroup}
\theoremstyle{plain} 
\newtheorem{theorem}{Theorem}[section]
\newtheorem{lemma}{Lemma}[section]
\newtheorem{corollary}{Corollary}[section]
\newtheorem{problem}{Problem}
\theoremstyle{definition} 
\newtheorem{remark}{Remark}[section]
\crefname{corollary}{Corollary}{Corollaries}
\crefname{problem}{Problem}{Problems}
\crefname{remark}{Remark}{Remarks}
\crefname{appsec}{Appendix}{Appendices}
\crefname{lemma}{Lemma}{Lemmas}
\numberwithin{equation}{section} 
\numberwithin{remark}{section} 
\DeclareMathOperator{\E}{\mathbf{E}} 		
\DeclareMathOperator{\Var}{\mathbf{Var}} 	
\DeclareMathOperator*{\argmin}{\text{arg\,min\;}}
\newcommand{\R}{\mathbb{R}}
\newcommand{\nonNegSet}{\R_{\geq0}}
\newcommand{\Hyp}{\mathrm{H}} 	
\newcommand{\given}{\,|\,}
\newcommand{\bgiven}{\,\big|\,}
\newcommand{\bbgiven}{\,\bigg|\,}
\newcommand{\param}{\ensuremath{\theta}}
\newcommand{\paramRV}{\ensuremath{\Theta}}
\newcommand{\dd}{\mathrm{d}}	
\newcommand{\indd}[1]{\ensuremath{\mathbf{1}_{#1} }} 
\newcommand{\ind}[1]{\indd{\{#1\}} } 
\newcommand{\stat}[1]{\ensuremath{\mathbf{t}_{#1}}} 
\newcommand{\statWOa}{\ensuremath{{\stat{}}}} 
\newcommand{\errConstr}{\ensuremath{\kappa}} 
\newcommand{\updateProbMeasure}[1][n]{\ensuremath{Q_{\stat{#1}}}}
\newcommand{\updateProbMeasureTilde}[1][n]{\ensuremath{\tilde Q_{\stat{#1},\postProbwo_{#1}}}}
\newcommand{\updateProbMeasureTildeScaled}[2][n]{\ensuremath{\tilde Q_{\stat{#1},#2\postProbwo_{#1}^\bullet}}}
\newcommand{\updateProbMeasureMtx}[1][n]{\ensuremath{{\boldsymbol{Q}_{#1}}}}
\newcommand{\costC}{\ensuremath{d_n}}
\newcommand{\costS}{\ensuremath{g}}
\newcommand{\auxVarCost}{{\ensuremath{{D}}}}
\newcommand{\auxVarCostOpt}[1]{\ensuremath{{D^\star_{#1}}}}
\newcommand{\auxVarCostOptTilde}[1]{\ensuremath{{\tilde D^\star_{#1}}}}
\newcommand{\auxVarCostOptMtx}[1]{\ensuremath{{\boldsymbol{D}^\star_{#1}}}}
\newcommand{\auxVarCostOptDiscr}[1]{\ensuremath{{\boldsymbol{D}^\star_{#1}}}}
\newcommand{\RV}{\ensuremath{X}}
\newcommand{\RVidx}[1]{\ensuremath{\RV_{#1}}}
\newcommand{\SeqDataRV}{\ensuremath{\boldsymbol{X}_N}}
\newcommand{\obsIdx}[1]{\ensuremath{\boldsymbol{x}_{#1}}}
\newcommand{\obs}{\obsIdx{n}}
\newcommand{\obsScalar}[1]{x_{#1}}
\newcommand{\xnew}{\tilde{x}}	
\newcommand{\stopR}{\ensuremath{\Psi} }
\newcommand{\stopOptErr}{\ensuremath{\stopR^\star_{\errConstr}}}
\newcommand{\stopOptCerr}{\ensuremath{\stopR^\star_{C_\errConstr^\star}}}
\newcommand{\dec}{\ensuremath{\delta}}
\newcommand{\decOpt}{\ensuremath{\dec^\star}}
\newcommand{\stopAt}[1][]{\ensuremath{\Phi_{n#1}}}
\newcommand{\est}[1]{\ensuremath{\hat\param_{#1}}}
\newcommand{\policy}{\ensuremath{\pi}}
\newcommand{\policySet}{\ensuremath{\Pi}}
\newcommand{\policyOptC}{\ensuremath{\policy_C^\star}}
\newcommand{\policyOptCerr}{\ensuremath{\policy_{C_\errConstr^\star}^\star}}
\newcommand{\policyOptErr}{\ensuremath{\policy_{\errConstr}^\star}}
\newcommand{\policyFull}{\ensuremath{\{\stopR_n, \dec_n, \est{0,n},\est{1,n}\} _{0\leq n \leq N}}}
\newcommand{\policyMapping}{\ensuremath{\sampleSpaceObs[N]\rightarrow\{0,1\}^{N+1}\times\{0,1\}^{N+1}\times\parameterSpace_0^{N+1}\times\parameterSpace_1^{N+1}}}
\newcommand{\errorDet}[2]{\ensuremath{\alpha_{#1}^{#2}}}
\newcommand{\errorEst}[2]{\ensuremath{\beta_{#1}^{#2}}}
\newcommand{\errorDetTilde}[2]{\ensuremath{\tilde\alpha_{#1}^{#2}}}
\newcommand{\errorDetOpt}[2]{\ensuremath{{\alpha_{#1}^{#2}}}}
\newcommand{\errorEstOpt}[2]{\ensuremath{{\beta_{#1}^{#2}}}}
\newcommand{\stopRegionCompl}[1][n]{\ensuremath{\bar{\mathcal{S}}_{#1}}}
\newcommand{\stopRegionComplWoIdx}{\ensuremath{\bar{\mathcal{S}}}}
\newcommand{\stopRegion}[1][n]{\ensuremath{{\mathcal{S}_{#1}}}}
\newcommand{\stopRegionDec}[2]{\ensuremath{{\mathcal{S}_{#1}^{#2}}}}
\newcommand{\stopRegionDecWoIdx}[1]{\ensuremath{{\mathcal{S}^{#1}}}}
\newcommand{\stopRegionBound}[1][n]{\ensuremath{{\partial\mathcal{S}_{#1}}}}
\newcommand{\stopRegionBoundTilde}[1][n]{\ensuremath{{\partial\mathcal{\tilde S}_{#1}}}}
\newcommand{\Gam}{\mathrm{Gam}}
\newcommand{\unif}{\mathcal{U}}
\newcommand{\norm}[2]{\ensuremath{\mathcal{N}\left(#1,#2\right)}}
\newcommand{\metric}{\ensuremath{\mathcal{E}}}
\newcommand{\sampleSpaceObs}[1][]{\ensuremath{\Omega_X^{#1}}}
\newcommand{\parameterSpace}{\ensuremath{\Lambda}}
\newcommand{\sampleSpaceHyp}{\ensuremath{\Omega_\Hyp}}
\newcommand{\stateSpaceObs}[1][]{\ensuremath{E_X^{#1}}}
\newcommand{\stateSpaceParam}[1][]{\ensuremath{E_\parameterSpace^{#1}}}
\newcommand{\stateSpaceStat}[1][]{\ensuremath{E_\statWOa^{#1}}}
\newcommand{\stateSpaceHyp}[1][]{\ensuremath{E_\Hyp}}
\newcommand{\metricStat}{\ensuremath{\metric_\statWOa}}
\newcommand{\likelihoodRatio}{\ensuremath{\eta}}
\newcommand{\regConst}{\ensuremath{\varepsilon}}
\newcommand{\Nstat}{\ensuremath{N_t}}
\newcommand{\Nx}{\ensuremath{N_x}}
\newcommand{\postProbwo}{\ensuremath{e}}
\newcommand{\postProb}[2][n]{\ensuremath{\postProbwo_{#2,#1}}}
\newcommand{\transkernelPostVar}[3]{\ensuremath{\tilde\xi_{#3}(#1, #2)}}
\newcommand{\stateSpacePostProb}{\ensuremath{E_{\postProbwo}}}
\newcommand{\metricPostProb}{\ensuremath{\mathcal{E}_{\postProbwo}}}
\newlength{\imgWidthSingle}
\newlength{\imgWidthDouble}
  \acrodef{GLRT}{Generalized Likelihood Ratio Test}
 \acrodef{LP}{Linear Programming} 
 \acrodef{FSS}{fixed sample size}
 \acrodef{iid}{independent and identically distributed}
 \acrodef{KL}{Kullback-Leibler}
 \acrodef{MSE}{mean square error}
 \acrodef{MMSE}{minimum mean square error}
 \acrodef{SEL}{a}
 \acrodef{MAD}{sdfsf}
 \acrodef{LINEX}{linear exponential}
 \acrodef{pdf}{probability density function}
 \acrodef{SPRT}{Sequential Probability Ratio Test}
 \acrodef{w.r.t.}{with respect to}
\begin{document}
\title{\textbf{\Large Bayesian Sequential Joint Detection and Estimation}}

\date{}

\maketitle

\author{
	\begin{center}
		\vskip -1cm
		\textbf{\large Dominik Reinhard, Michael Fau\ss{}, and Abdelhak M. Zoubir} \\
		Signal Processing Group, Technische Universit\"at Darmstadt, \\
		Darmstadt, Germany
	\end{center}
}
\symbolfootnote[0]{\normalsize Address correspondence to Dominik Reinhard,
	Signal Processing Group, Technische Universit\"at Darmstadt, Merckstra\ss{}e 25, 
	64283 Darmstadt, Germany; E-mail: reinhard@spg.tu-darmstadt.de
}

{\small \noindent\textbf{Abstract:}
Joint detection and estimation refers to deciding between two or more hypotheses and, depending on the test outcome, simultaneously estimating the unknown parameters of the underlying distribution.
This problem is investigated in a sequential framework under mild assumptions on the underlying random process.
We formulate an unconstrained sequential decision problem, whose cost function is the weighted sum of the expected run-length and the detection/estimation errors.
Then, a strong connection between the derivatives of the cost function with respect to the weights, which can be interpreted as Lagrange multipliers, and the detection/estimation errors of the underlying scheme is shown.
This property is used to characterize the solution of a closely related sequential decision problem, whose objective function is the expected run-length under constraints on the average detection/estimation errors.
We show that the solution of the constrained problem coincides with the solution of the unconstrained problem with suitably chosen weights.
These weights are characterized as the solution of a linear program, which can be solved using efficient off-the-shelf solvers.
The theoretical results are illustrated with two example problems, for which optimal sequential schemes are designed numerically and whose performance is validated via Monte Carlo simulations.  }

\vspace*{1em}
{\small \noindent\textbf{Keywords:} 
Bayesian decision theory; Joint detection and estimation; Linear programming; Optimization; Optimal test; Sequential analysis; Stopping time.
}
\\[1em]
{\small \noindent\textbf{Subject Classifications:} 
	62L10; 62L12; 62L15; 90C05; 93E10.

\section{Introduction}

In a wide range of applications, detection and estimation appear in a coupled way and both are of primary interest. This means that one wants to decide between two or more hypotheses and, depending on the decision, estimate one or more, possibly random, parameters of the underlying distribution. This problem was initially treated by \citet{Middleton1968Simultaneous}, who used a Bayesian framework to obtain a jointly optimal solution. After the extension to a framework for testing multiple hypotheses by \citet{fredriksen1972simultaneous}, the joint detection and estimation problem received little to no attention in the literature. However, more recently, joint detection and estimation has regained importance \citep{tajer2010Optimal, moustakides2012joint, Moustakides2011Optimal, Momeni2015Joint,Li2007Optimal,Li2016Optimal}.

In speech processing, for example, one is interested in detecting whether a speech signal is present or not and in the former case, estimating the speech spectral amplitude \citep{Momeni2015Joint}. If one wants to use dynamic spectrum access in a cognitive radio, the secondary user has to detect the primary user and has to estimate the possible interference \citep{Yilmaz2014Sequential}. There exist many other areas such as change point detection and estimation of time of change \citep{boutoille2010hybrid}, radar \citep{tajer2010Optimal}, optical communications \citep{wei2018simultaneous}, detection and estimation of objects from images \citep{Vo2010Joint} or biomedicine \citep{jajamovich2012Minimax, chaari2013fast} to name just a few.

For all these applications, detection and estimation are intrinsically coupled and a reliable decision and accurate estimates are of primary interest. Treating  both problems separately and finding an optimal solution for each of them does not result in a jointly optimal solution \citep{Moustakides2011Optimal}. An example would be to use the Neyman-Pearson test as an optimal solution for the detection part and a Bayesian estimator for the estimation task.
There exist other approaches, such as the \ac{GLRT}, which solve the problem in a combined manner but only the detection performance is of primary interest.

Sequential analysis is a field of research which was introduced by \citet{wald1947sequential} with his famous \ac{SPRT}. The idea behind the \ac{SPRT} is to design a test which uses as few samples as possible while guaranteeing that constraints on the two error probabilities are fulfilled. Since the introduction of the \ac{SPRT}, sequential detection and estimation methods have been developed \citep{ghosh2011sequential,tartakovsky2014sequential}. Especially for time critical or low energy applications, sequential methods are preferable to fixed sample size ones.

Combining the idea of sequential analysis with joint detection and estimation leads to a powerful framework that uses as few sample as possible while fulfilling constraints on the error probabilities as well as on the quality of the estimates.

Although there exist sequential tests which deal with composite hypotheses and that include an estimation part \citep{li2014generalized, goelz2017}, estimation is usually of secondary interest. The joint solution of sequential detection and estimation has attained little attention in the literature. \citet{Yilmaz2014Sequential} addressed the problem of sequential joint spectrum sensing and channel estimation. The aim of that work is to estimate a communication channel and to maximize the secondary user throughput while a constraint on the primary user outage is fulfilled. That approach fulfills a constraint on the overall target accuracy and minimizes the expected number of samples.
The question, whether the primary user is active or not is formulated as a hypothesis test and, hence, the authors end up with a sequential joint detection and estimation formulation.

In \citet{reinhard2016}, we proposed an approach which treats the sequential joint detection and estimation problem in a non-Bayesian framework. The goal is to perform a sequential hypothesis test and estimate an unknown quantity if one decides in favor of the alternative where the underlying distributions are not known exactly.

Moreover, \citet{Yilmaz2016Sequential} provided an optimal sequential joint detection and estimation framework for multiple hypotheses which is based on a state space model. That approach uses, similarly to \citep{Yilmaz2014Sequential}, an overall cost function consisting of a weighted combination of detection and estimation errors. The run-length is then minimized such that the cost function fulfills a certain constraint for every set of observations. Although that scheme is optimal for fixed weighting coefficients, the question how to chose these coefficients when one wants to achieve a certain performance in terms of error probabilities or estimation error remains unanswered. Especially since the error probabilities and the estimation errors have very different numerical ranges, a trade-off between detection and estimation errors is very hard and choosing the coefficients heuristically is rather impossible.

Recently, \citet{fauss2017sequential} proposed an approach which tests the presence or absence of a signal in a sequential setting and in case the signal is present, the signal-to-noise ratio is estimated. 

Contrary to existing methods, this work provides a strictly optimal sequential joint detection and estimation framework under mild assumptions. The proposed approach is based on a Bayesian framework and minimizes the average posterior risk. The loss function jointly takes detection and estimation errors into account as well as the expected number of samples. In addition to this, a method is provided to obtain the optimal coefficients of the cost function such that the resulting scheme fulfills constraints on the detection performance as well as on the estimation quality while the expected run-length is minimized.

This paper starts with a statement of the underlying assumptions, followed by a short explanation of the basics of Bayesian decision theory and optimal stopping theory with application to sequential joint detection and estimation. The resulting problem is then reduced to an optimal stopping problem whose solution is characterized by a non-linear Bellman equation. The properties of this equation are presented and a connection between the Bellman equation and the estimation and detection errors is derived.
Exploiting this connection enables us to obtain the coefficients of the loss function such that the resulting test is of minimum run-length and fulfills predefined constraints on the detection and estimation errors.
The advantages of the presented method and special cases are highlighted in the discussion section. Numerical results are then provided to illustrate the behavior of the presented scheme.
 \section{Problem Statement}\label{sec:problemStatement}
Let $\SeqDataRV = (\RVidx{1},\ldots,\RVidx{N})$ be a sequence of random variables. The sequence $\SeqDataRV$ can be generated under two different hypotheses $\Hyp_i$, $i=0,1$. Under each hypothesis, the distribution of the random variables $\RVidx{n}$, $n=1,\ldots,N$, depends on a random parameter $\paramRV$ whose distribution is determined by the underlying hypothesis $\Hyp_i$.
Hence, the tuple $(\RVidx{n}, \paramRV, \Hyp_i)$ in the metric state space $(\stateSpaceObs \times E_\paramRV \times \stateSpaceHyp, \mathcal{E}_\RV \times \mathcal{E}_\paramRV \times \mathcal{E}_\Hyp)$ is defined on the probability space $\left(\sampleSpaceObs \times \parameterSpace \times \sampleSpaceHyp, \mathcal{F}_\RV \otimes \mathcal{F}_{\paramRV} \otimes \mathcal{F}_{\Hyp}, P\right)$. Furthermore, the random variables $\RVidx{n}\given\paramRV$ are conditionally \ac{iid}.
The joint probability density of the tuple $(\SeqDataRV, \paramRV, \Hyp_i)$ can hence be factorized as
\begin{align*}
 p(\Hyp_i,\theta,\obsIdx{N}) = p(\obsIdx{N}\given\theta) p(\theta\given\Hyp_i)p(\Hyp_i) = \prod_{n=1}^N p(\obsScalar{n}\given\theta) p(\theta\given\Hyp_i)p(\Hyp_i) \,,
\end{align*}
where $\obsIdx{n}=(x_1,\ldots,x_n)$ and $\param$ are the realizations of $\SeqDataRV$ and $\paramRV$, respectively.
The two composite hypotheses can be formulated as:
\begin{align}
 \begin{split}
  \Hyp_0: & \quad \SeqDataRV\given\paramRV_0 \sim p(\obsIdx{N}\given\param_0),\;\paramRV_0\sim p(\param_0\given\Hyp_0)\\
  \Hyp_1: & \quad \SeqDataRV\given\paramRV_1 \sim p(\obsIdx{N}\given\param_1),\;\paramRV_1\sim p(\param_1\given\Hyp_1)
  \label{eq:problemSatement}
 \end{split}
\end{align}
Furthermore, let $\parameterSpace_i$, $i=0,1$, denote the parameter space of $\paramRV_i$ and let $\parameterSpace$ = $\parameterSpace_0 \cup \parameterSpace_1$. The random parameters $\paramRV_i$ as well as the parameter spaces $\parameterSpace_i$ can differ under the hypotheses $\Hyp_i$.
Since the two hypotheses are composite, one is also interested in estimating the underlying parameter $\paramRV_i$ which then results in a joint detection and estimation problem. Since the sequence $\SeqDataRV$ is observed sequentially, one obtains a sequential joint detection and estimation problem. Usually one is interested in a procedure which guarantees that the error probabilities as well as the quality of the estimates are not worse than the pre-specified levels.

Put simply, the problem can be formulated as follows: 
\emph{design a sequential procedure which uses on average as few samples as possible and fulfills constraints on the error probabilities and the estimation quality.}

Before a more technical problem formulation is given, the underlying assumptions are summarized and some remarks on the notation are given.

\subsection{Assumptions and Notation}
The following assumptions are made throughout the paper.
\begin{enumerate}
 \item The random variables $X_n$, $n=1,\ldots,N$, are conditionally \ac{iid} with respect to $\paramRV$, that is
	\begin{align*}
	  p(\obsIdx{N}\given\param) = &\prod_{n=1}^Np(\obsScalar{n}\given\param)\,, \\
	  p(\obsIdx{N}\given\param,\Hyp_i) = &\prod_{n=1}^Np(\obsScalar{n}\given\param,\Hyp_i)\,.
	\end{align*}
  \item The hypothesis $\Hyp_i$, $i=0,1$, as well as the random parameter $\paramRV_i$ do not change during the observation period of \SeqDataRV.
  \item\label{ass:suffStat} A sufficient static $\stat{n}(\obs)$ in a state space $(\stateSpaceStat,\metricStat)$ exists such that
	\begin{align*}
	  p(\param\given\obs) = &p(\param\given\stat{n}(\obs))\,,\quad \forall n=1,\ldots,N\,.
	\end{align*}
	 The sufficient statistic has a transition kernel of the form
	\begin{align*}
	 \stat{n+1}(\obsIdx{n+1}) = \xi(\stat{n}(\obs),\obsScalar{n+1}) =: \xi_{\stat{n}}(\obsScalar{n+1})\,.
	\end{align*}
	and some initial statistic $\stat{0}$.
  \item The two hypotheses are separable, i.e.,
	\begin{align*}
	 \exists i \in\{0,1\}: \lim_{N\rightarrow\infty} p(\Hyp_i\given\obsIdx{N}) = 1\,.
	\end{align*}
  \item \label{ass:finiteMoments}The second order moment of the random parameter $\paramRV$ exists and is finite, i.e.,
	\begin{align*}
	 \E[\paramRV^2] < &\infty\,.\\
	\end{align*}
	Note that this implies, that the conditional second order moment $\E[\paramRV^2\given A]$ exist and is finite for all events $A$ with non-zero probability.

\end{enumerate}
For the ease of notation, the dependency of functions, like estimators, on the observations is not mentioned explicitly, but should be clear from the context. Moreover, the time index $n$ for a sequence of observations is dropped in the case the number of variables is not important for the statement.
In case integrals are taken over the entire domain, the integration domain is left out for a shorter notation, i.e.,
\begin{align*}
 \E\bigl[X\bigr] = \int_{\stateSpaceObs} xp(x)\dd x = \int x p(x)\dd x\,.
\end{align*}

\subsection{Formal Problem Definition}
Most statistical inference tasks, like detection and estimation, can be treated as decision making tasks, see, e.g., \citet{berger1985statistical}. This decision theoretic framework is used to tackle the general joint detection and estimation problem. The relevant concepts of this framework are introduced by means of the pure detection case. This is followed up by the pure estimation case which then leads to a formulation of the joint detection estimation problem. This formulation is then transferred to the sequential case to obtain the final problem formulation.

\subsubsection*{Detection}
In hypothesis testing, we introduce a loss function $L(a,h)$, which assigns a cost to each action $a$ if the state $h$ is the true state. In the case of a detection problem the action $a$ is the decision in favor of one hypothesis and the state $h$ is the true hypothesis. The loss function considered here is the generalized $0/1$ loss function:
   \begin{align}
    L^\text{D}(a,h) = & \left\{
		     \begin{array}{ll}
                     C_h & \text{if } a \neq h \\
                     0  & \text{if } a = h	      
		     \end{array}
		     \right.
   \end{align}
To obtain the policy, i.e., a mapping from any possible sequence of observations to a particular action, the \emph{posterior expected loss} has to be introduced. In a detection context, the policy is also called the decision rule and is denoted by $\dec$ from now on. The posterior expected loss is the expected value of the loss function conditioned on the observations, i.e.,
\begin{align*}
 R^\text{D}(\dec) = \E_\Hyp[ L^\text{D}(\dec,\Hyp) \given \obsIdx{}]\,
\end{align*}
where $\E_\Hyp[\cdot]$ denotes the expected value with respect to the random variable $\Hyp$. For a given sequence of observations $\obsIdx{}$, the optimal decision rule $\decOpt$ is the one which minimizes the expected posterior risk.

\subsubsection*{Estimation}
Similar to detection, we introduce a loss function $L(a,\param)$, which assigns a cost to an action $a$, i.e., an estimate, when the parameter $\param$ is true.
For estimation problems, there exist a variety of different loss functions, e.g., the \textit{squared-error loss} or the \textit{absolute-error loss}. In this work, only the \textit{squared-error loss} is used, which is defined as
\begin{align}
 L^\text{E}(a,\param) = (a-\param)^2\,.
\end{align}
As in detection, the posterior expected loss is considered to find the policy, also called estimator $\hat\param$. The posterior expected loss is given by
\begin{align*}
 R^\text{E}(\hat\param) = \E_\paramRV[L^\text{E}(\hat\param,\paramRV)\given\obsIdx{}]\,.
\end{align*}
The optimal estimator, which minimizes the expected posterior loss when using the squared-error loss, is the posterior mean \citep{levy2008principles}, i.e.,
\begin{align}
 \est{}^\star = \E\left[\paramRV\given \obsIdx{}\right]\,. 
\end{align}
The corresponding posterior expected risk is then
\begin{align}
 R^\text{E}(\est{}^\star) = \Var\left[\paramRV\given \obsIdx{}\right]\,.
\end{align}

\subsubsection*{Joint Detection and Estimation}
As opposed to the pure detection or pure estimation case, in which a single action and a single state is considered, joint detection and estimation is more complex. The state of nature of this problem is the hypothesis $\Hyp_i$ as well as the random parameter $\paramRV$. The action is a tuple consisting of a decision, corresponding to the detection part, and two estimates depending on which hypothesis was chosen. The decision is denoted by $d$ and the two estimates are denoted by $e_0$ and $e_1$ for $\Hyp_0$ and $\Hyp_1$, respectively. The proposed loss function is then given by:

\begin{align} \label{eq:lossFctJoint}
    L^\text{J}(d, e_0, e_1, h, \param) = \left\{
    \begin{array}{ll}
      C_h  & \text{for } d\neq h \\
      C_{h+2} (\param-e_{h})^2 & \text{for } d = h \\
    \end{array}    
    \right.
\end{align}
Literally speaking, this loss functions means that a fixed cost is assigned for a wrong decision, whereas the weighted squared-error of the estimate is assigned in the case of a correct decision.
Contrary to other authors, e.g., \citet{Yilmaz2016Sequential}, we do not incorporate any detection error in the case of a wrong decision.

To obtain the policy, which consists of a decision rule and two estimators, we define the posterior expected loss as:
\begin{align}\label{eq:RjointFSS}
 R^\text{J}(\dec,\est{0},\est{1}) & = \E_{\paramRV,\Hyp}\biggl[L^\mathrm{J}(\dec,\est{0},\est{1},\Hyp,\paramRV)\given\obsIdx{}\biggr] \\
		    & =\sum_{i=0}^1p(\Hyp_i\given\obsIdx{}) \int L^\mathrm{J}(\dec,\est{0},\est{1},\Hyp_i,\param)p(\param\given\Hyp_i,\obsIdx{}) \dd\param
\end{align}
Again, the optimal policy is the policy which minimizes \cref{eq:RjointFSS} for any possible sequence of observations.

To find an optimal policy for the sequential joint detection and estimation problem, the previous results have to be included in an optimal stopping framework. Optimal stopping is important when designing strictly optimal sequential schemes---see for example \citet{novikov2009optimal} for sequential detection and \citet{ghosh2011sequential} for sequential estimation. The aim is to find a \emph{stopping rule} which maps the observation space onto a decision whether to stop or to continue sampling. Mathematically, a non-randomized stopping rule $\stopR=\left(\stopR_n\right)_{n\geq0}$ is defined by
\begin{align*}
 \stopR_n: \sampleSpaceObs[n] \rightarrow\{0,1\}, \quad n\in\mathbb{N}_0\;.
\end{align*}
A randomized stopping rule would instead map the observation space to a probability whether to stop or to continue sampling. Only non-randomized stopping rules are considered in this work. The reason is stated in \cref{lemma:boundaryNullSet}.

The stopping time or run-length $\tau$ of a sequential scheme is defined as
\begin{align}
 \tau = \inf\left\{n: \stopR_n=1\right\}\,.
\end{align}
When designing sequential schemes, the stopping rule should trade-off the expected run-length and the accuracy of the sequential scheme, e.g., the error probabilities in sequential detection.

For the sake of a more compact notation, the auxiliary variable  
\begin{align*}
 \stopAt := \stopR_n\prod_{i=0}^{n-1}(1-\stopR_i) = \ind{\tau=n}
\end{align*}
is introduced, where $\ind{\mathcal{A}}$ denotes the indicator function of event $\mathcal{A}$.

There exist two types of optimal stopping problems: the \emph{infinite horizon} problem and the \emph{finite horizon} problem, also referred as truncated problem. In the truncated case, the sequential procedure is forced to stop at a given time $N$, i.e., $\stopR_N=1$. In this work, only finite horizon problems are considered.
An overview of optimal stopping techniques and its applications can be found in e.g. \citet{peskir2006optimal}.

In the sequential case, where the number of used samples is not known beforehand, the overall posterior expected risk becomes
\begin{align}\label{eq:errCostStop}
    \sum_{n=0}^N \stopAt R^\text{J}_n(\dec_n,\est{0,n}, \est{1,n})\,,
\end{align}
where $R^\text{J}_n$ is similar to the one defined in \cref{eq:RjointFSS}, except that the data, as well as the estimators and decision rule now depend on the time index $n$.

Having introduced the concept of optimal stopping and the basics of Bayesian decision theory, the problem of sequential joint detection and estimation can be formulated in a formal manner. The tuple of the stopping rule, the decision rule and the estimators under both hypotheses is referred to policy in the following sections and is defined by
\begin{align}\label{eq:policyDef}
 \policy = \policyFull\,.
\end{align}
Moreover, the set of feasible policies is defined by
\begin{align}
 \policySet: \policyMapping\,.
\end{align}
To find an optimal policy, we define the objective function as the sum of the expected run-length and the expected value of the costs stated in \cref{eq:errCostStop}, i.e.,
\begin{align}
 J(\policy) = & \E[\tau] +  \E\biggl[ \sum_{n=0}^N \stopAt R^\text{J}_n(\dec_n,\est{0,n}, \est{1,n}) \biggr]\,,\\
 = & \E\left[\sum_{n=0}^N \stopAt \left( n + R_n^\text{J} (\dec_n,\est{0,n}, \est{1,n}) \right)\right]\,.
\end{align}
This objective function combines the costs obtained from the Bayesian decision problem with the one of the optimal stopping problem and is used to find the optimal policy. This is stated in the following problem.

\begin{problem} \label{prb:genProblem} For given and finite $C_i\geq0$, $i\in\{0,1,2,3\}$, and fixed $N\geq1$, the optimal policy $\policyOptC$ with respect to the costs $C_i$ is found by solving
\begin{align}
\min_{\policy\in\policySet}\;\E\left[\sum_{n=0}^N \stopAt \left( n + R_n^\text{J} (\dec_n,\est{0,n}, \est{1,n}) \right)\right]\,.
\end{align}
\end{problem}
This problem formulation is used below obtain an optimal policy. The choice of $C_i$ and the influence of $N$ is discussed in \cref{sec:optCostCoeff,sec:discussion}, respectively.

\subsection{Performance Measures for Sequential Joint Detection and Estimation}\label{subsec:performanceMeasures}
Before a solution of \cref{prb:genProblem} is presented, the question how to analyze the performance of a joint detection and estimation procedure is discussed.
One common quality measure for sequential schemes is the expected run-length \citep{ghosh2011sequential,wald1947sequential}, which usually corresponds to the expected number of samples used by a sequential procedure. Similarly to a pure sequential detection problem, the detection performance can be quantified by the type I and type II error probabilities.
For the estimation part, the same quality measure as in the combined loss function, i.e., the \ac{MSE}, is used. The error measures can be written as
\begin{align}
  \errorDet{n}{i}(\stat{n}) &= \E[\ind{\dec_\tau=1-i}\given \stat{n}, \Hyp_i, \tau\geq n]\,, \quad \, i\in\{0,1\}\,,\quad 0\leq n \leq N\,,\label{eq:alphaNonRec}\\
  \errorEst{n}{i}(\stat{n}) &= \E[\ind{\dec_\tau=i}(\est{i,\tau} - \theta)^2\given \stat{n},\Hyp_i, \tau\geq n]\,, \quad \, i\in\{0,1\}\,,\quad 0\leq n \leq N\,. \label{eq:betaNonRec}
\end{align}
The quantity $\errorDet{n}{i}(\stat{n})$ denotes the probability that the sequential scheme makes an erroneous decision in favor of $\Hyp_{1-i}$ given that the hypothesis $\Hyp_i$ is true and the test is in state \stat{n} at time instant $n$. Analogously, $\errorEst{n}{i}(\stat{n})$ is the \ac{MSE} of the sequential scheme under the hypothesis $\Hyp_i$ given that a correct decision is made and given the statistic \stat{n} at time instant $n$.
The quantities defined in \cref{eq:alphaNonRec,eq:betaNonRec} can be defined in the following recursive manner, also known as Chapman-Kolmogorov backward equations \citep{EoM1990}, 
\begin{align}
  \begin{split}
 \label{eq:alphaRec}
  \errorDet{n}{i}(\stat{n}) &= \ind{\dec_n=1-i}\stopR_n  + (1-\stopR_n)\E[ \errorDet{n+1}{i}(\stat{n+1}) \given \stat{n}, \Hyp_i]\,, \quad 0\leq n < N\,,\\
  \errorDet{N}{i}(\stat{N}) &= \ind{\dec_N=1-i}\stopR_N\,,
  \end{split}\\
  \begin{split}
 \label{eq:betaRec}
    \errorEst{n}{i}(\stat{n}) &=  \E[(\est{i,n}-\theta)^2\given \stat{n}, \Hyp_i ]\ind{\dec_n=i}\stopR_n  +  (1-\stopR_n)\E[ \errorEst{n+1}{i}(\stat{n+1})\given \stat{n},\Hyp_i]\,, \quad 0\leq n < N\,,\\
    \errorEst{N}{i}(\stat{N}) &=  \E[(\est{i,N}-\theta)^2\given \stat{N}, \Hyp_i ]\ind{\dec_N=i}\stopR_N\,.
    \end{split}
\end{align}
The importance of the recursive definition becomes clear later on.

After the definitions of the performance measures were given, the optimal sequential joint detection and estimation scheme is introduced in the next sections.
\cref{prb:genProblem} is first reduced to an optimal stopping problem which is shown in the next section.
The properties of the resulting cost function, which are essential for the proof of optimality, are discussed in \cref{sec:propCostFct}.
Finally, an approach for selecting the coefficients of the cost function such that the optimal procedure is guaranteed to meet pre-specified performance constraints is presented in \cref{sec:optCostCoeff}.
 
\section{Reduction to an Optimal Stopping Problem}
To obtain an optimal solution of \cref{prb:genProblem}, we first need to solve it with respect to the decision rule and the estimators so as to end up with an optimal stopping problem.
The first term of the objective function in \cref{prb:genProblem}, i.e., the expected run-length, can be written as
\begin{align} \label{eq:jointStoppingTime}
 \E[\tau] & = \sum_{n=0}^N \sum_{i=0}^{1}\int \int n \stopAt p(\obs,\theta,\Hyp_i) \dd\theta \dd\obs \nonumber\\
	    & = \sum_{n=0}^N \int n \stopAt p(\obs) \dd\obs\;.
\end{align}
In a similar way, the second part of the objective function can be expressed as
\begin{align} \label{eq:jointRiskExpl}
\E\biggl[\sum_{n=0}^N \stopAt R^\text{J}_n\biggr] = & \sum_{n=0}^N \int \sum_{i\in\{0,1\}} \stopAt \ind{\dec_n=i}\auxVarCost_{i,n} p(\obs)\dd\obs\,,
\end{align}
with the short hand notation
\begin{align} 
	\auxVarCost_{i,n} = & \int C_{1-i}p(\Hyp_{1-i}\given\obs)p(\theta\given\Hyp_{1-i},\obs) + C_{2+i}p(\Hyp_i\given\obs)(\est{i,n}-\param)^2 p(\theta\given\Hyp_i,\obs)\dd\theta  \\
      \label{eq:dTildeLast}
      = & C_{1-i}p(\Hyp_{1-i}\given\obs) +  C_{2+i}p(\Hyp_i\given\obs) \int (\est{i,n}-\param)^2 p(\theta\given\Hyp_i,\obs)\dd\theta \,.
\end{align}
The auxiliary variable $\auxVarCost_{i,n}$ can be interpreted as the combined detection and estimation cost if the test stops at time $n$ and decides in favor of $\Hyp_i$. 
According to Assumption \ref{ass:suffStat}, the conditioning on the data itself in \cref{eq:dTildeLast} can be replaced by conditioning on the sufficient statistic $\stat{n}$.

The overall objective, which should be minimized by the optimal policy, is then a weighted sum of the expected run-length, the error probabilities and the \ac{MSE}, i.e.,
\begin{align}
 \label{eq:objWeightedSum}
 \begin{split}
 \E[\tau] & + C_0p(\Hyp_0)p(\dec_\tau=1\given\Hyp_0) + C_1p(\Hyp_1)p(\dec_\tau=0\given\Hyp_1) \\
 & + C_2p(\Hyp_0)\E[\ind{\dec_\tau=0} (\est{0,\tau} - \theta)^2\given\Hyp_0] + C_3p(\Hyp_1)\E[\ind{\dec_\tau=1} (\est{1,\tau} - \theta)^2\given\Hyp_1]\,.
 \end{split}
\end{align}

As one can see, \cref{eq:jointStoppingTime} depends on the stopping rule only, whereas \cref{eq:jointRiskExpl} depends on the overall policy.
Hence, \cref{eq:jointRiskExpl} is first minimized \ac{w.r.t.} the detection rule and the estimators $\est{0,n}$ and $\est{1,n}$, $n=1,\ldots,N$.

Following the usual line of arguments for optimal sequential tests---see, e.g., \citep[Theorem 2.2]{novikov2009optimal}---it holds that
\begin{align}\label{eq:ineqDetRule}
  & \sum_{n=0}^N \int \sum_{i\in\{0,1\}} \stopAt \ind{\dec_n=i}\auxVarCost_{i,n} p(\obs) \dd\obs \geq 
   \sum_{n=0}^N \int \stopAt \min\left\{\auxVarCost_{0,n},\auxVarCost_{1,n}\right\} p(\obs) \dd\obs\,,   
\end{align}
where equality holds if and only if
\begin{align}\label{eq:optDecRule}
 \ind{\auxVarCost_{0,n} > \auxVarCost_{1,n}} \leq \dec^\star_n \leq & \ind{\auxVarCost_{0,n} \geq \auxVarCost_{1,n}}\,.
\end{align}
The short hand notation
\begin{align} \label{eq:shortHandSetIneq}
    \{\auxVarCost_{0,n} > \auxVarCost_{1,n}\} := \{\obs\in\stateSpaceObs[n]:\auxVarCost_{0,n}(\obs) > \auxVarCost_{1,n}(\obs)\}\,.                                                                           
\end{align}
is used throughout the paper.

\begin{remark}\label{rmk:nonRndDecR}
The choice whether to decide systematically for one specific hypothesis if both are equally costly does not affect the solution of the optimal stopping problem. Only when one tries to chose the coefficients $C_i$, $i\in\{0,1,2,3\}$, such that conditions on the error probabilities are fulfilled, this case plays a small but noteworthy role. Generally, one can say that the set in the state space for which both hypotheses are equally costly, is always part of the complement of the stopping region for $n<N$. For $n=N$, the test has to decide between the two hypotheses even if they are equal costly but since we assume a sufficiently large $N$, i.e., $P(\tau=N)\approx0$, also a systematic decision in favor of one hypothesis would hardly affect the performance of the test. See \cref{sec:numResults} for details.
\end{remark}

Before the optimal stopping problem can be solved, the right hand side of \cref{eq:ineqDetRule} has to be optimized \ac{w.r.t.} to the estimators $\est{i,n}$, $i=0,1$, $n=1,\ldots,N$. 

According to Gosh et al., the optimal sequential estimator is independent of the stopping rule \citep[Theorem 5.2.1]{ghosh2011sequential}. As a consequence of this and due to the fact that each $\auxVarCost_{i,n}$ only depends on one estimator, it is sufficient to minimize both $\auxVarCost_{i,n}$, $i=0,1$,  \ac{w.r.t.} the corresponding estimators separately. The optimal estimators, i.e., the \ac{MMSE} estimators,  are then given by
\begin{align}\label{eq:optSeqEst}
\est{i,n}^\star = \E\bigl[\paramRV\given\obs,\Hyp_i\bigr]\,.
\end{align}
Using the optimal estimators and the optimal decision rules, the right hand side of \cref{eq:ineqDetRule} can be expressed as
\begin{align*}
 \sum_{n=0}^N \int\stopAt \min\left\{\auxVarCostOpt{0,n},\auxVarCostOpt{1,n}\right\}p(\obs)\dd\obs   = & \sum_{n=0}^N \E\left[\stopAt \min\left\{\auxVarCostOpt{0,n},\auxVarCostOpt{1,n}\right\}\right] \,,
\end{align*}
where the combined detection/estimation cost \auxVarCostOpt{i,n} for deciding in favor of $\Hyp_i$ at time $n$ is given by
\begin{align}
      \auxVarCostOpt{i,n} & = C_{1-i}p(\Hyp_{1-i}\given\obs) + C_{2+i}p(\Hyp_i\given \obs)\Var\bigl[\paramRV\given\obs,\Hyp_i\bigr] \nonumber\\ 
	& = C_{1-i}p(\Hyp_{1-i}\given\stat{n}) \, + C_{2+i}p(\Hyp_i\given \stat{n})\,\Var\bigl[\paramRV\given\stat{n},\Hyp_i\bigr]\label{eq:defAuxCostOpt}\,. 
\end{align}
The combined detection/estimation  cost $\auxVarCostOpt{i,n}$ for deciding in favor of $\Hyp_i$ at time $n$ is a linear combination of the detection and the estimation costs.
More precisely, the first part is the posterior probability that $\Hyp_{1-i}$ is true, whereas the second part is the posterior variance of the random parameter under $\Hyp_i$ weighted by the posterior probability that $\Hyp_i$ is true.
From now on, we replace the data in all conditional probabilities and expectations by the sufficient statistic $\stat{n}$. Opposed to the data, the sufficient statistic is of fixed dimension according to Assumption \ref{ass:suffStat}. The short hand notation stated in \cref{eq:shortHandSetIneq} is then defined on $\stateSpaceStat$ instead on $\stateSpaceObs[n]$

Using the previous results, \cref{prb:genProblem} reduces to the following optimal stopping problem
\begin{align}\label{eq:optStoppingFormulation}
 \min_\stopR\;\sum_{n=0}^N\E\left[ \stopAt\left(n+\costS(\stat{n})\right)\right]\,,
\end{align}
with the instantaneous cost
\begin{align}\label{eq:gCostFct}
 \costS(\stat{n}) = \min\left\{\auxVarCostOpt{0,n},\auxVarCostOpt{1,n}\right\}\,.
\end{align}
The solution of the optimal stopping problem is stated in the next theorem.
\begin{theorem}\label{theo:solutionOptimalStopping}
 The solution of \cref{eq:optStoppingFormulation} can be characterized by the system of functional equations
 \begin{align}
    \begin{split}
      \rho_n(\stat{n}) = & \min\biggl\{ \costS(\stat{n}), \costC(\stat{n}) \biggr\}\,, \quad n < N\,, \\
      \rho_N(\stat{N}) = & \costS(\stat{N})\,,
    \end{split}
 \end{align}
 with the short-hand notation
 \begin{align}\label{eq:updateComplicated}
  \costC(\stat{n}) = & 1+ \int \rho_{n+1}\left(\xi(\stat{n},\xnew)\right) p(\xnew\given\stat{n}) \dd \xnew 
 \end{align}
and $\costS(\stat{n})$ as defined in \cref{eq:gCostFct}. 
\end{theorem}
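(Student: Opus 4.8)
The plan is to treat \cref{eq:optStoppingFormulation} as a finite-horizon \emph{Markov} optimal stopping problem and to establish the claimed functional equations by backward induction on $n$. The key structural observation is that, by Assumption~\ref{ass:suffStat}, the running cost $\costS(\stat{n})$ depends on the observations only through the sufficient statistic $\stat{n}$, while the run-length contributes the deterministic term $n$; moreover $(\stat{n})_{0\leq n\leq N}$ is a Markov chain with transition kernel $\xi$. Hence the objective $\E[\tau + \costS(\stat{\tau})]$ is a functional of the state process alone, and it suffices to optimize over stopping rules adapted to $(\stat{n})$, with an optimal cost-to-go depending only on the current state. Assumption~\ref{ass:finiteMoments} guarantees that the variance terms entering $\auxVarCostOpt{i,n}$, and therefore $\costS$, are finite, so every expectation below is well defined.

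First I would define the value function $\rho_n(\stat{n})$ as the minimal expected remaining cost given that the procedure has not stopped before time $n$ and occupies state $\stat{n}$,
\begin{align*}
 \rho_n(\stat{n}) = \min_{\stopR}\, \E\bigl[(\tau - n) + \costS(\stat{\tau}) \bigm| \stat{n},\, \tau \geq n\bigr]\,,
\end{align*}
so that the optimal value of \cref{eq:optStoppingFormulation} is $\rho_0(\stat{0})$ and the run-length accumulates one unit of cost per additional sample. Next I would verify the terminal condition: at $n=N$ truncation forces $\stopR_N=1$, so stopping is the only admissible action and $\rho_N(\stat{N}) = \costS(\stat{N})$.

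For the induction step I fix $n<N$ and assume the characterization holds at $n+1$. Conditioned on $\stat{n}$, an admissible rule either stops immediately, incurring $\costS(\stat{n})$, or continues, paying one unit for the extra sample and then facing an optimal continuation problem started from $\stat{n+1}$. By the Markov property the expected continuation cost equals $\E[\rho_{n+1}(\stat{n+1})\given\stat{n}] = \int \rho_{n+1}(\xi(\stat{n},\xnew))\,p(\xnew\given\stat{n})\,\dd\xnew = \costC(\stat{n})$. Taking the pointwise minimum over the two actions yields $\rho_n(\stat{n}) = \min\{\costS(\stat{n}),\costC(\stat{n})\}$, which is exactly the asserted recursion; the minimizing action defines the optimal stopping rule, with ties resolved in favor of continuation as discussed in \cref{rmk:nonRndDecR}.

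The hard part will be rigorously justifying that restricting attention to rules adapted to the sufficient statistic is without loss of optimality and that the continuation value is a function of $\stat{n}$ alone; this is precisely where sufficiency (Assumption~\ref{ass:suffStat}) together with the transition kernel $\xi$ does the essential work, allowing me to replace conditioning on the full history $\obs$ by conditioning on $\stat{n}$ and then apply the law of total expectation. The remaining points---measurability of each $\rho_n$, existence of a minimizer at every stage (which also explains why non-randomized rules suffice), and the telescoping of the run-length into the additive constant $1$ in $\costC$---are routine consequences of the finite horizon.
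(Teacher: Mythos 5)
Your proposal is correct and follows essentially the same route as the paper: a backward induction establishing the finite-horizon Bellman recursion, with the terminal condition $\rho_N = \costS(\stat{N})$ forced by truncation and the continuation value reduced to a function of $\stat{n}$ via the sufficient statistic and its transition kernel. The only cosmetic difference is that you define $\rho_n$ directly as the cost-to-go from time $n$, whereas the paper works with the total cost $V_n$ and verifies $V_n = n + \rho_n(\stat{n})$ by induction --- a trivial reparametrization of the same argument.
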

A proof of \cref{theo:solutionOptimalStopping} is given in \cref{app:proofSolutionOptimalStopping}.
With a change in measure, \cref{eq:updateComplicated} can be written as
\begin{align}
 \begin{split}
  \costC(\stat{n}) = & 1+ \int \rho_{n+1}\dd\updateProbMeasure\,,
 \end{split}
\end{align}
where
\begin{align}\label{eq:updateProbMeasure}
 \updateProbMeasure(B)  := P\left(\left\{\xnew\in \stateSpaceObs: \xi(\stat{n},\xnew)\in B \right\} \given \stat{n} \right)\,,
\end{align}
for all elements $B$ of the Borel $\sigma$-algebra on $E_\statWOa$. The probability measure $\updateProbMeasure^i$, for $i=0,1$, uses the probability measure $P(\cdot\given \stat{n}, \Hyp_i)$ instead of $P(\cdot\given\stat{n})$.
\begin{corollary}\label{corr:H-integrable}
The function $\rho_{n+1}(\stat{n+1})$ is \updateProbMeasure-integrable for all $\stat{n}$ and all $0\leq n < N$. 
\end{corollary}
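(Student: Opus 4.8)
The plan is to dominate $\rho_{n+1}$ by the instantaneous cost $\costS$ and to show that $\costS$ is itself $\updateProbMeasure$-integrable; integrability of $\rho_{n+1}$ then follows by domination. Before anything else I would record that every function $\rho_m$ produced by the recursion of \cref{theo:solutionOptimalStopping} is nonnegative and bounded above by $\costS$. Nonnegativity follows by backward induction: $\rho_N=\costS\ge 0$ since $\costS(\stat{n})=\min\{\auxVarCostOpt{0,n},\auxVarCostOpt{1,n}\}$ is a minimum of nonnegative quantities (each $\auxVarCostOpt{i,n}$ in \cref{eq:defAuxCostOpt} being a nonnegative combination of posterior probabilities and a posterior variance), and if $\rho_{m+1}\ge0$ then the continuation cost satisfies $\costC(\stat{n})\ge 1$, whence $\rho_m=\min\{\costS,\costC\}\ge0$. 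The upper bound is read off directly from the defining minimum: $\rho_m \le \costS$ for $m<N$ and $\rho_N=\costS$, so that $0 \le \rho_{n+1}(\stat{n+1}) \le \costS(\stat{n+1})$ pointwise for all $0\le n<N$. Thus it suffices to prove $\int \costS\,\dd\updateProbMeasure < \infty$.

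Next I would bound $\costS$ itself. Using $\costS \le \auxVarCostOpt{0,n+1}$, the definition in \cref{eq:defAuxCostOpt}, the trivial bound $p(\Hyp_1\given\stat{n+1})\le 1$, and $\Var[\paramRV\given\stat{n+1},\Hyp_0]\le\E[\paramRV^2\given\stat{n+1},\Hyp_0]$, I obtain
\[ \costS(\stat{n+1}) \le C_1 + C_2\, p(\Hyp_0\given\stat{n+1})\,\E[\paramRV^2\given\stat{n+1},\Hyp_0] = C_1 + C_2\,\E[\ind{\Hyp=\Hyp_0}\paramRV^2\given\stat{n+1}], \]
where the last equality is just the law of total expectation over the two-valued variable $\Hyp$. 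Recalling from \cref{eq:updateProbMeasure} that $\int h\,\dd\updateProbMeasure = \E[h(\stat{n+1})\given\stat{n}]$, integrating the bound and applying the tower property across the nested conditioning $\stat{n}\to\stat{n+1}$ gives
\[ \int \costS\,\dd\updateProbMeasure \le C_1 + C_2\,\E\bigl[\E[\ind{\Hyp=\Hyp_0}\paramRV^2\given\stat{n+1}]\bgiven\stat{n}\bigr] = C_1 + C_2\,\E[\ind{\Hyp=\Hyp_0}\paramRV^2\given\stat{n}] \le C_1 + C_2\,\E[\paramRV^2\given\stat{n}]. \]
The right-hand side is finite by Assumption \ref{ass:finiteMoments} together with the remark following it, which guarantees that the conditional second moment is finite, and this closes the argument.

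I expect the only real obstacle to be the measure-theoretic bookkeeping in the last display, not any delicate estimate. Two points need care. First, the identification of $\int\cdot\,\dd\updateProbMeasure$ with $\E[\cdot\given\stat{n}]$ and the use of the tower property are legitimate only after one argues---via sufficiency of the statistic (Assumption \ref{ass:suffStat})---that conditioning on $\stat{n}$ and $\stat{n+1}$ coincides with conditioning on the nested observation $\sigma$-algebras $\mathcal{F}_n\subseteq\mathcal{F}_{n+1}$, so that the statistic posteriors form a consistent family and the iterated expectation collapses correctly. Second, the statement asks for integrability \emph{for all} $\stat{n}$, whereas $\E[\paramRV^2\given\stat{n}]$ is a priori finite only almost surely; I would resolve this exactly as in the remark after Assumption \ref{ass:finiteMoments}, reading $\stat{n}$ as ranging over states of positive probability. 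Finally, I would note in passing that $\rho_{n+1}$ is measurable, being assembled from minima and integrals of the measurable maps $\auxVarCostOpt{i,m}$, so that $\int\rho_{n+1}\,\dd\updateProbMeasure$ is well defined in the first place.
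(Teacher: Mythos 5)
Your proposal is correct and follows essentially the same route as the paper's proof: dominate $\rho_{n+1}$ by $\costS$ and then by $\auxVarCostOpt{i,n+1}$, push the integral through via the tower property, and invoke Assumption \ref{ass:finiteMoments}. The only difference is cosmetic---you bound the posterior variance by the conditional second moment to get a finite upper bound, whereas the paper evaluates $\int\auxVarCostOpt{i,n+1}\dd\updateProbMeasure$ exactly as $C_{1-i}p(\Hyp_{1-i}\given\stat{n}) + C_{2+i}p(\Hyp_i\given\stat{n})\Var\bigl[\paramRV\given\stat{n},\Hyp_i\bigr]$.
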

A proof of \cref{corr:H-integrable} is given in \cref{app:proof_corr_H-integrable}.
The results on the optimal policy are summarized in the following corollary.
\begin{corollary}\label{cor:optPolicyC}
 The optimal policy $\policyOptC$ which solves \cref{prb:genProblem} is given by
 \begin{align*}
   \policyOptC = \{\stopR^\star_n,\dec^\star_n,\est{0,n}^\star,\est{1,n}^\star\}_{0\leq n \leq N}
 \end{align*}
 with $\dec^\star_n$ defined in \cref{eq:optDecRule} and $\est{i,n}^\star$ defined in \cref{eq:optSeqEst}. The optimal stopping rule $\stopR^\star_n$ is given by
 \begin{align}
  \stopR^\star_{n} = &\ind{\rho_n(\stat{n}) = \costS(\stat{n})}\,.
 \end{align}
\end{corollary}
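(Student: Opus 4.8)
The plan is to assemble the corollary from the three optimizations already carried out in this section and to argue that they may be performed in a nested fashion that attains the joint optimum. The starting point is the decomposition of the objective into the expected run-length \eqref{eq:jointStoppingTime}, which depends on the policy only through the stopping rule $\stopR$, and the expected risk \eqref{eq:jointRiskExpl}, which depends on $\stopR$, the decision rule, and the estimators. First I would observe that, for fixed $\stopR$, the integrand of \eqref{eq:jointRiskExpl} can be minimized pointwise: the inner minimization over the estimators is solved separately for each $i$ by the MMSE estimators $\est{i,n}^\star$ in \eqref{eq:optSeqEst}, and the ensuing minimization over the decision rule is solved by $\dec^\star_n$ in \eqref{eq:optDecRule}, which selects the smaller of $\auxVarCostOpt{0,n}$ and $\auxVarCostOpt{1,n}$ and produces the lower bound \eqref{eq:ineqDetRule}. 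Since these minimizers depend on the data only through $\stat{n}$ and are independent of $\stopR$, substituting them is legitimate and reduces the objective to the pure optimal stopping problem \eqref{eq:optStoppingFormulation} with instantaneous cost $\costS$ of \eqref{eq:gCostFct}.

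Next I would invoke \cref{theo:solutionOptimalStopping} to identify $\rho_n(\stat{n})$ as the optimal cost-to-go at time $n$ in state $\stat{n}$. The Bellman equation $\rho_n(\stat{n}) = \min\{\costS(\stat{n}),\costC(\stat{n})\}$ compares the cost of stopping immediately, $\costS$, against the expected cost of drawing one more sample and continuing optimally, $\costC$. A standard backward-induction argument then shows that stopping at time $n$ is optimal precisely on the set where immediate stopping is no more costly than continuation, i.e. where $\costS(\stat{n}) \leq \costC(\stat{n})$; on this set $\rho_n = \costS$, whereas off it $\rho_n = \costC < \costS$. Hence the optimal stopping rule is $\stopR^\star_n = \ind{\rho_n(\stat{n}) = \costS(\stat{n})}$, exactly the claimed expression, and combining it with $\dec^\star_n$ and $\est{i,n}^\star$ yields $\policyOptC$.

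The main obstacle will be justifying the separation rigorously, that is, verifying that performing the inner minimizations over $\{\dec_n,\est{i,n}\}$ before the outer minimization over $\stopR$ does not forfeit any joint optimality. This rests on the fact that $\dec^\star_n$ and $\est{i,n}^\star$ are simultaneously optimal for every admissible $\stopR$: whenever $\stopAt = 1$ the pointwise minimizer is chosen, and whenever $\stopAt = 0$ the corresponding term vanishes and the choice is immaterial, so the lower bound \eqref{eq:ineqDetRule} is attained termwise. A secondary point concerns the non-uniqueness of $\dec^\star_n$ and measurability of the resulting regions: on the set where the two hypotheses are equally costly any selection within the range \eqref{eq:optDecRule} is optimal, and by \cref{rmk:nonRndDecR} this set lies in the continuation region for $n < N$, so the choice does not affect the stopping problem and attention may be restricted to non-randomized rules; the stopping tie set $\{\costS = \costC\}$ is handled by the convention $\stopR^\star_n = 1$, which is optimal there since stopping and continuing incur equal cost.
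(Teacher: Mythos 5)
Your proposal is correct and follows the same route as the paper: the paper's Section 3 performs exactly this nested minimization (decision rule via \eqref{eq:ineqDetRule}--\eqref{eq:optDecRule}, then the MMSE estimators, justified by the stopping-rule-independence result of Ghosh et al.), and its stated proof of the corollary is the single remark that the stopping rule follows directly from the definition of $\rho_n$ with references to standard optimal stopping arguments. Your additional care about termwise attainment of the lower bound for every admissible $\stopR$ and about the tie sets merely makes explicit what the paper leaves to \cref{rmk:nonRndDecR} and the cited literature.
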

The optimal stopping rule $\stopR^\star_n$ follows directly from the definition of $\rho_n$, see, e.g., \citep{novikov2009optimal,poor2009quickest}. 

The stopping region of the test, its boundary and its complement using the optimal policy from \cref{cor:optPolicyC} are defined as
\begin{align}\label{eq:defRegionsn}
 \begin{split}
  \stopRegion = & \left\{\stat{n}\in E_\statWOa: g(\stat{n}) < 1+ \int \rho_{n+1}\dd\updateProbMeasure\right\} \\
  \stopRegionBound = & \left\{\stat{n}\in E_\statWOa: g(\stat{n}) = 1+ \int \rho_{n+1}\dd\updateProbMeasure\right\} \\
  \stopRegionCompl = & \left\{\stat{n}\in E_\statWOa: g(\stat{n}) > 1+ \int \rho_{n+1}\dd\updateProbMeasure\right\} \\
 \end{split}
\end{align}
for $n<N$. Note that due to the truncation of the sequential scheme the stopping region and its complement for $n=N$ are defined as:
\begin{align}\label{eq:defRegionsN}
  \begin{split}
    \stopRegion[N] = & E_\statWOa\\
    \stopRegionCompl[N] = & \emptyset 
  \end{split}
\end{align}
 \section{Properties of the Cost Function}\label{sec:propCostFct}
In this section, some fundamental properties of the cost function are shown. These properties are later used to obtain a set of optimal coefficients of the cost function.

In order to simplify the derivations, it is useful to show that the boundary of the stopping region $\stopRegionBound$ is a P-null set.
To this end, the functions used in \cref{theo:solutionOptimalStopping}, the probability measure defined in \cref{eq:updateProbMeasure} and the sets defined in \cref{{eq:defRegionsn},{eq:defRegionsN}} are transferred to another domain. Instead of only depending on the sufficient statistic $\stat{n}$, these quantities are now defined in such a way that they depend on the posterior probabilities and the sufficient statistic. This trick results in more elegant proofs, since the linearity in the posterior probabilities can be exploited directly.

The posterior probability of hypothesis $\Hyp_i$ is denoted by $\postProb{i}$, $i=0,1$. 
Using this representation, the cost for deciding in favor of hypothesis $\Hyp_i$ can be written as
\begin{align}\label{eq:auxVarTile}
 \auxVarCostOptTilde{i,n} = C_{1-i}\postProb{1-i} + C_{2+i}\postProb{i}\Var[\paramRV\given\stat{n},\Hyp_i]\,.
\end{align}
The posterior probabilities are collected in the tuple $\postProbwo_{n}=(\postProb{0},\postProb{1})$, which is defined on $\left(\stateSpacePostProb, \metricPostProb\right)$.
The transition kernel relating $\postProbwo_n$ and $\postProbwo_{n+1}$ is given by $\postProbwo_{n+1}=\transkernelPostVar{\postProbwo_{n}}{x_{n+1}}{\stat{n}}$.

Now, the functions used in \cref{theo:solutionOptimalStopping} can be written as
\begin{align}
 \begin{split}
  \tilde \rho_n(\stat{n},\postProbwo_{n}) &= \min\bigl\{\tilde g(\stat{n},\postProbwo_{n})\,,\,\tilde d_n(\stat{n}, \postProbwo_{n}) \bigr\}, \quad n < N\\
  \tilde \rho_N(\stat{N},\postProbwo_{N} ) & = \tilde g(\stat{N}, \postProbwo_{N})\\
 \end{split}
\end{align}
with
\begin{align}
 \begin{split}
  \tilde g(\stat{n},\postProbwo_{n}) & = \min\bigl\{C_1\postProb{1} + C_2\postProb{0}\Var[\paramRV\given\stat{n},\Hyp_0]\,,\,C_0\postProb{0} + C_3\postProb{1}\Var[\paramRV\given\stat{n},\Hyp_1]\bigr\} \\
  \tilde d(\stat{n},\postProbwo_{n}) & = 1 + \int \tilde\rho_{n+1}(\xi_{\stat{n}}(x_{n+1}),\transkernelPostVar{\postProbwo_{n}}{x_{n+1}}{\stat{n}}) p(x_{n+1} \given \stat{n} ) \dd x_{n+1}\,.
 \end{split}
\end{align}
The pendant of the probability measure defined in \cref{eq:updateProbMeasure} is given by
\begin{align*}
 \updateProbMeasureTilde(B\times \tilde B) := P\left(\left\{\xnew\in \stateSpaceObs: \xi_{\stat{n}}(\xnew)\in B, \transkernelPostVar{\postProbwo_{n}}{\xnew}{\stat{n}} \in \tilde B\right\} \bgiven \stat{n} \right)\,,
\end{align*}
for all elements $B$ of the Borel $\sigma$-algebra on $\stateSpaceStat$ and all elements $\tilde B$ of the Borel $\sigma$-algebra on $\stateSpacePostProb$. The equivalent of the boundary of the stopping region defined in \cref{{eq:defRegionsn}} is given by
\begin{align}
 \begin{split}
   \stopRegionBoundTilde = & \left\{\left(\stat{n}, \postProbwo_{n} \right)\in E_\statWOa \times \stateSpacePostProb : \tilde g(\stat{n}, \postProbwo_{n}) = 1+ \int \tilde \rho_{n+1}\dd\updateProbMeasureTilde \right\}\,. \\
 \end{split}
\end{align}
\begin{lemma}\label{lem:gBounded}
Let $a=(a_0,a_1)$ and let $a\cdot\postProbwo$ denote the element-wise product. Then for all $a\in\nonNegSet^2$, all $\stat{n}\in\stateSpaceStat$ and all $\postProbwo\in\stateSpacePostProb$, it holds that
 \begin{align*}
  \min\{a_0,a_1, 1\} \tilde g(\stat{n},\postProbwo) \leq \tilde g(\stat{n},a\cdot\postProbwo) \leq \max\{a_0,a_1,1\} \tilde g(\stat{n},\postProbwo)\,, \quad 0\leq n \leq N.
 \end{align*}
\end{lemma}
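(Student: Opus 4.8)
The plan is to exploit the structural fact that, for each fixed $\stat{n}$, the map $\postProbwo\mapsto\tilde g(\stat{n},\postProbwo)$ is the pointwise minimum of two \emph{linear} forms in the posterior-probability tuple whose coefficients are all non-negative. Writing
\begin{align*}
 L_0(\postProbwo) = C_1\postProb{1} + C_2\postProb{0}\Var[\paramRV\given\stat{n},\Hyp_0]\,,\qquad
 L_1(\postProbwo) = C_0\postProb{0} + C_3\postProb{1}\Var[\paramRV\given\stat{n},\Hyp_1]\,,
\end{align*}
we have $\tilde g(\stat{n},\postProbwo)=\min\{L_0(\postProbwo),L_1(\postProbwo)\}$. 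The coefficients $C_i$ are non-negative by the hypotheses of \cref{prb:genProblem}, the conditional variances are non-negative and finite by Assumption \ref{ass:finiteMoments}, and the entries $\postProb{0},\postProb{1}$ are non-negative since they are posterior probabilities. This non-negativity is the only feature I need.

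First I would establish the claim at the level of a single linear form. For $L(\postProbwo)=c_0\postProb{0}+c_1\postProb{1}$ with $c_0,c_1\geq0$, scaling the argument gives $L(a\cdot\postProbwo)=c_0 a_0\postProb{0}+c_1 a_1\postProb{1}$. Bounding each summand termwise via $\min\{a_0,a_1\}\leq a_j\leq\max\{a_0,a_1\}$, using $c_j\postProb{j}\geq0$, and summing yields
\begin{align*}
 \min\{a_0,a_1\}\,L(\postProbwo)\ \leq\ L(a\cdot\postProbwo)\ \leq\ \max\{a_0,a_1\}\,L(\postProbwo)\,.
\end{align*}
Applying this to both $L_0$ and $L_1$ and then taking the outer minimum preserves both inequalities, since $\min$ is monotone and positively homogeneous, i.e. $\min\{\lambda x,\lambda y\}=\lambda\min\{x,y\}$ for $\lambda\geq0$. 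This gives
\begin{align*}
 \min\{a_0,a_1\}\,\tilde g(\stat{n},\postProbwo)\ \leq\ \tilde g(\stat{n},a\cdot\postProbwo)\ \leq\ \max\{a_0,a_1\}\,\tilde g(\stat{n},\postProbwo)\,.
\end{align*}

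Finally I would observe that these bounds are in fact sharper than the ones asserted: since $\min\{a_0,a_1,1\}\leq\min\{a_0,a_1\}$ and $\max\{a_0,a_1\}\leq\max\{a_0,a_1,1\}$, the stated inequalities follow a fortiori for every $0\leq n\leq N$. The additional ``$1$'' inside the $\min$ and $\max$ is superfluous for $\tilde g$ alone; I would retain it because it is precisely the factor that lets the estimate survive the recursion $\tilde d=1+\int\tilde\rho_{n+1}\,\dd\updateProbMeasureTilde$, whose additive constant $1$ is unaffected by the scaling $a$, when the bound is later propagated from $\tilde g$ to $\tilde\rho$.

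There is no genuine obstacle here; the argument is elementary. The only points demanding care are confirming the sign of every coefficient, so that the termwise bounds point in the correct direction, and pulling the common factor through the outer minimum, which rests on the positive homogeneity of $\min$.
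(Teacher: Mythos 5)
Your proof is correct, and at bottom it rests on the same two facts the paper's proof uses: every term appearing in $\tilde g$ is non-negative, and the minimum is positively homogeneous. The organization differs in a useful way, though. The paper splits into the cases $a_1\geq a_0$ and $a_0\geq a_1$ and factors a single scalar out of the minimum; you instead bound each of the two linear forms termwise, obtaining $\min\{a_0,a_1\}\,L(\postProbwo)\leq L(a\cdot\postProbwo)\leq\max\{a_0,a_1\}\,L(\postProbwo)$, and then push the common factor through the outer minimum by monotonicity. This removes the case split entirely, and it also sidesteps a direction slip in the paper's displayed chain: for the lower bound the paper factors out $a_1=\max\{a_0,a_1\}$, which leaves ratios $a_0/a_1\leq 1$ multiplying non-negative terms and therefore yields $\tilde g(\stat{n},a\cdot\postProbwo)\leq a_1\tilde g(\stat{n},\postProbwo)$ rather than the asserted $\geq$; the factorization that actually delivers the lower bound is by $\min\{a_0,a_1\}$, which is exactly what your termwise argument produces. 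Your closing remark is also on point: the extra $1$ inside $\min\{a_0,a_1,1\}$ and $\max\{a_0,a_1,1\}$ is vacuous for $\tilde g$ itself and is carried only so that the bound survives the additive constant $1$ in $\tilde d$ when the estimate is propagated to $\tilde\rho$ in \cref{lem:rhoBounded}.
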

\begin{proof}
 The proof is given only for the lower bound because the upper bound can be proven analogously. For $a_1\geq a_0$ it holds that
 \begin{align*}
  \tilde g(\stat{n},a\cdot\postProbwo) & = \min\bigl\{a_1C_1\postProb{1} + a_0C_2\postProb{0}\Var[\paramRV\given\stat{n},\Hyp_0]\,,\,a_0C_0\postProb{0} + a_1C_3\postProb{1}\Var[\paramRV\given\stat{n},\Hyp_1]\bigr\} \\
  & = a_1 \min\biggl\{C_1\postProb{1} + \frac{a_0}{a_1}C_2\postProb{0}\Var[\paramRV\given\stat{n},\Hyp_0]\,,\,\frac{a_0}{a_1}C_0\postProb{0} + C_3\postProb{1}\Var[\paramRV\given\stat{n},\Hyp_1]\biggr\} \\
  & \geq a_1 \tilde g(\stat{n},\postProbwo) \geq a_0\tilde  g(\stat{n},\postProbwo)\,.
 \end{align*}
For $a_0\geq a_1$ it further holds that
\begin{align*}
 \tilde g(\stat{n},a\cdot\postProbwo) & = \min\bigl\{a_1C_1\postProb{1} + a_0C_2\postProb{0}\Var[\paramRV\given\stat{n},\Hyp_0]\,,\,a_0C_0\postProb{0} + a_1C_3\postProb{1}\Var[\paramRV\given\stat{n},\Hyp_1]\bigr\} \\
  & = a_0 \min\biggl\{\frac{a_1}{a_0}C_1\postProb{1} + C_2\postProb{0}\Var[\paramRV\given\stat{n},\Hyp_0]\,,\,C_0\postProb{0} + \frac{a_1}{a_0}C_3\postProb{1}\Var[\paramRV\given\stat{n},\Hyp_1]\biggr\} \\
  & \geq a_0 \tilde g(\stat{n},\postProbwo) \geq a_1\tilde  g(\stat{n},\postProbwo)\,.
\end{align*}
This yields $\min\{a_0,a_1,1\}\tilde g(\stat{n},\postProbwo) \leq \tilde g(\stat{n},a\cdot\postProbwo)$ which is the lower bound stated in \cref{lem:gBounded}. 
\end{proof}

\begin{lemma}\label{lem:rhoBounded}
Let $a=(a_0,a_1)$ and let $a\cdot\postProbwo$ denote the element-wise product. Then for all $a\in\nonNegSet^2$, all $\stat{n}\in\stateSpaceStat$ and all $\postProbwo\in\stateSpacePostProb$, it holds that
 \begin{align*}
  \min\{a_0,a_1, 1\} \tilde \rho(\stat{n},\postProbwo) \leq \tilde \rho(\stat{n},a\cdot\postProbwo) \leq \max\{a_0,a_1,1\} \tilde\rho(\stat{n},\postProbwo)\, \quad 0\leq n \leq N.
 \end{align*}
\end{lemma}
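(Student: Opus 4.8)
The plan is to prove the claimed inequalities by backward induction on $n$, running from the terminal time $n=N$ down to $n=0$, exploiting the recursive definition of $\tilde\rho_n$ in terms of $\tilde g$ and $\tilde d$. The base case $n=N$ is immediate: since $\tilde\rho_N(\stat{N},\postProbwo)=\tilde g(\stat{N},\postProbwo)$, the desired bounds are precisely the statement of \cref{lem:gBounded} applied at time $N$. So the whole argument reduces to propagating the bounds through one step of the recursion.

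For the inductive step, suppose the bounds hold for $\tilde\rho_{n+1}$. I would write $\tilde\rho_n(\stat{n},a\cdot\postProbwo)=\min\{\tilde g(\stat{n},a\cdot\postProbwo),\tilde d(\stat{n},a\cdot\postProbwo)\}$ and bound each of the two arguments of the $\min$ separately. The $\tilde g$ term is handled directly by \cref{lem:gBounded}. The key point for the $\tilde d$ term is that scaling $\postProbwo_n$ by $a$ propagates through the transition kernel: because the posterior-probability update $\transkernelPostVar{\postProbwo_n}{\xnew}{\stat{n}}$ is linear (homogeneous of degree one) in its $\postProbwo_n$ argument, scaling the input posterior by $a$ scales the output posterior by the same $a$, so that
\begin{align*}
 \tilde d(\stat{n},a\cdot\postProbwo) = 1 + \int \tilde\rho_{n+1}\bigl(\xi_{\stat{n}}(\xnew),\,a\cdot\transkernelPostVar{\postProbwo_n}{\xnew}{\stat{n}}\bigr)\,p(\xnew\given\stat{n})\,\dd\xnew\,.
\end{align*}
Applying the induction hypothesis pointwise inside the integral and using monotonicity of the integral then gives, for the lower bound,
\begin{align*}
 \tilde d(\stat{n},a\cdot\postProbwo) \geq 1 + \min\{a_0,a_1,1\}\int \tilde\rho_{n+1}\bigl(\xi_{\stat{n}}(\xnew),\transkernelPostVar{\postProbwo_n}{\xnew}{\stat{n}}\bigr)p(\xnew\given\stat{n})\dd\xnew\,,
\end{align*}
with the analogous statement, using $\max\{a_0,a_1,1\}$, for the upper bound.

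The main obstacle — and the step that makes the lemma work — is reconciling the constant ``$1$'' in $\tilde d$ with the scaling factor. For the lower bound I would use that $\min\{a_0,a_1,1\}\leq 1$, so $1\geq\min\{a_0,a_1,1\}\cdot 1$, which lets me factor $\min\{a_0,a_1,1\}$ out of the entire expression $1+\int(\cdots)$ and obtain $\tilde d(\stat{n},a\cdot\postProbwo)\geq\min\{a_0,a_1,1\}\,\tilde d(\stat{n},\postProbwo)$; symmetrically, $\max\{a_0,a_1,1\}\geq 1$ handles the upper bound. This is exactly why the constant $1$ is included inside the $\min$ and $\max$ in the lemma statement — it is precisely what is needed to absorb the additive run-length cost. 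Once both arguments of the outer $\min$ satisfy the same two-sided bound with the common factor $\min\{a_0,a_1,1\}$ (resp.\ $\max\{a_0,a_1,1\}$), the bound passes to their minimum, since $\min$ is monotone and $\min\{c\,u,c\,v\}=c\min\{u,v\}$ for $c\geq 0$. This closes the induction and establishes the claim for all $0\leq n\leq N$.
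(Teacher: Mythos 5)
Your proposal is correct and follows essentially the same route as the paper's proof: backward induction from $n=N$, with the base case handled by \cref{lem:gBounded}, the inductive step using the linearity (degree-one homogeneity) of the posterior-probability update kernel to push the scaling inside $\tilde\rho_{n+1}$, and the additive run-length cost $1$ absorbed because $1$ is included in the $\min$ and $\max$. The only cosmetic difference is that you treat both bounds explicitly whereas the paper writes out only the upper bound and notes the lower bound is analogous.
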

The proof of \cref{lem:rhoBounded} is shown in \cref{app:proofLemRhoBounded}.

The result stated in \cref{lem:rhoBounded} is now used to show that the boundary of the stopping region $\stopRegionBoundTilde$ is a P-null set. This statement is fixed in the following Lemma.

\begin{lemma}\label{lemma:boundaryNullSet}
 If the random variables $\postProb{i}$, $i\in\{0,1\}$, are continuous random variables, then the boundary of the stopping region $\stopRegionBoundTilde$ is a P-null set, i.e.,
 \begin{align*}
  \updateProbMeasureTilde(\stopRegionBoundTilde) = 0\quad \forall (\stat{n}, \postProbwo_{n}) \in \stateSpaceStat \times \stateSpacePostProb,\quad \forall n<N\,.
 \end{align*}
\end{lemma}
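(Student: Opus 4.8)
The plan is to exploit the scaling estimates of \cref{lem:rhoBounded,lem:gBounded} to show that the boundary meets every ray emanating from the origin in the posterior coordinate in \emph{exactly one} point, so that $\stopRegionBoundTilde$ is a radially thin (codimension-one) set that an atomless law cannot charge. First I would record the exact positive homogeneity of $\tilde g$: since $\tilde g(\stat{n},\postProbwo)$ is a minimum of terms that are linear in $\postProbwo$, one has $\tilde g(\stat{n},\lambda\postProbwo)=\lambda\,\tilde g(\stat{n},\postProbwo)$ for every $\lambda>0$ (the equality case of \cref{lem:gBounded} with $a=(\lambda,\lambda)$). I would then control $\tilde d$ under the same scaling. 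Writing $\tilde d(\stat{n},\postProbwo)=1+\int\tilde\rho_{n+1}\,\dd\updateProbMeasureTilde$ and using that the unnormalised posterior update carries the common factor $\lambda$ through the transition kernel, \cref{lem:rhoBounded} with $a=(\lambda,\lambda)$ gives
\begin{align*}
 \tilde d(\stat{n},\lambda\postProbwo) \leq \lambda\,\tilde d(\stat{n},\postProbwo)-(\lambda-1),\qquad \lambda\geq1, \\
 \tilde d(\stat{n},\lambda\postProbwo) \geq \lambda\,\tilde d(\stat{n},\postProbwo)+(1-\lambda),\qquad 0<\lambda\leq1.
\end{align*}
The decisive feature is that the additive constant $1$ in $\tilde d$ does \emph{not} scale, so $\tilde d$ is strictly sub-homogeneous relative to the exactly homogeneous $\tilde g$.

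Next I would combine the two. Suppose $(\stat{n},\postProbwo)\in\stopRegionBoundTilde$, i.e. $\tilde g(\stat{n},\postProbwo)=\tilde d(\stat{n},\postProbwo)=:c$, and note $c\geq1>0$ because $\tilde\rho_{n+1}\geq0$. For $\lambda>1$,
\[
 \tilde d(\stat{n},\lambda\postProbwo)\leq \lambda c-(\lambda-1)<\lambda c=\tilde g(\stat{n},\lambda\postProbwo),
\]
so $(\stat{n},\lambda\postProbwo)\in\stopRegionCompl$ (continuation); symmetrically, for $0<\lambda<1$ one gets $\tilde g<\tilde d$, so $(\stat{n},\lambda\postProbwo)\in\stopRegion$ (stopping). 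Hence along each ray $\lambda\mapsto\lambda\postProbwo$ the sign of $\tilde g-\tilde d$ flips exactly once, at $\lambda=1$, and the ray meets $\stopRegionBoundTilde$ in a single point. Equivalently, for each fixed $\stat{n}$ the slice $\{\postProbwo:\tilde g(\stat{n},\postProbwo)=\tilde d(\stat{n},\postProbwo)\}$ is a radial graph over the directions, hence a Lebesgue-null, codimension-one subset of the $\postProbwo$-cone.

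Finally I would convert this geometric fact into the measure statement. Conditioning on $\stat{n}$ and invoking the hypothesis that the posteriors $\postProb{i}$ are continuous random variables — so that the joint law of the next state $(\stat{n+1},\postProbwo_{n+1})$ is absolutely continuous and assigns no mass to codimension-one sets — the conditional probability of landing on the boundary is zero for every $\stat{n}$; integrating over $\stat{n}$ yields $\updateProbMeasureTilde(\stopRegionBoundTilde)=0$ for all $(\stat{n},\postProbwo_{n})\in E_\statWOa\times\stateSpacePostProb$ and all $n<N$.

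The homogeneity computation is routine once the kernel's scaling behaviour is settled, so the main obstacle is this last step: making the passage from ``one crossing per ray'' to a genuine $\updateProbMeasureTilde$-null statement fully rigorous. This requires fixing the reference measure on $E_\statWOa\times\stateSpacePostProb$, deducing absolute continuity of the law of $(\stat{n+1},\postProbwo_{n+1})$ from the continuity assumption, and a Fubini-type argument to assemble the per-$\stat{n}$ null slices into a single null set; care is also needed because the scaling $\lambda\postProbwo$ moves off the normalised simplex, so the argument must be phrased in the extended $\postProbwo$-cone on which \cref{lem:gBounded,lem:rhoBounded} are stated.
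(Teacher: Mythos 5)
Your proposal is correct and rests on exactly the same mechanism as the paper's own proof: the exact positive homogeneity of $\tilde g$ from \cref{lem:gBounded}, combined with the bound of \cref{lem:rhoBounded} and the fact that the additive constant $1$ in $\tilde d$ does not scale, yielding the strict chain $\tilde d(\stat{n},a\cdot\postProbwo_n) < a\,\tilde g(\stat{n},\postProbwo_n) \leq \tilde g(\stat{n},a\cdot\postProbwo_n)$ which shows that scaling a boundary point by $a>1$ lands strictly in the continuation region. The only difference is the packaging of the final step --- the paper argues by contradiction from an assumed positive-measure rectangle contained in the boundary, while you argue ``each ray crosses the boundary exactly once, hence it is a codimension-one set that a continuous law cannot charge'' --- and both versions leave the concluding measure-theoretic details (including the fact that the scaling moves off the normalised simplex) at essentially the same level of informality.
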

\begin{proof}
\cref{lemma:boundaryNullSet} can be proven by contradiction. Assume that there exist a non-zero probability $\updateProbMeasureTilde(\stopRegionBoundTilde) > 0$ that the test hits the boundary $\stopRegionBoundTilde$ for some $n<N$ with the next update. Since the variables $\postProb{i}$ and $\stat{n}$ are continuous random variables, there has to exist an interval $[\stat{n}^\bullet, b\stat{n}^\bullet]\times[\postProbwo_n^\bullet, a\postProbwo_n^\bullet]$ with the scalars $a>1$, $b>1$, $n<N$ for which the costs for stopping and continuing the test are equal. 
This implies that
\begin{align}\label{eq:assCostEqual}
 \tilde g(\stat{n}, \postProbwo_n) = 1+ \int \tilde \rho_{n+1}\dd\updateProbMeasureTilde \quad \forall (\stat{n},\postProbwo_n) \in [\stat{n}^\bullet, b\stat{n}^\bullet]\times[\postProbwo_n^\bullet, a\postProbwo_n^\bullet]\,.
\end{align}
Assume for now, that $\stat{n}\in[\stat{n}^\bullet, b\stat{n}^\bullet]$ is fixed. Using the previous results, it follows that
\begin{align*}
 1+ \int \tilde \rho_{n+1}\dd\updateProbMeasureTildeScaled{a} & \leq  1 + a\int \tilde \rho_{n+1}\left(\xi_{\stat{n}}(\xnew), \transkernelPostVar{\postProbwo_{n}}{\xnew}{\stat{n}} \right) p(\xnew\given\stat{n}) \dd \xnew \\
 & <   a + a\int \tilde \rho_{n+1}\left(\xi_{\stat{n}}(\xnew), \transkernelPostVar{\postProbwo_{n}}{\xnew}{\stat{n}} \right) p(\xnew\given\stat{n}) \dd \xnew \\
 & =  a\left( 1 + \int \tilde \rho_{n+1} \dd\updateProbMeasureTilde \right)\\
 & =  a \tilde g(\stat{n}, \postProbwo_n) \leq \tilde g(\stat{n}, a \postProbwo_n)\,,
\end{align*}
where the first and the last inequality are due to \cref{lem:rhoBounded,lem:gBounded}, respectively.
The assumption of equal costs for stopping and continuing the test does not hold for a fixed $\stat{n}\in[\stat{n}^\bullet, b\stat{n}^\bullet]$.
This contradicts the assumption made in \cref{eq:assCostEqual}, that the costs for stopping and continuing the test are equal for all $(\stat{n},\postProbwo_n) \in [\stat{n}^\bullet, b\stat{n}^\bullet]\times[\postProbwo_n^\bullet, a\postProbwo_n^\bullet]$. 
Since $\stopRegionBoundTilde$  is only a different representation for $\stopRegionBound$, \cref{lemma:boundaryNullSet} entails also that $\updateProbMeasure(\stopRegionBound)=0$ for all $\stat{n}\in\stateSpaceStat$.
\end{proof}
\cref{lemma:boundaryNullSet} implies that there is no need for a randomized stopping rule since for all $\stat{n}\in\stateSpaceStat$ and all $n<N$ the cost minimizing stopping rule is exactly defined. This result is essential for the theorems shown in the remainder of this work.

To present the upcoming results in a more compact way, the following short-hand notations are used
\begin{align*}
 z_n^i = \frac{p(\stat{n}\given\Hyp_i)}{p(\stat{n})} \quad \text{and} \quad {\{\xi_{\stat{n}}\in\stopRegionCompl[n+1]\}} := \{\xnew \in \stateSpaceObs: \xi_{\stat{n}}(\xnew)\in\stopRegionCompl[n+1]\}\,.
\end{align*}

\begin{theorem}\label{theo:costDerivatives}
Let $\rho^\prime_{n,C_i}$ denote the derivative of $\rho_n$ with respect to $C_i$ for $i\in\{0,1,2,3\}$, and let 
\begin{align} \label{eq:defStopRegionDec}
 \stopRegionDec{n}{i} := \stopRegion \cup \left\{\auxVarCostOpt{i,n} \leq \auxVarCostOpt{{1-i},n}\right\}
\end{align}
be the region in which the test stops at time $n$ and decides in favor of hypothesis $\Hyp_i$.
For $i\in\{0,1\}$ and $n<N$, it holds that
\begin{align*}
  \rho^\prime_{n,C_i}(\stat{n}) =\left\{
		     \begin{array}{ll}
		      p(\Hyp_i) z_n^i&\text{for } \stat{n}\in \stopRegionDec{n}{1-i}\\
		      r^i_n(\stat{n}) &\text{for } \stat{n}\in \stopRegionCompl \\
		      0 & \text{for } \stat{n}\in \stopRegionDec{n}{i}\\
		     \end{array}
		     \right.
 \end{align*}
where $r^i_n$ is defined recursively via
\begin{align*}
 r^i_n(\stat{n}) = p(\Hyp_i) z_n^i \updateProbMeasure^i\bigl(\stopRegionDec{n+1}{1-i}\bigr) + \int_{\stopRegionCompl[n+1]} \rho_{n+1,C_i}^\prime  \dd\updateProbMeasure\,.
\end{align*}
For $i\in\{0,1\}$ and $n=N$, it holds that:
\begin{align*}
  \rho^\prime_{N,C_i}(\stat{n}) =\left\{
		     \begin{array}{ll}
		      p(\Hyp_i) z_N^i&\text{for } \stat{N}\in \stopRegionDec{N}{1-i}\\
		      0 & \text{for } \stat{N}\in \stopRegionDec{N}{i}\\
		     \end{array}
		     \right.  
 \end{align*}
For $i\in\{2,3\}$ and $n<N$, it holds that
\begin{align*}
  \rho^\prime_{n,C_i} =\left\{
		     \begin{array}{ll}
		      p(\Hyp_{i-2})z^{i-2}_{n}\Var\left[\paramRV\given\stat{n},\Hyp_{i-2}\right] &\text{for } \stat{n}\in \stopRegionDec{n}{i-2}\\
		      r^i_n(\stat{n}) &\text{for } \stat{n}\in \stopRegionCompl \\
		      0 & \text{for } \stat{n}\in \stopRegionDec{n}{3-i}\\
		      \end{array}
		     \right.  
 \end{align*}
 with
 \begin{align*}
  r_n^i(\stat{n}) = & p(\Hyp_{i-2})z^{i-2}_{n}  \int_{\{\xi_{\stat{n}} \in \stopRegionDec{n+1}{i-2}\}} \Var\left[\paramRV\given\xi(\stat{n},\xnew),\Hyp_{i-2}\right]  p(\xnew\given\stat{n},\Hyp_{i-2}) \dd\xnew + \int_{\stopRegionCompl[n+1]} \rho_{n+1,C_i}^\prime  \dd\updateProbMeasure\,.
 \end{align*}
For $i\in\{2,3\}$ and $n=N$ it holds that:
\begin{align*}
  \rho^\prime_{n,C_i} =\left\{
		     \begin{array}{ll}
		      p(\Hyp_{i-2})z^{i-2}_{N}\Var\left[\paramRV\given\stat{N},\Hyp_{i-2}\right] &\text{for } \stat{N}\in \stopRegionDec{N}{i-2}\\
		      0 & \text{for } \stat{N}\in \stopRegionDec{N}{3-i}\\
		      \end{array}
		     \right.  
 \end{align*}
\end{theorem}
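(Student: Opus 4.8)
The plan is to prove all four cases by backward induction on $n$, differentiating the Bellman recursion of \cref{theo:solutionOptimalStopping} term by term. The starting observation is that, for each fixed $\stat{n}$, both $\costS(\stat{n})=\min\{\auxVarCostOpt{0,n},\auxVarCostOpt{1,n}\}$ and each $\auxVarCostOpt{i,n}$ are affine in every coefficient $C_j$; hence $\rho_n=\min\{\costS,\costC\}$ is a minimum of functions that are recursively piecewise affine in $C_i$, and its partial derivative exists and is obtained by differentiating whichever branch is active. The only places where this fails are the kink sets $\stopRegionBound=\{\costS=\costC\}$ and $\{\auxVarCostOpt{0,n}=\auxVarCostOpt{1,n}\}$; by \cref{lemma:boundaryNullSet} the former is a $Q$-null set, and by \cref{rmk:nonRndDecR} the latter lies in the continuation region for $n<N$, so neither affects the formulas nor the integrals that follow (an envelope-type argument: on a coincidence set the two competing expressions agree, so the motion of the region boundary as $C_i$ varies contributes nothing).

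Base case ($n=N$). Here $\rho_N=\costS$ and $\stopRegion[N]$ is the whole space, so $\stopRegionDec{N}{i}=\{\auxVarCostOpt{i,N}\le\auxVarCostOpt{1-i,N}\}$ is exactly the decide-$\Hyp_i$ region, on which $\rho_N=\auxVarCostOpt{i,N}$. For $i\in\{0,1\}$ one reads off from \cref{eq:defAuxCostOpt} that $C_i$ appears only in $\auxVarCostOpt{1-i,N}$, with coefficient $p(\Hyp_i\given\stat{N})$; differentiating gives $\rho'_{N,C_i}=p(\Hyp_i\given\stat{N})=p(\Hyp_i)z_N^i$ on $\stopRegionDec{N}{1-i}$ and $0$ on $\stopRegionDec{N}{i}$, using $p(\Hyp_i\given\stat{N})=p(\Hyp_i)p(\stat{N}\given\Hyp_i)/p(\stat{N})$. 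For $i\in\{2,3\}$ the same reading shows $C_i$ multiplies $p(\Hyp_{i-2}\given\stat{N})\Var[\paramRV\given\stat{N},\Hyp_{i-2}]$ in $\auxVarCostOpt{i-2,N}$, yielding the stated base-case expressions. This settles the $n=N$ rows.

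Inductive step ($n<N$). On the stopping region $\stat{n}\in\stopRegion$ we have $\rho_n=\costS$, so the stopping rows of all four formulas follow exactly as in the base case. On the continuation region $\stat{n}\in\stopRegionCompl$ we have $\rho_n=\costC(\stat{n})=1+\int\rho_{n+1}\,\dd\updateProbMeasure$; differentiating under the integral sign gives $\rho'_{n,C_i}(\stat{n})=\int\rho'_{n+1,C_i}\,\dd\updateProbMeasure$, into which I substitute the inductive hypothesis for $\rho'_{n+1,C_i}$, split over the three regions $\stopRegionDec{n+1}{1-i}$, $\stopRegionCompl[n+1]$ and $\stopRegionDec{n+1}{i}$. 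The decide-$\Hyp_i$ piece integrates to zero, the continuation piece stays as $\int_{\stopRegionCompl[n+1]}\rho'_{n+1,C_i}\,\dd\updateProbMeasure$, and the decide-$\Hyp_{1-i}$ piece must be converted into the leading term of $r^i_n$.

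The conversion rests on a change-of-measure identity that I expect to be the crux of the proof: for every Borel set $B$,
\begin{equation*}
 \int_B z_{n+1}^i\,\dd\updateProbMeasure = z_n^i\,\updateProbMeasure^i(B)\,.
\end{equation*}
I would establish this from sufficiency (Assumption~\ref{ass:suffStat}): because $\stat{n+1}$ is sufficient, the past statistic is conditionally independent of the hypothesis given $\stat{n+1}$, i.e. $p(\stat{n}\given\stat{n+1},\Hyp_i)=p(\stat{n}\given\stat{n+1})$, and a short Bayes manipulation then turns $z_{n+1}^i\,p(\stat{n+1}\given\stat{n})$ into $z_n^i\,p(\stat{n+1}\given\stat{n},\Hyp_i)$, which are exactly the densities of the two measures above. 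Applying the identity to $B=\stopRegionDec{n+1}{1-i}$ produces $p(\Hyp_i)z_n^i\,\updateProbMeasure^i(\stopRegionDec{n+1}{1-i})$ for $i\in\{0,1\}$, and for $i\in\{2,3\}$ the same identity carries the extra factor $\Var[\paramRV\given\stat{n+1},\Hyp_{i-2}]$ through to the variance-weighted integral appearing in $r^i_n$. The remaining work is the analytic justification of interchanging $\partial/\partial C_i$ with $\int\cdot\,\dd\updateProbMeasure$: differentiability of $\rho_{n+1}$ in $C_i$ holds off the $\updateProbMeasure$-null kink set by \cref{lemma:boundaryNullSet}, and a dominating function for the difference quotients is furnished by \cref{corr:H-integrable} together with the bounds of \cref{lem:rhoBounded}; for $i\in\{2,3\}$ the integrand carries $\Var[\paramRV\given\stat{n+1},\Hyp_{i-2}]$, so here the dominated-convergence argument additionally relies on the finite-second-moment Assumption~\ref{ass:finiteMoments}. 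This interchange is the step I expect to require the most care.
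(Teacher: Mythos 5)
Your proposal matches the paper's proof in all essentials: backward induction that differentiates the active branch of the Bellman recursion (the kink sets being negligible by \cref{lemma:boundaryNullSet}), the change-of-measure identity $z_{n+1}^i\,p(\xnew\given\stat{n})=z_n^i\,p(\xnew\given\stat{n},\Hyp_i)$ to turn the integral over the decide-$\Hyp_{1-i}$ region into $p(\Hyp_i)z_n^i\,\updateProbMeasure^i(\cdot)$, and an appeal to a differentiation lemma to interchange $\partial/\partial C_i$ with the integral, with \cref{corr:H-integrable} supplying integrability. The one detail where your plan misfires is the dominating function: \cref{lem:rhoBounded} bounds $\tilde\rho$ under scaling of the \emph{posterior probabilities} and gives no control of the difference quotients in $C_i$; the paper instead constructs the dominating function $h_n^i$ by its own backward recursion starting from $h_N^i=p(\Hyp_i)z_N^i$, which is the bound you actually need (and is easy to obtain that way).
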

A proof of \cref{theo:costDerivatives} is given in \cref{app:proofCostDerivatives}.

\cref{theo:costDerivatives} is now used to show the connection between the cost functions $\rho_n$ and the performance measures of the test.
\begin{theorem}\label{theo:derivativesPerformanceRelation}
 For the derivatives $\rho^\prime_{n,C_i}$ defined in \cref{theo:costDerivatives} and for the optimal policy stated in \cref{cor:optPolicyC}, it holds that
 \begin{align*}
  \rho^\prime_{n,C_i}(\stat{n}) & = p(\Hyp_i) z_n^i \errorDetOpt{n}{i}(\stat{n}) \quad\quad\quad\;\;\, i\in\{0,1\}\,, \\
  \rho^\prime_{n,C_i}(\stat{n}) & = p(\Hyp_{i-2}) z_n^{i-2} \errorEstOpt{n}{i-2}(\stat{n}) \quad  i\in\{2,3\}\,,
 \end{align*}
 and in particular
 \begin{align*}
  \rho^\prime_{0,C_i}(\stat{0}) & = p(\Hyp_i) \errorDetOpt{0}{i}(\stat{0}) \quad\quad\;\;\; i\in\{0,1\}\,, \\
  \rho^\prime_{0,C_i}(\stat{0}) & = p(\Hyp_{i-2}) \errorEstOpt{0}{i-2}(\stat{0}) \quad  i\in\{2,3\}\,.
 \end{align*}
\end{theorem}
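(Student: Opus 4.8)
The plan is to prove both identities by backward induction on $n$, running from the terminal stage $n=N$ down to $n=0$, treating the detection indices $i\in\{0,1\}$ and the estimation indices $i\in\{2,3\}$ in parallel, since the recursions \cref{eq:alphaRec,eq:betaRec} and the piecewise formulas of \cref{theo:costDerivatives} share the same structure. The first step is a bookkeeping simplification: with $z_n^i = p(\stat{n}\given\Hyp_i)/p(\stat{n})$, Bayes' rule yields $p(\Hyp_i)z_n^i = p(\Hyp_i\given\stat{n})$, so the claims read $\rho^\prime_{n,C_i} = p(\Hyp_i\given\stat{n})\errorDet{n}{i}$ for $i\in\{0,1\}$ and $\rho^\prime_{n,C_i} = p(\Hyp_{i-2}\given\stat{n})\errorEst{n}{i-2}$ for $i\in\{2,3\}$. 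Since the boundary $\stopRegionBound$ is a P-null set by \cref{lemma:boundaryNullSet}, the regions $\stopRegionDec{n}{0}$, $\stopRegionDec{n}{1}$ and $\stopRegionCompl$ partition $\stateSpaceStat$ up to a null set, so the case distinctions in \cref{theo:costDerivatives} are almost-everywhere exhaustive and every integral below splits cleanly across them.

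For the base case $n=N$ I would match the terminal formulas directly. Here $\stopR_N=1$, so \cref{eq:alphaRec} gives $\errorDet{N}{i}=\ind{\dec_N=1-i}$, which equals $1$ on $\stopRegionDec{N}{1-i}$ and $0$ on $\stopRegionDec{N}{i}$; multiplying by $p(\Hyp_i\given\stat{N})=p(\Hyp_i)z_N^i$ reproduces the terminal line of \cref{theo:costDerivatives} for $i\in\{0,1\}$. For $i\in\{2,3\}$, inserting the optimal estimator \cref{eq:optSeqEst} into \cref{eq:betaRec} turns $\E[(\est{i-2,N}-\param)^2\given\stat{N},\Hyp_{i-2}]$ into $\Var[\paramRV\given\stat{N},\Hyp_{i-2}]$, so $\errorEst{N}{i-2}$ equals this variance on $\stopRegionDec{N}{i-2}$ and vanishes otherwise, again matching after multiplication by $p(\Hyp_{i-2}\given\stat{N})$.

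For the inductive step, assuming the claim at $n+1$, I would split on the region containing $\stat{n}$. On the stopping regions the verification is immediate: on $\stopRegionDec{n}{1-i}$ the recursion \cref{eq:alphaRec} collapses to $\errorDet{n}{i}=1$ while \cref{theo:costDerivatives} gives $\rho^\prime_{n,C_i}=p(\Hyp_i)z_n^i$, and on $\stopRegionDec{n}{i}$ both sides vanish (analogously for estimation). The substantive case is the continuation region $\stopRegionCompl$, where $\stopR_n=0$ and the recursion reads $\errorDet{n}{i}(\stat{n})=\int\errorDet{n+1}{i}\,\dd\updateProbMeasure^i$. Splitting this integral over $\stopRegionDec{n+1}{1-i}$, $\stopRegionDec{n+1}{i}$ and $\stopRegionCompl[n+1]$ and applying the stopping-region evaluations $\errorDet{n+1}{i}=1$ and $0$ on the first two, the stopping part reproduces the leading term $p(\Hyp_i)z_n^i\,\updateProbMeasure^i(\stopRegionDec{n+1}{1-i})$ of $r^i_n$; the estimation case likewise yields the variance-weighted leading term. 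It then remains to reconcile the two continuation integrals, i.e. to show
\begin{align*}
 \int_{\stopRegionCompl[n+1]}\rho^\prime_{n+1,C_i}\,\dd\updateProbMeasure = p(\Hyp_i)z_n^i\int_{\stopRegionCompl[n+1]}\errorDet{n+1}{i}\,\dd\updateProbMeasure^i\,,
\end{align*}
into which the induction hypothesis $\rho^\prime_{n+1,C_i}=p(\Hyp_i)z_{n+1}^i\errorDet{n+1}{i}$ is inserted, reducing everything (after pulling out the constant $z_n^i$) to the single change-of-measure identity $z_{n+1}^i\,\dd\updateProbMeasure = z_n^i\,\dd\updateProbMeasure^i$ on $\stopRegionCompl[n+1]$.

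I expect this last identity to be the main obstacle, since it is where Assumption \ref{ass:suffStat} enters. In densities, $z_{n+1}^i\,\dd\updateProbMeasure=z_n^i\,\dd\updateProbMeasure^i$ is equivalent to $p(\Hyp_i\given\stat{n},\stat{n+1})=p(\Hyp_i\given\stat{n+1})$, i.e. that, conditioned on the current sufficient statistic $\stat{n+1}$, the hypothesis is independent of the past statistic $\stat{n}$. This follows because the factorization $p(\Hyp_i,\param,\obsIdx{N})=p(\obsIdx{N}\given\param)p(\param\given\Hyp_i)p(\Hyp_i)$ makes the data conditionally independent of $\Hyp_i$ given $\param$, while sufficiency (Assumption \ref{ass:suffStat}) gives $p(\param\given\stat{n},\stat{n+1})=p(\param\given\stat{n+1})$; combining the two yields $p(\Hyp_i\given\stat{n},\stat{n+1})=\int p(\Hyp_i\given\param)p(\param\given\stat{n+1})\,\dd\param=p(\Hyp_i\given\stat{n+1})$. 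Once this is in place the continuation integrals agree and the induction closes. Finally, the ``in particular'' statements at $n=0$ follow at once, since the initial statistic carries no observation, so that $z_0^i=1$ and $p(\Hyp_i\given\stat{0})=p(\Hyp_i)$.
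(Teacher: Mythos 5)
Your proposal is correct and takes essentially the same route as the paper: the paper's proof verifies that $\rho^\prime_{n,C_i}/(p(\Hyp_i)z_n^i)$ satisfies the backward Chapman--Kolmogorov recursions \cref{eq:alphaRec,eq:betaRec} region by region with the matching terminal condition, which is exactly your backward induction. The continuation-region step is resolved there by the same change of measure $z_{n+1}^i\,\dd\updateProbMeasure = z_n^i\,\dd\updateProbMeasure^i$ that you identify, obtained in the paper from $p(\stat{n+1}\given\stat{n})=p(x_{n+1}\given\stat{n})$ rather than from your sufficiency argument, but the content is identical.
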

A proof of \cref{theo:costDerivatives} is laid down in \cref{app:proofDerivativesPerformanceRelation}.
This is the main property of the cost function obtained from the optimal stopping problem and it is fundamental for determining a set of optimal weights to guarantee a predefined performance of the test.

\section{Optimal Coefficients of the Loss Function}\label{sec:optCostCoeff}
The question how to choose the weights of the individual costs in a Bayesian cost function has attained little attention in the literature and the choice is usually left to the practitioner.
In a pure fixed-sample size detection problem, it is rather simple to choose the correct weights because the ratio $C_0/C_1$ is a trade-off between the type I and type II error probabilities. In the case of sequential detection, the choice the coefficients becomes more difficult since the values of $C_0$ and $C_1$ are a trade-off between the run-length of the test and the two error probabilities.

For the sequential joint detection and estimation problem, it becomes almost impossible to choose the parameters by hand, since the corresponding performance measures typically have different scales altogether. The error probabilities are in the interval $[0,1]$, whereas the estimation quality when using a squared error loss is in the range $[0,\infty)$.

One way to overcome this, is to normalize the estimation error to the unit interval, see, e.g., \citet{zhang2009dynamic}.
The normalization constant can, for example, be found by using training data. Nevertheless, this leaves the problem open how to chose the trade-off between expected run-length, error probabilities and estimation performance.

In this section, a strategy how to obtain the coefficients based on linear programming is presented. It adopts the approach given in \citet{fauss2015linear} for the sequential joint detection and estimation problem. Recall the problem formulation:
\emph{design a sequential procedure which uses on average as few samples as possible and fulfills constraints on the error probabilities and the estimation quality.}

In technical terms, this problem can be formulated as the following constrained optimization problem, where it is assumed that $N$ is sufficient large so that the constraints can be fulfilled.

\begin{problem} \label{prb:constrProblem}
    Assuming a sufficient large $N$, the optimal policy for the sequential joint detection and estimation problem $\policyOptErr$ can be found by solving the following constrained optimization problem
    \begin{align}
        \begin{split}
            & \min_{\policy\in\policySet}\; \E[\tau], \quad \stopR_N = 1\\
            \text{s.t.} \qquad & p(\dec_\tau=1\given\Hyp_0) \leq \errConstr_0\,, \\
            & p(\dec_\tau=0\given\Hyp_1) \leq \errConstr_1\,, \\
            & \E\bigl[\ind{\dec_\tau=0}(\est{0,\tau}-\theta)^2\given \Hyp_0 \bigr] \leq \errConstr_2\,, \\
            & \E\bigl[\ind{\dec_\tau=1}(\est{1,\tau}-\theta)^2\given  \Hyp_1 \bigr] \leq\errConstr_3\,,
        \end{split}
    \end{align}
    where $\policySet$ is the set of all feasible policies and $\errConstr_0,\errConstr_1\in(0,1)$ and $\errConstr_2,\errConstr_3\in(0,\infty)$.
\end{problem}
This constrained optimization problem is a rather intuitive formulation.

In order to obtain $\policyOptErr$, the strong connection between the derivatives of the cost function and the performance measures as stated in \cref{theo:derivativesPerformanceRelation} is exploited.

The following maximization problem, which is in fact the dual problem of the one stated in \cref{prb:constrProblem} (see \cref{app:proofTheoMaxC} for details), is considered

\begin{align}\label{eq:optMaxC}
 \max_{C\geq0}\, L_\errConstr(C)\,,
\end{align}
with $C=\{C_0, C_1, C_2, C_3\}$ and the dual objective given by
\begin{align}
 L_\errConstr(C) = \rho_0(\stat{0}) - \sum_{i=0}^1 p(\Hyp_i)C_i\errConstr_i - \sum_{i=2}^3 p(\Hyp_{i-2})C_i\errConstr_i\,.
\end{align}

The coefficients $C_i$, $i\in\{0,1,2,3\}$, now act as Lagrange multipliers for the inequalities in \cref{prb:constrProblem} and $L_\errConstr(C)$ is the Lagrangian dual objective.
Since it is not trivial that \cref{eq:optMaxC} is indeed the dual problem of \cref{prb:constrProblem}, we state the following theorem.

\begin{theorem}\label{theo:maxC}
Let $\policyOptErr$ be the solution of \cref{prb:constrProblem} , let $C_{\errConstr}^\star$ be the solution of \cref{eq:optMaxC} and let $\policyOptCerr$ be the optimal policy parametrized by $C_\errConstr^\star$. Then, it holds that
\begin{align*}
 \policyOptCerr = \policyOptErr \quad \text{and} \quad L_\errConstr(C_{\errConstr}^\star) = \E[\tau(\policyOptCerr)]\;.
\end{align*}
That is, if a policy solves \cref{eq:optMaxC} then, it also solves \cref{prb:constrProblem}. Moreover, the optimal value of \cref{eq:optMaxC} is the expected run-length.
\end{theorem}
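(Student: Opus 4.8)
The plan is to recognize $L_\errConstr(C)$ as the Lagrangian dual function of \cref{prb:constrProblem} and to prove strong duality by verifying the Karush--Kuhn--Tucker conditions, using \cref{theo:derivativesPerformanceRelation} to make the stationarity condition explicit. First I would adjoin the four constraints of \cref{prb:constrProblem} with multipliers $C_i\geq0$, weighting the $i$-th constraint by the prior as in \cref{eq:objWeightedSum}, to form the Lagrangian
\begin{align*}
\mathcal{L}(\policy, C) = {}& \E[\tau] + C_0 p(\Hyp_0)\bigl(p(\dec_\tau=1\given\Hyp_0) - \errConstr_0\bigr) + C_1 p(\Hyp_1)\bigl(p(\dec_\tau=0\given\Hyp_1) - \errConstr_1\bigr) \\
& + C_2 p(\Hyp_0)\bigl(\E[\ind{\dec_\tau=0}(\est{0,\tau}-\theta)^2\given\Hyp_0] - \errConstr_2\bigr) \\
& + C_3 p(\Hyp_1)\bigl(\E[\ind{\dec_\tau=1}(\est{1,\tau}-\theta)^2\given\Hyp_1] - \errConstr_3\bigr).
\end{align*}
The key observation is that, for fixed $C$, minimizing $\mathcal{L}(\cdot,C)$ over $\policy\in\policySet$ is exactly \cref{prb:genProblem} up to the additive constant $-\sum_i p(\Hyp_\cdot)C_i\errConstr_i$. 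Hence by \cref{theo:solutionOptimalStopping} and the weighted-sum form \cref{eq:objWeightedSum}, $\min_\policy \mathcal{L}(\policy,C) = \rho_0(\stat{0}) - \sum_i p(\Hyp_\cdot)C_i\errConstr_i = L_\errConstr(C)$, which identifies $L_\errConstr$ as the dual function and \cref{eq:optMaxC} as the dual problem. Weak duality is then immediate: for any $C\geq0$ and any $\policy$ feasible for \cref{prb:constrProblem}, every constraint slack is nonpositive, so $L_\errConstr(C)=\min_{\policy'}\mathcal{L}(\policy',C)\leq\mathcal{L}(\policyOptErr,C)\leq\E[\tau(\policyOptErr)]$.

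Next I would establish strong duality. Being a pointwise infimum of functions affine in $C$ minus a linear term, $L_\errConstr$ is concave on $\nonNegSet^4$, so its maximizer $C_\errConstr^\star$ is characterized by the KKT conditions of the concave program $\max_{C\geq0}L_\errConstr(C)$. The crucial step is differentiation: by \cref{theo:derivativesPerformanceRelation}, $\partial\rho_0/\partial C_i = p(\Hyp_i)\errorDetOpt{0}{i}$ for $i\in\{0,1\}$ and $=p(\Hyp_{i-2})\errorEstOpt{0}{i-2}$ for $i\in\{2,3\}$, so that $\partial L_\errConstr/\partial C_i$ equals $p(\Hyp_\cdot)$ times the $i$-th constraint slack, all evaluated at the policy $\policyOptCerr$ parametrized by the current $C$. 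The stationarity and nonnegativity conditions $C_i^\star\,\partial L_\errConstr/\partial C_i = 0$ and $\partial L_\errConstr/\partial C_i \leq 0$ then read precisely as (i) $\policyOptCerr$ satisfies every constraint of \cref{prb:constrProblem} (primal feasibility) and (ii) complementary slackness holds termwise.

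Finally I would assemble the pieces. Expanding $L_\errConstr(C_\errConstr^\star) = \rho_0(\stat{0}) - \sum_i p(\Hyp_\cdot)C_i^\star\errConstr_i$ through \cref{eq:objWeightedSum} and cancelling each product $C_i^\star(\text{slack})_i$ by complementary slackness yields $L_\errConstr(C_\errConstr^\star) = \E[\tau(\policyOptCerr)]$. Since $\policyOptCerr$ is feasible for \cref{prb:constrProblem}, optimality of $\policyOptErr$ gives $\E[\tau(\policyOptErr)]\leq\E[\tau(\policyOptCerr)]$, while weak duality gives $\E[\tau(\policyOptCerr)] = L_\errConstr(C_\errConstr^\star)\leq\E[\tau(\policyOptErr)]$; hence all inequalities collapse to equalities, $\policyOptCerr$ attains the constrained optimum, and the second claim of the theorem follows. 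The identity $\policyOptCerr=\policyOptErr$ then comes from the essentially unique characterization of the optimal policy: the stopping rule is determined a.e.\ by \cref{lemma:boundaryNullSet}, and the decision rule and estimators by \cref{eq:optDecRule,eq:optSeqEst}.

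I anticipate that the main obstacle is the rigorous justification of the stationarity step. Because $\rho_0(\stat{0})$ is an infimum over policies, it need not be classically differentiable in $C$ at points where the optimal policy switches, so the gradient identity of \cref{theo:derivativesPerformanceRelation} must be read as an envelope/Danskin-type statement valid where the optimizing policy is locally stable, and the boundary cases $C_i^\star=0$ must be treated through subgradients rather than plain derivatives. A secondary, more technical point is guaranteeing existence of a maximizer on the unbounded cone $\nonNegSet^4$; here I would combine concavity with the standing assumption that $N$ is large enough for \cref{prb:constrProblem} to be strictly feasible (Slater's condition), which simultaneously rules out a duality gap and secures attainment of $C_\errConstr^\star$.
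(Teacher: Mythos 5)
Your proposal is correct and follows essentially the same route as the paper: both read the $C_i$ as Lagrange multipliers, use \cref{theo:derivativesPerformanceRelation} to identify $\partial L_\errConstr/\partial C_i$ with the prior-weighted constraint slacks, invoke complementary slackness to cancel the $C_i^\star(\text{slack})_i$ terms, and thereby obtain $L_\errConstr(C_{\errConstr}^\star)=\E[\tau(\policyOptCerr)]$. The one point where you are more explicit than the paper is the weak-duality sandwich $\E[\tau(\policyOptCerr)]=L_\errConstr(C_{\errConstr}^\star)\leq\E[\tau(\policyOptErr)]\leq\E[\tau(\policyOptCerr)]$ used to conclude $\policyOptCerr=\policyOptErr$, while the paper instead secures existence of a finite nonnegative maximizer (the issue you defer to concavity plus Slater's condition) by direct sign arguments on the gradient in the boundary cases $C_i\to\infty$ and $C_i=0$.
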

A proof of \cref{eq:optMaxC} can be found in \cref{app:proofTheoMaxC}.
Using \cref{theo:maxC} and \cref{eq:optMaxC}, \cref{prb:constrProblem} can be shown to be equivalent to the following maximization problem:
\begin{align}\label{eq:finalOptimizationProblem}
   & \max_{C\geq0}\,\rho_0(\stat{0}) - \sum_{i=0}^1 p(\Hyp_i)C_i\errConstr_i - \sum_{i=2}^3 p(\Hyp_{i-2})C_i\errConstr_i\\
  \text{s.t.} \quad  & \rho_n(\stat{n}) =  \min\left\{ \costS(\stat{n}),1 +  \int \rho_{n+1}\dd\updateProbMeasure  \right\} \quad n<N\nonumber\\
    & \rho_N(\stat{N}) = \costS(\stat{N})\nonumber
\end{align}
In order to solve the optimization problem in \cref{eq:finalOptimizationProblem}, we proceed, as in \citet{fauss2015linear}, by relaxing the equality constraints to inequality constraints and adding the cost function $\rho_n$ to the set of free variables.

\begin{theorem}\label{theo:jointDetEstLP}
The problem
 \begin{align} \label{eq:jointDetEstLP}
  \begin{split}
   \max_{C\geq0,\rho_n\in\mathcal{L}}\;&\rho_0(\stat{0}) - \sum_{i=0}^1 p(\Hyp_i)C_i\errConstr_i - \sum_{i=2}^3 p(\Hyp_{i-2})C_i\errConstr_i  \\
  \text{s.t.} \quad  & \rho_n(\stat{n}) \leq  \min\left\{ \auxVarCostOpt{0,n}(\stat{n}),\; \auxVarCostOpt{1,n}(\stat{n}),\; 1+\int \rho_{n+1}\dd\updateProbMeasure \right\}\quad\text{for } 0\leq n < N \\
  & \rho_N(\stat{N}) \leq  \min\left\{ \auxVarCostOpt{0,N}(\stat{N}),\; \auxVarCostOpt{1,N}(\stat{N})\right\}
  \end{split}
 \end{align} is equivalent to \cref{prb:constrProblem}, where $\mathcal{L}$ is the set of all non-negative $\updateProbMeasure$-integrable functions on $E_\statWOa$ and $\auxVarCostOpt{i,n}$ is defined in \cref{eq:defAuxCostOpt}. 
\end{theorem}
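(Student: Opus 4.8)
The plan is to show that relaxing the Bellman equalities in \cref{eq:finalOptimizationProblem} to the inequalities in \cref{eq:jointDetEstLP}, while promoting the functions $\rho_n$ to free optimization variables ranging over $\mathcal{L}$, changes neither the optimal value nor the optimal coefficients $C$; the equivalence to \cref{prb:constrProblem} then follows at once from \cref{theo:maxC}. The crucial observation is that the objective in \cref{eq:jointDetEstLP} depends on the free functions $\rho_0,\ldots,\rho_N$ only through the single scalar $\rho_0(\stat{0})$, and with a positive sign. Hence I would write the joint maximization as $\max_{C\geq0}\max_{\rho_n\in\mathcal{L}}$, and for each fixed $C\geq0$ the inner problem reduces to making $\rho_0(\stat{0})$ as large as the inequality constraints permit.

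For fixed $C\geq0$ I would argue by backward induction that the constraint system admits a pointwise-largest feasible family $(\rho_n)_{0\leq n\leq N}$, and that this family is exactly the Bellman solution of \cref{theo:solutionOptimalStopping}. At $n=N$ the only constraint reads $\rho_N(\stat{N})\leq\min\{\auxVarCostOpt{0,N}(\stat{N}),\auxVarCostOpt{1,N}(\stat{N})\}=\costS(\stat{N})$, so the largest feasible choice is $\rho_N=\costS(\stat{N})$. Assuming inductively that the largest feasible $\rho_{n+1}$ coincides with the Bellman value, the map $\rho_{n+1}\mapsto 1+\int\rho_{n+1}\,\dd\updateProbMeasure$ is monotone (since $\updateProbMeasure$ is a nonnegative measure and the integral is finite by \cref{corr:H-integrable}), so the continuation bound $1+\int\rho_{n+1}\,\dd\updateProbMeasure$ is itself maximized by this choice. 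Therefore the largest feasible $\rho_n$ equals $\min\{\costS(\stat{n}),\,1+\int\rho_{n+1}\,\dd\updateProbMeasure\}$, which is again the Bellman value, closing the induction.

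Because the Bellman family is nonnegative and $\updateProbMeasure$-integrable, it lies in $\mathcal{L}$ and is thus admissible in \cref{eq:jointDetEstLP}; being pointwise-largest, it simultaneously maximizes $\rho_0(\stat{0})$. Consequently the inner supremum equals the Bellman value $\rho_0(\stat{0})$ of \cref{theo:solutionOptimalStopping}, and substituting it back collapses \cref{eq:jointDetEstLP} into the maximization of $L_\errConstr(C)$ over $C\geq0$, i.e. precisely \cref{eq:finalOptimizationProblem} (equivalently \cref{eq:optMaxC}). Applying \cref{theo:maxC} then gives $\policyOptCerr=\policyOptErr$ together with the matching optimal value, establishing the claimed equivalence.

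The step I expect to be the main obstacle is the justification that the pointwise supremum of feasible families is itself feasible and stays inside $\mathcal{L}$: one must ensure that taking $\rho_{n+1}$ pointwise maximal preserves $\updateProbMeasure$-integrability so that the continuation bound remains finite and the recursion does not leave $\mathcal{L}$. This is exactly what \cref{corr:H-integrable} secures, after which the induction and the monotonicity of the integral are routine; the remaining bookkeeping (that the linear-in-$C$ terms are unaffected by the inner maximization, so the outer problem in $C$ is literally \cref{eq:optMaxC}) is immediate.
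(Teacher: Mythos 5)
Your proposal is correct and follows essentially the same route as the paper: both arguments rest on the monotonicity of the Bellman recursion $\rho_n = F(\rho_{n+1})$ and a backward induction showing that the pointwise-largest feasible family in the relaxed problem is exactly the Bellman solution, so that $\rho_0(\stat{0})$ is unchanged and \cref{theo:maxC} closes the equivalence to \cref{prb:constrProblem}. Your version is somewhat more explicit about the outer/inner decomposition over $C$ and $\rho$ and about invoking \cref{corr:H-integrable} for integrability, but these are refinements of, not departures from, the paper's argument.
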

\begin{proof}
Let $\rho^\star$ denote the solution of problem \eqref{eq:finalOptimizationProblem} and let $\rho^\dag$ denote the solution of the corresponding relaxed problem \eqref{eq:jointDetEstLP}. Since problem \eqref{eq:jointDetEstLP} is a relaxed version of problem \eqref{eq:finalOptimizationProblem}, it holds that
\begin{align}\label{eq:rho0IneqRel}
 \rho^\star_0(\stat{0}) \leq \rho^\dag_0(\stat{0})\,.
\end{align}
The function relating $\rho_n$ and $\rho_{n+1}$, i.e., $\rho_n = F(\rho_{n+1})$, is monotonically non-decreasing. This function consists of a minimum operator, which is monotonic by definition. The argument of the minimum operator is an expected value of a non-negative function, which implies that the function $F(\cdot)$ can never decrease when its argument increases. Assume that, without changing $C_\kappa^\star$, the inequality constraint in \cref{eq:jointDetEstLP} is not fulfilled with equality for some $n$, i.e., 
\begin{align*}
 \rho^\dag_n(\stat{n}) \leq \rho^\star_n(\stat{n})\,.
\end{align*}
Due to the monotonicity of $F(\cdot)$, we can state that
\begin{align*}
 \rho^\dag_n(\stat{n}) \leq \rho^\star_n(\stat{n}) \Rightarrow \rho^\dag_{n-1}(\stat{n-1}) \leq \rho^\star_{n-1}(\stat{n-1})\,.
\end{align*}
Applying this relations recursively gives us
\begin{align} \label{eq:rho0IneqMon}
 \rho^\dag_0(\stat{n}) \leq \rho^\star_0(\stat{n})\,.
\end{align}
Combining \cref{eq:rho0IneqMon} and \cref{eq:rho0IneqRel} yields $\rho^\dag_0(\stat{0}) = \rho^\star_0(\stat{0})$. This implies that $\rho^\dag$ and $\rho^\star$ can differ only in a P-null set. Hence, the policies corresponding to $\rho^\dag$ and $\rho^\star$ are equivalent in an almost sure sense.
\end{proof}
 \section{Discussion}\label{sec:discussion}
We derived an optimal sequential joint detection and estimation scheme, starting from a decision theoretical point of view and ending with a recursively defined cost function. 
To ensure a certain performance in terms of error probabilities and estimation errors, a method how to choose the weights in the joint loss function is presented. By adding the cost function $\rho_n$ itself to the set of free variables, one ends up with a problem, which is linear in the coefficients $C_i$ as well as in the cost function $\rho_n$. Though this problem is linear, an optimization over a functional, i.e., an infinite dimensional, space has to be performed.

Optimization over infinite dimensional spaces is in general a challenging task, which is, for example, addressed in \citet{botelho2016functional}. Although there exist efficient solution approaches for this kind of problems, they are beyond the scope of this paper. For the examples presented in \cref{sec:numResults}, a straightforward discretization using regular grids turns out to be sufficient.
After a discretization of the state spaces of the parameters, the observations and the sufficient statistic, this problem reduces to a finite dimensional linear program which can be solved by a variety of powerful off-the-shelf solvers.
From the solution of the optimization problem, one directly obtains discretized versions of the cost functions $\rho_n$ and $d_n$ as well as the set of optimal coefficients $C_{\errConstr}^\star$, which are needed for the decision rule.

The optimal objective $L_\errConstr(C_{\errConstr}^\star)$ corresponds to the expected number of samples used by the sequential scheme. 
Most design procedures for sequential schemes do not provide this information during the design process and have to be estimated by e.g. Monte Carlo simulations after the scheme is designed.

Issues about numerical stability can arise when one tries to solve \cref{eq:jointDetEstLP} using a straightforward discretization because the contribution of $\rho_0(\stat{0})$ to $\rho_n(\stat{n})$ becomes smaller with increasing $n$. Especially when this contribution is close to the solvers accuracy, an accurate solution of $\rho_n$ can no longer be guaranteed.
In general, this contribution is low in regions that are very unlikely and thus should not effect the stopping regions. Depending on the likelihood, the unknown parameters and the corresponding prior, this could also happen in regions that will affect the stopping regions. Therefore, we suggest to use a regularization term to ensure that the cost function $\rho_n$ is indeed maximized over the entire state space and, hence, the inequality constraints are fulfilled with equality. 
The regularized formulation does not result in a strictly optimal test,
anymore. Nevertheless, the resulting test differs only slightly from the strictly optimal one. A detailed formulation and further analysis of the regularized formulation can be found in \cref{app:regularizedFormulation}

Once the test is designed, one only has to update the sufficient statistic $\stat{n}$ with every new sample and to evaluate the cost functions $\rho_n$ and $d_n$ at the current value of $\stat{n}$. This can be done, for example, by mapping $\stat{n}$ to the discretization grid to obtain the values of the cost functions or to interpolate the cost functions at point $\stat{n}$. Regardless of which of the two methods is used, the implementation is more cost-efficient than the comparable approaches, such as the one by \citet{Yilmaz2016Sequential}. Whereas our approach only requires the evaluation of the two cost functions at every time instant to decide whether to stop or to continue sampling, the approach by Yilmaz et al. requires at every time instant the computation of the estimates under each hypothesis as well as the evaluation of the decision rule to get the overall costs. This is computationally  much more expensive than the evaluation of two costs functions. Hence, the easy and fast implementation of the test, to an extent, compensates the costly test design.

Another advantage of the proposed approach compared to heuristic approaches for choosing the weights $C_i$ is that a detection (estimation) coefficient is set to zero if the constraint is implicitly fulfilled by the corresponding estimation (detection) constraint of the same hypothesis, see \cref{sec:numResults} and \cref{app:proofTheoMaxC} for details. The optimal procedure then becomes equivalent to a test which is designed under estimation (detection) constraints only. In contrast, it is rather impossible to assess the implicit fulfillment of the constraint heuristically beforehand.

Although this paper focuses on the \ac{MSE} as a quality measure for estimation, any appropriate loss function can be used as long as the same quality measure is used in the constrained problem formulation (\cref{prb:constrProblem}). Using a different quality measure leads to a different policy, since the \ac{MMSE} estimator is not optimal anymore, but the general method proposed in this work can still be used and all proofs for optimality remain valid. Using, for example, the absolute error as a quality measure for the estimator would lead to an optimal sequential scheme with the posterior median as an optimal estimator.
Also the restriction that the estimation quality is the same under both hypotheses is only made for the sake of a compact notation and is not required in general.

Last but not least, it should be mentioned that the proposed method can in general be extended to multiple hypotheses. The loss function stated in \cref{eq:lossFctJoint} can either be defined in terms of pairwise error probabilities, i.e., $P(\dec_\tau = i \given \Hyp_j)$, in terms of the family-wise errors or in terms of the false discovery rate \citep{benjamini1995controlling}. The choice how to penalize wrong decisions influences the cost for deciding in favor of $\Hyp_i$, i.e., $\auxVarCostOpt{i}$. Obviously, the decision rule given in \cref{eq:optDecRule} becomes an $\argmin$ of the costs for deciding in favor of the different hypotheses. Using this decision rule, the rest of the analysis presented in this paper essentially stays the same.
 \section{Numerical Results}\label{sec:numResults}
As a proof-of-concept and to show the properties of the resulting tests, two numerical examples are presented in this section. To solve the problems numerically, the state spaces of the observations, the parameter  and  the sufficient statistic---which are continuous spaces---are discretized using regular grids. On these grids, the optimization problem in \cref{eq:jointDetEstLPReg} is solved to obtain the optimal weights and the cost functions.
Let $\Nx$ and $\Nstat$ denote the number of grid points of the observation space and the space of the sufficient statistic, respectively. Furthermore, let $\boldsymbol\rho_n$ and $\auxVarCostOptDiscr{i,n}$ be the discretized versions of $\rho_n(\stat{n})$ and $\auxVarCostOpt{i,n}$, respectively. Both are column vectors of size $\Nstat\times 1$.
For the look ahead step, the matrix $\boldsymbol\xi_n$ of size $\Nx\times\Nstat$ is introduced which represents the transition kernel $\xi_{\stat{n}}$. 
The discrete version of the posterior predictive $p(\xnew\given\stat{n})$ is denoted by $\boldsymbol P_n$ and is of size $\Nstat\times\Nx$. Hence, the probability measure $\updateProbMeasure$ is represented by $\updateProbMeasureMtx = \Delta x\boldsymbol P_n \boldsymbol \xi_n$, where $\Delta x$ is the distance of two elements of the grid of $x$. Let  $\auxVarCostOptMtx{i} = [\auxVarCostOptDiscr{i,0}, \ldots, \auxVarCostOptDiscr{i,N}]$, $\boldsymbol\rho=[\boldsymbol\rho_0,\ldots,\boldsymbol\rho_N]$ and let $\rho_n^m$ denote the $m$-th element of $\boldsymbol\rho_n$.
The discretized version of problem~\eqref{eq:jointDetEstLP} can then be written as the linear program:
\begin{align}\begin{split}\label{eq:jointDetEstLPDiscrete}
   \max_{C\in\nonNegSet^{4},\boldsymbol\rho\in\nonNegSet^{\Nstat\times N+1}}\;&\rho_0^{m} - \sum_{i=0}^1 p(\Hyp_i)C_i\errConstr_i - \sum_{i=2}^3 p(\Hyp_{i-2})C_i\errConstr_i \\
  \text{s.t.} \quad  & \boldsymbol\rho_{\phantom{n}} \leq  \auxVarCostOptMtx{0}\\
  & \boldsymbol\rho_{\phantom{n}} \leq  \auxVarCostOptMtx{1}\\
  & \boldsymbol\rho_n \leq 1+\updateProbMeasureMtx\boldsymbol\rho_{n+1} \quad \text{for } 0\leq n < N 
  \end{split}
\end{align}
where the index $m$ has to be chosen such that $\rho_0^m$ corresponds to $\rho_0(\stat{0})$.

This discretized optimization problem consists of $\Nstat(N+1) +4$ unknown variables and $\Nstat(3N+2)$ inequality constraints.

The maximization problem in \cref{eq:jointDetEstLPDiscrete} is solved using the Gurobi \citep{gurobi} solver which is called via the MATLAB cvx interface \citep{gb08,cvx}.
To reduce the influence of numerical errors, \ac{LP} is only used to obtain the set of optimal coefficients $C_\errConstr^\star$, and the cost functions are then re-calculated using their recursive definition in \cref{theo:solutionOptimalStopping}---see \cref{sec:discussion} for a discussion on the numerical stability.
To evaluate the costs for stopping and continuing when running a test, a linear interpolation of the cost functions is performed.

As a benchmark, the proposed method is compared with a truncated version of the \ac{SPRT} followed by an \ac{MMSE} estimator.
Although the \ac{SPRT} does not take an estimation step into account, it is an asymptotically optimal method for the detection problem. Hence, it is used to show the gap in the performance between a two-step procedure (\ac{SPRT} followed by \ac{MMSE} estimator) and the joint optimal solution. Running the \ac{SPRT} and two estimators---one under each hypothesis---in parallel would be another benchmark possibility. Since this method would result in a complicated fusion of the three different stopping rules and is still not an overall optimal solution, we focus on the two-step procedure as a benchmark.

The stopping rule and the decision rule of the truncated \ac{SPRT} are given by
\begin{align*}
  \stopR_n^\text{SPRT}(\stat{n}) = & \left\{
 \begin{array}{ll}
    1 & \text{for } \likelihoodRatio(\stat{n}) \notin (A,B) \\
    0 & \text{for } \likelihoodRatio(\stat{n}) \in (A,B)
 \end{array}
 \right. \\
 \stopR_N^\text{SPRT}(\stat{n}) = & 1 \\
  \dec_n^\text{SPRT}(\stat{n}) = & \left\{
 \begin{array}{ll}
    1 & \text{for } \likelihoodRatio(\stat{n}) \geq A \\
    0 & \text{for } \likelihoodRatio(\stat{n}) \leq B
 \end{array}
 \right. \\
  \dec_N^\text{SPRT}(\stat{N}) = & \left\{
 \begin{array}{ll}
    1 & \text{for } \likelihoodRatio(\stat{N}) > 1 \\
    0 & \text{for } \likelihoodRatio(\stat{N}) \leq 1
 \end{array}
 \right.
\end{align*}
where $\likelihoodRatio(\stat{n})$ denotes the likelihood ratio, i.e.,
\begin{align*}
 \likelihoodRatio(\stat{n})=\frac{p(\stat{n}\given\Hyp_1)}{p(\stat{n}\given\Hyp_0)}=\frac{\int p(\stat{n}\given\param,\Hyp_1)p(\param\given\Hyp_1)\dd\param}{\int p(\stat{n}\given\param,\Hyp_0)p(\param\given\Hyp_0)\dd\param}\;.
\end{align*}
The stopping time of the truncated \ac{SPRT} is given by
\begin{align*}
 \tau^\text{SPRT} = & \min\{n:\stopR_n^\text{SPRT} = 1\}\,.
\end{align*}
The thresholds $A$ and $B$ are calculated according to \citet{wald1947sequential} as
\begin{align}\label{eq:waldTh}
 A=\frac{1-\errConstr_{1}}{\errConstr_{0}} \quad \text{and} \quad B=\frac{\errConstr_{1}}{1-\errConstr_{0}}\,.
\end{align}
Note that the \ac{SPRT} uses the likelihood ratio instead of the posterior probabilities and hence, \cref{eq:waldTh} uses $\errConstr_i$ instead of $\errConstr_i p(\Hyp_i)$ for calculating the thresholds.
Normally, one would run the \ac{SPRT} as long as none of the thresholds defined in \cref{eq:waldTh} is crossed, but since the optimal scheme presented in this work is truncated, we decided to also truncate the \ac{SPRT} at the same $N$ for the sake of fairness. To make a decision at time $N$, the likelihood ratio has to be compared against a single threshold. 
Increasing(decreasing) the threshold leads to a systematic preference of hypothesis $\Hyp_0$($\Hyp_1$) once the truncation point $N$ is reached.
Here, this threshold is set to $1$, which means that we decide in favor of the hypothesis with the higher likelihood. Although one could influence the empirical error probabilities by changing this threshold, the \ac{SPRT} with the thresholds defined in \cref{eq:waldTh} has typically significant smaller empirical error probabilities than the nominal ones.

The performance of the two setups presented in the following is validated using a Monte Carlo simulation with $10^7$ runs.

A MATLAB implementation of the algorithms that generate the following examples is available at \url{https://github.com/ReinhardDominik/bayesSeqJointDetEst}.
\subsection{Shift-in-Mean Test}\label{sec:exRandomMean}
The first example used to demonstrate the performance of the method is a shift-in-mean test. Therefore, the conditional distribution of the data, given the mean, is a Gaussian distribution with equal and known variances for both hypotheses. The two hypotheses differ only in the mean or, more precisely, in the priors of the mean. Under the null hypothesis, the negated mean is assumed to follow a Gamma distribution, whereas the mean under the alternative follows a Gamma distribution.
Formally, the two hypotheses can be formulated as follows
\begin{align}
 \begin{split}
  \Hyp_0: & \quad \SeqDataRV\given\mu_0 \sim \norm{\mu_0}{\sigma^2},\;-\mu_0\sim \Gam(a,b)\;, \\
  \Hyp_1: & \quad \SeqDataRV\given\mu_1 \sim \norm{\mu_1}{\sigma^2},\;\phantom{-}\mu_1\sim \Gam(a,b)\;,
 \end{split}
\end{align}
where $\Gam(a,b)$ is the Gamma distribution with shape parameter $a$ and scale parameter $b$. 

Since the distribution of the data conditioned on the mean $\mu_i$, $i\in\{0,1\}$, is Gaussian, it can be written as
\begin{align*}
 p(\obs\given\mu_i,\Hyp_i) & = \prod_{l=1}^n p(x_{l}\given\mu_i,\Hyp_i)\,, \\
			   & \propto \exp\left(-\frac{n}{2\sigma^2} \left(\bar{x}_n-\mu_i\right)^2\right)\,, \\
			   & \propto \norm{\bar{x}_n\bbgiven\mu_i}{\frac{\sigma^2}{n}}\,,
\end{align*}
where $\bar{x}_n=\frac{1}{n}\sum_{i=1}^n x_i$. As the variance $\sigma^2$ is fixed, the conditional distribution of the data is fully described by the number of samples $n$ and the sample mean $\bar{x}_n$. Hence, these two quantities are used as sufficient statistic. The update of the sample mean is given by
\begin{align*}
 \bar{x}_{n+1}=\xi(\bar{x}_n,x_{n+1}) = \frac{1}{n+1}( n\bar{x}_n + x_{n+1})\,.
\end{align*}
The variance of the likelihood is chosen to be $\sigma^2=4$, the a priori probabilities of the two hypotheses are $p(\Hyp_0)=p(\Hyp_1)=0.5$ and the parameters of the Gamma distributions are $a=1.7$ and $b=1$.

As mentioned in \cref{lemma:boundaryNullSet}, the posterior probabilities have to be continuous random variables for the boundary of the stopping region to be a P-null set. In this example, the likelihood is continuous in the sufficient statistic $\bar{x}_n$ and in the random mean $\mu$, which follows itself a continuous distribution. Hence, the posterior distribution of the mean is also continuous in the sufficient statistic $\bar{x}_n$ so that the posterior probabilities are continuous random variables. Therefore, the boundary of the stopping region is a P-null set according to \cref{lemma:boundaryNullSet}.

The aim is to design an optimal sequential test which uses at most $100$ samples and with type I and type II error probabilities not exceeding $0.05$ and $0.025$, respectively. Moreover, the \ac{MSE} under $\Hyp_0$ and $\Hyp_1$ should be upper bounded by $0.35$ and $0.2$, respectively.

In order to solve the optimization problem numerically, the state space of the observations $\stateSpaceObs$ is discretized on $[-15,15]$ with $6000$ grid points. The state space of the parameter $\stateSpaceParam$ is discretized on $[-12,12]$ using $4800$ grid points and the state space of the sufficient statistic $\stateSpaceStat$ is discretized on $[-8,8]$ using $1600$ points. All grids are regular. The resulting update matrices $\boldsymbol\xi_n$ are then of dimension $1600\times6000$. Since, depending on the actual statistic $\bar{x}$ and the time instant $n$, the posterior predictive $p(\xnew\given\stat{n})$ can become very sharp and many grid points are required to represent it properly.

Although the choice of the initial statistic $\bar{x}_0$ is arbitrary, because $\bar{x}_1=x_1$ regardless of the choice of $\bar{x}_0$, the initial statistic is set to $\bar{x}_0=0$ for the optimization problem. The regularization constant is set to $\regConst=5\cdot10^{-4}$, which is $\frac{1}{50}$ of the smallest constraint.

The optimal weights obtained via \cref{eq:jointDetEstLP} are $C^\star_\errConstr\approx[125.1, 235.3, 14.9, 74.3]$ and the expected run-length of the test is given by $13.80$ samples.

In \cref{img:RegionsRandomMeanOrig}, the different regions of the optimal test as well as the regions of the corresponding \ac{SPRT} are shown. For $2\leq \bar{x}_n\leq8$ and $n\leq16$, the complement of the stopping region $\stopRegionComplWoIdx$ is dominated by the constraint on the estimation accuracy under $\Hyp_1$. A high arithmetic mean results in a high certainty that $\Hyp_1$ is true but, leads to a very broad posterior distribution for $\mu$  for small sample sizes, resulting in an inaccurate estimate. Since the constraints on the estimation error under $\Hyp_0$ is not as strict as under $\Hyp_1$, this effect is not as pronounced under $\Hyp_0$ as under $\Hyp_1$, but still visible. The dashed line, which represents the thresholds of the \ac{SPRT} does not show this effect at all. A very small or very large arithmetic mean immediately stops the \ac{SPRT}, even for small sample sizes.
For moderate to large sample sizes ($n\geq20$) the stopping region of the optimal test is mainly influenced by the uncertainty about the true hypothesis. 
However, even on regions where the detection constraints dominate, the corridor in which the test should continue is much broader for the \ac{SPRT} than for the jointly optimal  scheme.

\begin{figure}[!t]
    \centering
    \includegraphics{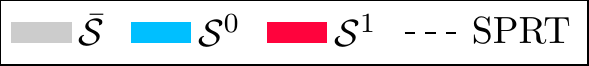}\\
    \subfigure[Original problem]{\includegraphics{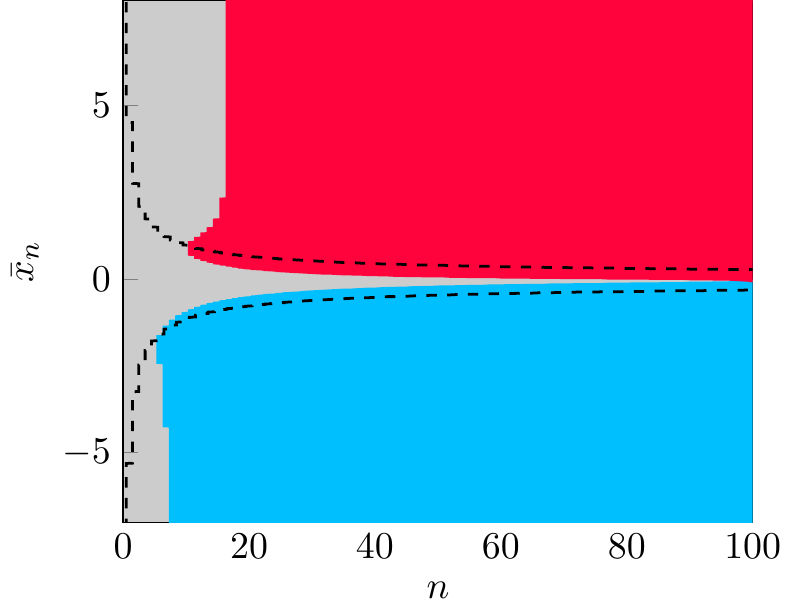}\label{img:RegionsRandomMeanOrig}}
    \hfil
    \subfigure[Relaxed estimation constraints under $\Hyp_1$]{\includegraphics{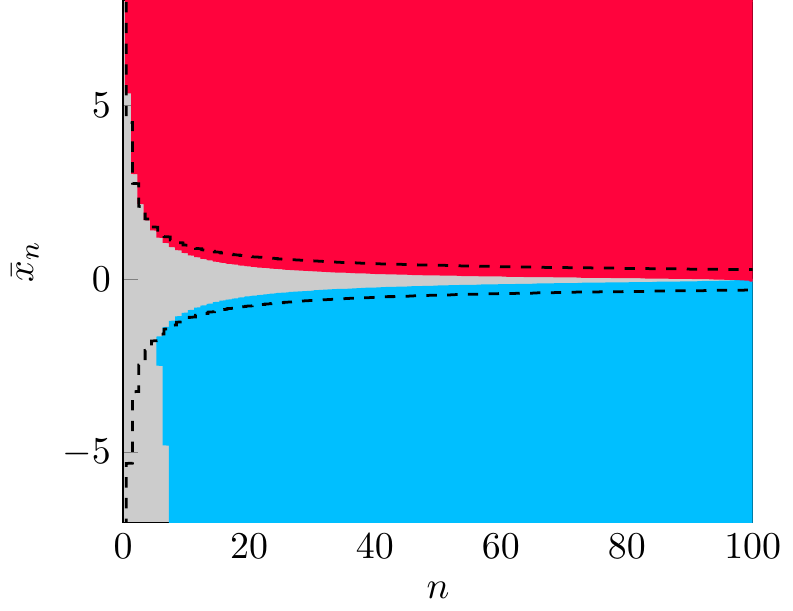}\label{img:RegionsRandomMeanRelaxed}}
    \caption{Different regions of the optimal shift-in-mean test. The region $\stopRegionComplWoIdx$ denotes the complement of the stopping region, the region $\stopRegionDecWoIdx{1}$ is the region where the test stops and decides in favor of $\Hyp_1$, the region $\stopRegionDecWoIdx{0}$ is the region where the test stops and decides in favor of $\Hyp_0$. The dashed lines are the thresholds of the \ac{SPRT}.}
    \label{img:RegionsRandomMean}
\end{figure}

\begin{table}[!t]
 \centering
  \caption{Shift-in-mean test: Constraints and simulation results.}
 \subtable[Detection and estimation errors]{
  \label{tbl:ResultsRandomMeanErr}
  \includegraphics{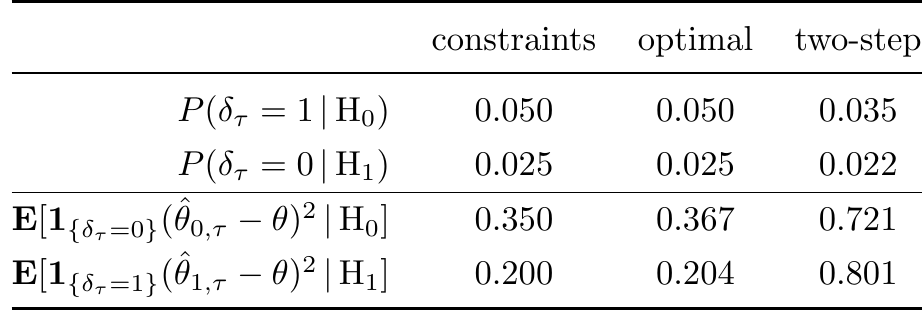}
 }\\
 \subtable[Run-lengths: The second column contains the expected run-length of the optimal scheme obtained as the output of the \ac{LP}.]{
 \label{tbl:ResultsRandomMeanRL}
  \includegraphics{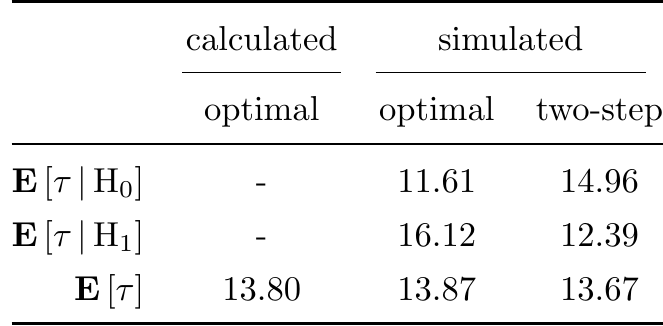}
 }
\end{table}

In \cref{tbl:ResultsRandomMeanErr}, the constraints and the simulation results for both methods, the optimal one and the two-step procedure, are summarized. It can be seen that the optimal sequential joint detection and estimation scheme hits the target error probabilities exactly, whereas the \ac{SPRT} has lower empirical error probabilities. On the other hand, the empirical estimation errors of the optimal scheme are very close to the target ones while the two-stage methods far exceeds the requirements. The difference of the empirical estimation errors of the optimal scheme and the constraints can be explained by numerical inaccuracies and the uncertainty of the Monte Carlo simulation.
Moreover, it can be seen from \cref{tbl:ResultsRandomMeanRL}  that the average run-length of the optimal scheme is very close to the expected run-length that is obtained as an output of the test design. The average run-length of the \ac{SPRT} is slightly smaller, but at the cost of much higher estimation errors. It should be emphasized that the \ac{SPRT} as well as the optimal scheme are truncated at $N=100$ samples. The \ac{SPRT} reaches this point in around $3.4\,\%$ of the cases, whereas this point is only reached in $0.004\,\%$ of the cases for the optimal scheme. Using a regular, non truncated \ac{SPRT} would hence increase the expected run-length and further increase the performance gap of the optimal and the two-step scheme.

In order the show the effect when a detection constraint dominates the corresponding estimation constraint, the estimation constraint under $\Hyp_1$ in the previous example is relaxed to $\errConstr_3=0.9$ while the rest of the setup stays unchanged. 
The optimal weights obtained via \ac{LP} are $C_\errConstr^\star \approx [175.5, 257.8, 14.6, 0]$. As one can see, the optimal weight corresponding to the estimation error under $\Hyp_1$ is set to zero.
This means that the problem is equivalent to a pure detection problem under $\Hyp_1$.
Although one might think that the remaining coefficients are equivalent to the ones of the original problem, this is not the case. All three remaining coefficients have changed in a non-negligible manner.
Hence, even if the optimal coefficients for one problem are known, one cannot use them to approximate the coefficients for the seemingly similar problem. The different regions of the relaxed problem are depicted in \cref{img:RegionsRandomMeanRelaxed}.
Since the estimation constraint under $\Hyp_1$ does not bind anymore, the region $\stopRegionDecWoIdx{1}$ approaches the one of the \ac{SPRT}. By comparing \cref{img:RegionsRandomMeanOrig} and \cref{img:RegionsRandomMeanRelaxed}, it can be seen which subsets of $\stopRegionDecWoIdx{1}$ of the original problem are dominated by the estimation error and ones are dominated by the detection error.

The relaxed problem is also validated by Monte Carlo simulations. Again the \ac{SPRT} followed by a \ac{MMSE} estimator is used for benchmark purposes. The simulation results are summarized in \cref{tbl:ResultsRandomMeanRelaxEstH1Err,tbl:ResultsRandomMeanRelaxEstH1RL}. It can be seen that for the optimal test the nominal error probabilities are hit exactly and the estimation constraint under $\Hyp_0$ is close to the nominal error. The deviation is again caused by numerical inaccuracies. The two-step procedure fulfills the constraints on the detection errors, but exceeds the constraint on the estimation error under $\Hyp_0$.
The relaxed estimation constraint under $\Hyp_1$ is fulfilled for the optimal sequential joint detection and estimation scheme and for the \ac{SPRT}. The optimal scheme has an even lower estimation error under $\Hyp_1$ than the required one while at the same time it uses fewer samples under $\Hyp_1$ than the \ac{SPRT}.

\begin{table}[!t]
 \centering
\caption{Shift-in-mean test with relaxed estimation constraint under $\Hyp_1$: Constraints and simulation results.}
 \subtable[Detection and estimation errors]{
 \label{tbl:ResultsRandomMeanRelaxEstH1Err}
 \includegraphics{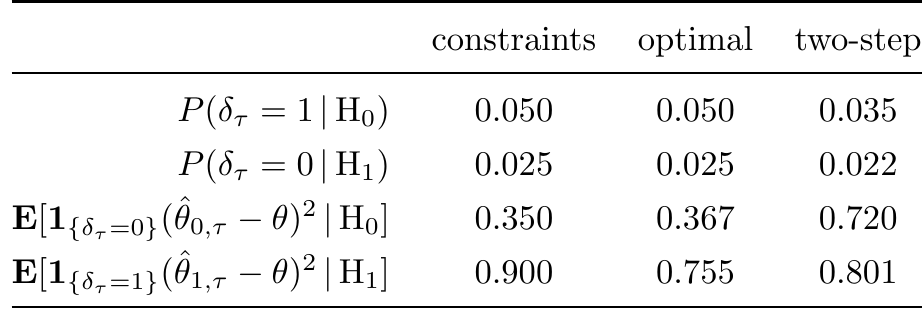}
 }\\
  \subtable[Run-lengths: The second column contains the expected run-length of the optimal scheme obtained as the output of the \ac{LP}.]{
  \label{tbl:ResultsRandomMeanRelaxEstH1RL}
\includegraphics{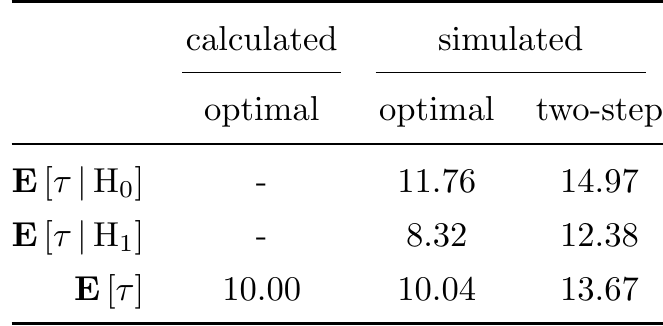}
  }
\end{table}

\subsection{Shift-in-Variance Test}\label{sec:exRandomVariance}
As a second example, a Gaussian likelihood with a random variance is used. In this example, the mean is assumed to be zero under both hypotheses. 
This shift-in-variance formulation is often used when detecting a zero-mean signal in noise.

The distribution of the data conditioned on the variance is Gaussian under both hypotheses and can hence be written as
\begin{align*}
 p(\obs\given\sigma_i^2,\Hyp_i) = & \prod_{l=1}^n p(x_{l}\given\sigma^2_i,\Hyp_i) =  (2\pi\sigma_i^2)^{-\frac{n}{2}} \exp\left( \frac{-ns_n^2}{2\sigma_i^2}\right)\,,
\end{align*}
where
\begin{align*}
 s_n^2 = \frac{1}{n}\sum_{l=1}^n x_l^2\,.
\end{align*}
It can bee seen that the likelihood is fully described by $n$ and $s_n^2$. Therefore, these quantities are used as a sufficient statistic. The corresponding transition kernel is given by $s^2_{n+1}=\frac{1}{n+1}(ns_{n}^2 + x_{n+1}^2)$.

Since, for this example, the likelihood is a continuous function of the random parameter $\sigma_i^2$ and of the sufficient statistic $s_n^2$, one can, using the same line of arguments as in the previous example, show that the boundary of the stopping region is a P-null set according to \cref{lemma:boundaryNullSet}.

Under $\Hyp_0$, i.e., no signal is present, the variance of the observations is assumed to be uniformly distributed on $[\sigma_{0,\text{min}}^2, \sigma_{0,\text{max}}^2]$. Under the alternative, i.e., a signal is present, the variance follows a shifted Gamma distribution with shape and scale parameter $a$ and $b$, respectively. Mathematically, the two hypotheses can be written as
\begin{align}
 \begin{split}
  \Hyp_0: & \quad \SeqDataRV\given\sigma_0^2 \sim \norm{0}{\sigma_0^2}, \quad \sigma_0^2\sim \unif(\sigma_{0,\text{min}}^2,\sigma_{0,\text{max}}^2) \\
  \Hyp_1: & \quad \SeqDataRV\given\sigma_1^2 \sim \norm{0}{\sigma_1^2}, \quad \sigma_1^2-\sigma_{1,\text{min}}^2\sim \Gam(a,b)
 \end{split}
\end{align}
with $\sigma_{0,\text{min}}^2< \sigma_{0,\text{max}}^2 < \sigma_{1,\text{min}}^2$.
In this scenario, the interval for the uniform distribution under $\Hyp_0$ is set to $[0.1, 1]$. Moreover, the parameter $\sigma_{1,\text{min}}^2$ is set to $1.3$ and the shape and scale parameters are set to $1.7$ and $0.5$, respectively.
To ensure that the hypotheses are separable, even after discretization, $\sigma_{1,\text{min}}^2$ is slightly larger than $\sigma_{0,\text{max}}^2$. Both hypotheses have equal prior probability, i.e., $p(\Hyp_0)=p(\Hyp_1)=0.5$.

In order to solve the problem numerically, the state space of the parameter $\stateSpaceParam$ is discretized on $[0.01, 60]$ using $9001$ grid points. The state space of the observations $\stateSpaceObs$ is discretized on $[-20,20]$ with $6000$ grid points and the state space of the sufficient statistic $\stateSpaceStat$ is discretized on $[0, 25]$ using $2100$ grid points. The resulting update matrices $\boldsymbol\xi_n$ are then of dimension $2100\times6000$, which is even larger than in the previous example. This shows the limitations of the straightforward discretization of the continuous spaces.
The resulting sequential test should use at most $N=100$ samples while having target error probabilities of $0.05$ and an \ac{MSE} of $0.025$ and $0.25$ under $\Hyp_0$ and $\Hyp_1$, respectively. To avoid numerical issues with the \ac{LP}, the regularized problem is solved with a regularization constant of $\regConst=5\cdot10^{-5}$.

\begin{figure*}[!t]
    \centering
    \includegraphics{regionLegend}\\
    \subfigure[Original problem, $\errConstr_3=0.25$]{\includegraphics{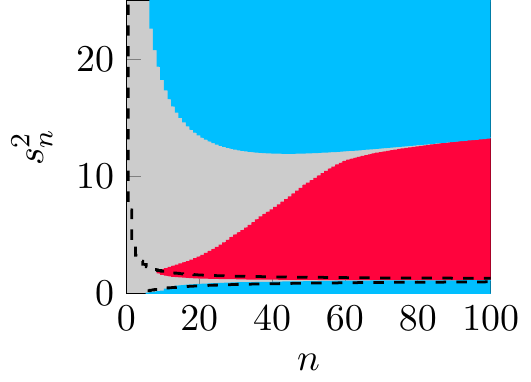}\label{img:RegionsRandomVarOrig}}
     \hfil
    \subfigure[Relaxed constraint, $\errConstr_3=0.3$]{\includegraphics{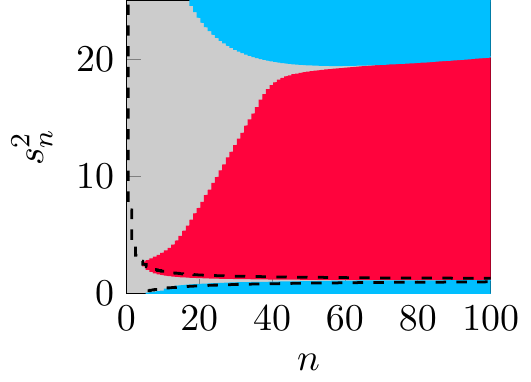}\label{img:RegionsRandomVarMedEstConstr}}
    \hfil
    \subfigure[Relaxed constraint, $\errConstr_3=0.5$]{\includegraphics{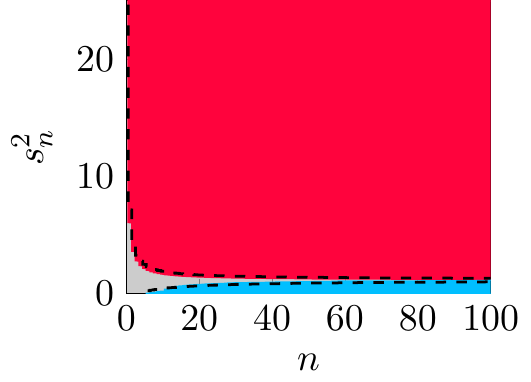}\label{img:RegionsRandomVarWeakEstConstr}}
    \caption{Different regions of the optimal shift-in-mean test. The region $\stopRegionComplWoIdx$ denotes the complement of the stopping region, the region $\stopRegionDecWoIdx{1}$ is the region where the test stops and decides in favor of $\Hyp_1$, the region $\stopRegionDecWoIdx{0}$ is the region where the test stops and decides in favor of $\Hyp_0$. The dashed lines are the thresholds of the \ac{SPRT}.}
    \label{img:RegionsRandomVar}
\end{figure*}

While the stopping and decision regions for the random mean example can be explained in an intuitive way, this is not the case for the shift-in-variance test. By inspecting the different regions of the shift-in-variance test, which are depicted in \cref{img:RegionsRandomVarOrig}, one can see that the stopping region of the test is split into three different regions. Intuitively, one would expect two different regions, one corresponding to a small variance and, hence, a small value of $s_n^2$ and one corresponding to a large variance and, hence, a large value of $s_n^2$.
Since a small variance is more likely for a small value of $s_n^2$ and a small value of $s_n^2$ leads to a sharp posterior distribution, the boundary of the lower blue region in \cref{img:RegionsRandomVarOrig} is close to the Wald threshold.
Also, the complement of the stopping region is located between the lower blue region and the decision region in favor of $\Hyp_1$.
This corridor is close to the Wald thresholds and decreases with an increasing number of samples, as one would expect. For a sufficient statistic $s_n^2>2.5$, there exists another corridor in which the test has to continue due to the estimation uncertainty. The lower boundary (in terms of the sufficient statistic) of the corridor increases with an increasing number of samples since the posterior variance decreases. For a sufficient statistic of $s_n^2>12$, the complement of the stopping region changes to a stopping region in which the test decides in favor of $\Hyp_0$. This is not intuitively clear since the certainty about $\Hyp_0$ should increase with an increasing statistic. Although the certainty of $\Hyp_1$ increases with increasing $s_n^2$, the estimation error also increases.
Our explanation for this effect is as follows: Since we consider a sequential joint detection and estimation problem, it would take very long until the joint cost function would decrease to a sufficient small value. As the aim is to minimize the expected run-length of the scheme while fulfilling constraints on the detection and estimation performance, it is cheaper to make a wrong decision than to wait too long until the estimation uncertainty decreases. Moreover, a truncated sequential scheme is considered, which means that the time which would be needed to reduce to estimation error to an appropriate level is even larger than the maximum number of allowed samples in this case.

\begin{table}[!t]
 \centering
 \caption{Shift-in-variance test: Constraints and simulation results.}
 \subtable[Detection and estimation errors]{
 \label{tbl:ResultsRandomVarErr}
 \includegraphics{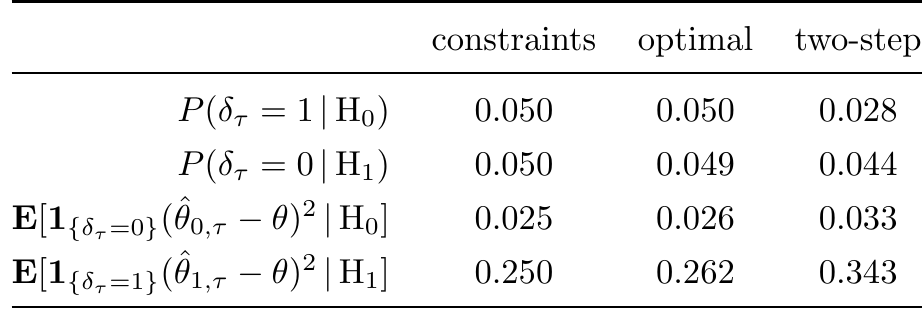}
 }\\
  \subtable[Run-lengths: The second column contains the expected run-length of the optimal scheme obtained as the output of the \ac{LP}.]{
  \label{tbl:ResultsRandomVarRL}
  \includegraphics{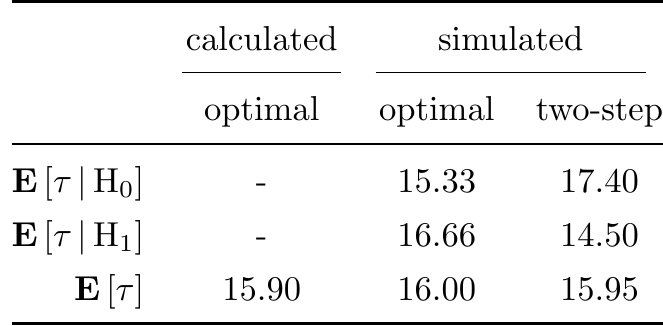}
 }
\end{table}

The empirical results of the optimal scheme as well as for the two-step procedure are summarized in \cref{tbl:ResultsRandomVarErr,tbl:ResultsRandomVarRL}. One can see that the constraints on the error probabilities are---within the range of the Monte Carlo uncertainty---hit exactly by the optimal scheme. For the two-step procedure, the empirical results are, in part, much lower than the constraints. The estimation constraints are almost fulfilled exactly for the optimal scheme, whereas they are exceeded in a non-negligible manner by the benchmark approach. 
Surprisingly, both empirical run-lengths are very close to the run-length resulting from the design procedure. One would expect that the average run-length of the optimal scheme is much smaller than the one of the non-optimal one. This can have two reasons.
First, the relatively small average run-length is due to the fact that the constraints of the \ac{MSE} are violated. Moreover, the optimal as well as the non-optimal schemes are truncated. For the optimal scheme, the maximum number of samples is needed in $0.004\,\%$ of the Monte Carlo runs, which means that $N$ samples are enough for almost all runs. 
On the other hand, in $1.4\,\%$ of the Monte Carlo runs the \ac{SPRT} reached the maximum number of samples. Hence, increasing the maximum number of samples for both, the optimal and the non-optimal scheme would increase the average run-length of the two-step procedure, whereas the average run-length of the optimal test would almost stay the same.

To demonstrate the influence of the estimation constraint $\errConstr_3$ on the region $\stopRegionDecWoIdx{0}$ for large $s_n^2$, the estimation constraint under $\Hyp_1$ is relaxed to  $\errConstr_3=0.3$ in a first and to $\errConstr_3=0.5$ in a second step. The different regions of the two tests with relaxed estimation constraints are depicted in \cref{img:RegionsRandomVarMedEstConstr,img:RegionsRandomVarWeakEstConstr}. One can see that for large $s_n^2$ the region $\stopRegionDecWoIdx{0}$ gets smaller for the first relaxed test while it disappears completely for the second relaxed test. The results of the Monte Carlo simulations for the problems with relaxed estimation constraints can be found in \cref{app:relaxedShiftInVariance}.

Although this problem could be solved numerically in a sufficiently accurate manner using a straightforward discretization of the continuous spaces, the limits of this approach are clearly visible.
Depending on the likelihood, the prior as well as on the maximum number of samples $N$, the posterior distribution $p(\param\given\obsIdx{n})$ and the posterior predictive $p(\xnew\given\obsIdx{n})$ can become very sharp.
Furthermore, the corridor in which the test continues can become very small, especially for large $N$. Sticking to this discretization approach, one would need to deal with very fine grids and, as a result, very high dimensional matrices \updateProbMeasureMtx.

To overcome this issue, different representations of the cost functions have to be used, such as a discretization using non-regular grids or an approximation with basis functions.
 \section{Conclusions}
Based on very mild assumptions, a flexible framework for optimal sequential joint detection and estimation was developed. The resulting tests minimize the average number of samples while fulfilling constraints on the detection and estimation performance.
The resulting cost function was analyzed and a connection between the weights of the individual errors and the performance measures of the test were established.
These results were then used to select a set of optimal weights via Linear Programming, such that all constraints are fulfilled and the resulting scheme is of minimum average run-length.
This automatic selection of the optimal coefficients overcomes the problem that the choice of these coefficients is left to the practitioner in existing approaches.
Although the results are only presented for the two hypotheses case and for scalar random parameters, the proposed scheme can, in principle, be extended to multiple hypotheses and/or multiple random parameters by adopting the loss functions.
To validate the proposed approach, two numerical examples, which are widely used in various applications, were provided. The designed tests were validated using Monte Carlo simulations. Based on these examples, the behavior of the sequential test for different detection/estimation error constraints is shown. By comparing the optimal sequential scheme with an \ac{SPRT} followed by an \ac{MMSE} estimator, the gap in performance when using sub-optimal schemes was shown.

Though for more complex scenarios, like hierarchical Bayes models \citep{Makni2008} or high-dimensional problems,  a straightforward discretization of the problem may not be sufficient anymore and more advanced numerical methods may have to be used to represent the densities, the theoretical results of this work remain valid.
 
\appendix
\crefalias{section}{appsec}
 \section{Proof of Theorem \ref{theo:solutionOptimalStopping}}\label{app:proofSolutionOptimalStopping}
The proof of \cref{theo:solutionOptimalStopping} follows from the fundamental results of optimal stopping theory. We follow here the proofs stated in \citet{fauss2015linear,novikov2009optimal,poor2009quickest}. 
In this proof, a truncated optimal stopping problem with finite horizon $N\geq1$ is considered with the cost $c_n$ for stopping at time $n$. Let 
\begin{align*}
 V_n = \min\{c_n, \E\left[V_{n+1}\given\stat{n}\right]\}\,,\quad n<N
\end{align*}
be the cost for stopping at the optimal time instant between $n$ and $N$ with basis $V_N=c_N$. The minimal cost is defined by
\begin{align*}
 V^\star = V_0\,.
\end{align*}
The instantaneous cost $c_n$ is obtained from \cref{eq:optStoppingFormulation} and is then given by
\begin{align*}
 c_n = n + \costS(\stat{n})\,.
\end{align*}
The following proof is performed by induction. Assume that for some $n<N$ the optimal cost function has the form $ V_n = n + \rho_n(\stat{n})$. It has to be shown that this relation holds also for $n-1$ if it holds for $n$.
The cost function at time $n-1$ can be written as
\begin{align*}
 V_{n-1} & = \min\left\{c_{n-1}, \E\left[V_{n}\given\stat{n-1}\right]\right\} \\
	 & = \min\left\{n-1 + \costS(\stat{n-1}), \E\left[n+\rho_n(\stat{n})\given\stat{n-1}\right]\right\} \\
	 & = n-1 + \min\left\{\costS(\stat{n-1}), 1+\E\left[\rho_n(\stat{n})\given\stat{n-1}\right]\right\} \\
	 & = n-1 + \rho_{n-1}(\stat{n-1})\,,
\end{align*}
where the expected value
\begin{align*}
 \E\left[\rho_n(\stat{n})\given\stat{n-1}\right] & = 1 + \int \rho_n(\stat{n}) p(\obsScalar{n}\given\stat{n-1}) \dd\obsScalar{n} \\
						 & = 1 + \int \rho_n(\xi(\stat{n-1},\obsScalar{n})) p(\obsScalar{n}\given\stat{n-1}) \dd\obsScalar{n}
\end{align*}
is a function of $\stat{n-1}$ only. Thus, the induction step holds.
The induction basis is given by $ V_N = N + g(\stat{N})$.
Hence, the minimum cost of the truncated optimal stopping problem is given by $V^\star = V_0=\rho_0(\stat{0})$.

\section{Proof of Corollary \ref{corr:H-integrable}}\label{app:proof_corr_H-integrable}
In what follows, it is shown that $\rho_{n+1}$ is \updateProbMeasure-integrable for all $\stat{n}$ and for all $0\leq n < N$. From the definition of $\rho_n$, one can directly see that
\begin{align*}
 \int\rho_{n+1}\dd\updateProbMeasure \leq \int g \dd\updateProbMeasure \leq \int  \auxVarCostOpt{i,n+1}\dd\updateProbMeasure\,.
\end{align*}
By definition of $\auxVarCostOpt{i,n}$, $i\in\{0,1\}$, one obtains
\begin{align*}
 \int\auxVarCostOpt{i,n+1}\dd\updateProbMeasure = & \int \left(C_{1-i}p(\Hyp_{1-i}\given\xi(\stat{n},\xnew)) \, + C_{2+i}p(\Hyp_i\given \xi(\stat{n},\xnew))\,\Var\bigl[\paramRV\given\xi(\stat{n},\xnew),\Hyp_i\bigr]\right)p(\xnew\given\stat{n})\dd\xnew \\
 = & \int C_{1-i}p(\Hyp_{1-i}\given\stat{n},\xnew)p(\xnew\given\stat{n}) \, + C_{2+i}p(\Hyp_i\given \stat{n},\xnew)\,\Var\bigl[\paramRV\given\stat{n},\xnew,\Hyp_i\bigr]p(\xnew\given\stat{n})\dd\xnew\,.
\end{align*}
The first term of the integral on the right hand side of the previous equation can be expressed as
\begin{align*}
 \int C_{1-i}p(H_{1-i}\given\stat{n},\xnew) p(\xnew\given\stat{n})\dd\xnew = \int C_{1-i}p(H_{1-i},\xnew\given\stat{n})\dd\xnew = C_{1-i}p(H_{1-i}\given\stat{n})\,,
\end{align*}
whereas the second term can be written as
\begin{align*}
 & \int C_{2+i}p(\Hyp_i\given \stat{n},\xnew)\,\Var\bigl[\paramRV\given\stat{n},\xnew,\Hyp_i\bigr]p(\xnew\given\stat{n})\dd\xnew = C_{2+i} p (\Hyp_i\given\stat{n}) \int \Var\bigl[\paramRV\given\stat{n},\xnew,\Hyp_i\bigr] p(\xnew\given \stat{n},\Hyp_i) \dd \xnew\,.
\end{align*}
From now on, the following notation is used
\begin{align*}
 \E_{\xnew \sim p(\xnew\given \stat{n}, \Hyp_i)} \left[ \Var\bigl[\paramRV\given\stat{n},\xnew,\Hyp_i \bigr]\right] := \int \Var\bigl[\paramRV\given\stat{n},\xnew,\Hyp_i\bigr] p(\xnew\given \stat{n},\Hyp_i) \dd \xnew\,.
\end{align*}
To calculate the conditional expectation of the posterior variance, the conditional expectation of the posterior mean has to be calculated first. It is given by
\begin{align*}
 \bar{\param} := & \E_{\xnew \sim p(\xnew\given \stat{n}, \Hyp_i)} \left[ \E\bigl[\paramRV\given\stat{n},\xnew,\Hyp_i \bigr]\right] \\
	       = & \int \int \param p(\param \given \stat{n}, \xnew, \Hyp_i ) \dd \param p(\xnew\given \stat{n}, \Hyp_i) \dd \xnew \\
	       = & \int \int \param p(\param , \xnew \given \stat{n},  \Hyp_i )  \dd \xnew \dd \param = \E\bigl[ \param \given \stat{n}, \Hyp_i \bigr]\,.
\end{align*}
The conditional expectation of the posterior variance is hence given by
\begin{align*}
  \E_{\xnew \sim p(\xnew\given \stat{n}, \Hyp_i)} \left[ \Var\bigl[\paramRV\given\stat{n},\xnew,\Hyp_i \bigr]\right] = & \int \int \left( \param - \bar{\param} \right )^2 p(\param \given \stat{n},\xnew,\Hyp_i) \dd\theta p(\xnew \given \stat{n},\Hyp_i) \dd\xnew \\
  = & \int \int \left( \param - \bar{\param} \right)^2 p(\param,\xnew\given\stat{n},\Hyp_i) \dd\xnew\dd\param\\
  = &  \int \left( \param - \bar{\param}\right)^2 p(\param\given\stat{n},\Hyp_i) \dd\param =  \Var\bigl[ \paramRV \given \stat{n}, \Hyp_i \bigr]\,.
\end{align*}
With these results, one can conclude that
\begin{align*}
 \int\auxVarCostOpt{i,n+1}\dd\updateProbMeasure = & C_{1-i}p(\Hyp_{1-i}\given\stat{n}) + C_{2+i}p(\Hyp_i\given\stat{n})\Var\bigl[\paramRV\given\stat{n},\Hyp_i\bigr] < \infty\,,
\end{align*}
since the constants are assumed to be finite and the posterior variance is finite, following Assumption \ref{ass:finiteMoments}.
Therefore, the integral of the cost function is bounded by
\begin{align*}
 \int\rho_{n+1}\dd\updateProbMeasure \leq \int g \dd\updateProbMeasure \leq  \int\auxVarCostOpt{i,n+1}\dd\updateProbMeasure = C_{1-i}p(\Hyp_{1-i}\given\stat{n}) + C_{2+i}p(\Hyp_i\given\stat{n})\Var\bigl[\paramRV\given\stat{n},\Hyp_i\bigr] < \infty\,,
\end{align*}
so that $\rho_{n+1}$ is \updateProbMeasure-integrable for all $\stat{n}$ and all $0\leq n < N$.

\section{Proof of Lemma \ref{lem:rhoBounded}} \label{app:proofLemRhoBounded}
The proof of \cref{lem:rhoBounded} is only outlined for the upper bound, since the lower bound can be shown analogously.
First of all, it has to be mentioned that the function relating $\postProbwo_{n+1}$ and $\postProbwo_n$ is linear in $\postProbwo_n$, i.e.,
\begin{align}\label{eq:transPostProbLin}
 \postProb[n+1]{i} = \transkernelPostVar{\postProb{i}}{x_{n+1}}{\stat{n}}  & =  \postProb{i} \frac{p(x_{n+1}\given\stat{n},\Hyp_i)}{p(x_{n+1}\given\stat{n})}\,.
\end{align}
The following proof is performed by induction. Let $a^\circ=\max\{a_0,a_1,1\}$ and assume that \cref{lem:rhoBounded} holds for some $n<N$. For $n-1$, by applying \cref{eq:transPostProbLin} it holds that
\begin{align*}
 \tilde \rho_{n-1}(a\cdot\postProbwo_{n-1}, \stat{n-1}) & =  \min\left\{ \tilde g(\stat{n-1}, a\cdot\postProbwo_{n-1}), 1 + \!\int \tilde\rho_{n}\bigl(\transkernelPostVar{a\cdot \postProbwo_{n-1}}{\xnew}{\stat{n-1}}, \xi_{\stat{n-1}}(\xnew) \bigr) p(\xnew\given\stat{n-1}) \dd \xnew \right\} \\
 & = \min\left\{ \tilde g(\stat{n-1}, a\cdot\postProbwo_{n-1}), 1 +\!\!\int\!\tilde\rho_{n}\bigl(a\cdot\transkernelPostVar{\postProbwo_{n-1}}{\xnew}{\stat{n-1}}, \xi_{\stat{n-1}}(\xnew) \bigr) p(\xnew\given\stat{n-1}) \dd \xnew \right\}\,.
 \end{align*}
 Applying \cref{lem:gBounded,lem:rhoBounded} yields
 \begin{align*}
  \tilde \rho_{n-1}(a\cdot\postProbwo_{n-1}, \stat{n-1}) & \leq \min\left\{ a^\circ \tilde g(\stat{n-1}, \postProbwo_{n-1}), 1 + \int \tilde\rho_{n}\bigl(a\transkernelPostVar{\postProbwo_{n-1}}{\xnew}{\stat{n-1}}, \xi_{\stat{n-1}}(\xnew) \bigr) p(\xnew\given\stat{n-1}) \dd \xnew \right\} \\
  & \leq \min\left\{ a^\circ \tilde g(\stat{n-1}, \postProbwo_{n-1}), 1 + a^\circ\!\int\! \tilde\rho_{n}\bigl(\transkernelPostVar{\postProbwo_{n-1}}{\xnew}{\stat{n-1}}, \xi_{\stat{n-1}}(\xnew) \bigr) p(\xnew\given\stat{n-1}) \dd \xnew \right\}\,.
  \end{align*}
We can further state that
  \begin{align*}
  & \phantom{\leq}\min\left\{ a^\circ \tilde g(\stat{n-1}, \postProbwo_{n-1}), 1 + a^\circ\!\int\! \tilde\rho_{n}\bigl(\transkernelPostVar{\postProbwo_{n-1}}{\xnew}{\stat{n-1}}, \xi_{\stat{n-1}}(\xnew) \bigr) p(\xnew\given\stat{n-1}) \dd \xnew \right\} \\
   &  \leq  \min\left\{ a^\circ \tilde g(\stat{n-1}, \postProbwo_{n-1}), a^\circ + a^\circ\!\!\! \int\!\! \tilde\rho_{n}\bigl(\transkernelPostVar{\postProbwo_{n-1}}{\xnew}{\stat{n-1}}, \xi_{\stat{n-1}}(\xnew) \bigr) p(\xnew\given\stat{n-1}) \dd \xnew \right\} \\
   &  =  a^\circ\min\left\{ \tilde g(\stat{n-1}, \postProbwo_{n-1}), 1 +  \!\int\!\tilde\rho_{n}\bigl(\transkernelPostVar{\postProbwo_{n-1}}{\xnew}{\stat{n-1}}, \xi_{\stat{n-1}}(\xnew) \bigr) p(\xnew\given\stat{n-1}) \dd \xnew \right\}\,.
\end{align*}
Based on this, we can conclude that
\begin{align*}
  \tilde \rho_{n-1}(a\cdot\postProbwo_{n-1}, \stat{n-1}) \leq a^\circ\min\left\{ \tilde g(\stat{n-1}, \postProbwo_{n-1}), 1 +  \!\int\!\tilde\rho_{n}\bigl(\transkernelPostVar{\postProbwo_{n-1}}{\xnew}{\stat{n-1}}, \xi_{\stat{n-1}}(\xnew) \bigr) p(\xnew\given\stat{n-1}) \dd \xnew \right\}\,.
\end{align*}
 The induction basis is given by $n=N$, where it holds that
\begin{align*}
 \tilde \rho_{N}(\stat{N},a\cdot\postProbwo_{N}) = \tilde g(\stat{N},a\postProbwo_N) \leq a^\circ \tilde g(\stat{N},\postProbwo_N) =  a^\circ \tilde \rho_{N}(\stat{N}, \postProbwo_N)\,
\end{align*}
due to \cref{lem:gBounded}}. This concludes the proof.

\section{Proof of Theorem \ref{theo:costDerivatives}}\label{app:proofCostDerivatives}
Let
\begin{align*}
 \rho^\prime_{n,C_i}(\stat{n}) = & \frac{\partial \rho_{n}(\stat{n})}{\partial C_i}\\
		     = & \left\{
		     \begin{array}{ll}
                     g^\prime_{C_i}(\stat{n}) & \text{ for } \stat{n} \in \stopRegion \\
                     \frac{\partial}{\partial C_i} \int \rho_{n+1}\dd\updateProbMeasure   & \text{ for }\stat{n} \in \stopRegionCompl
		     \end{array}
		     \right.
\end{align*}
denote the derivatives of the cost function with respect to $C_i$, which are defined everywhere on $E_\statWOa \setminus \stopRegionBound$.
Assume that the order of integration and differentiation can be interchanged, i.e.,
\begin{align}
 \frac{\partial }{\partial C_i} \int \rho_{n+1}\dd\updateProbMeasure & = \int \rho^\prime_{n+1,C_i}\dd\updateProbMeasure \label{eq:interchDiffInt}\,.
\end{align}
The validity of this assumption is shown later on. In general, the derivative of the cost function can be expressed as
\begin{align}
  \rho^\prime_{n,C_i}(\stat{n}) = \indd{\stopRegion} g^\prime_{C_i}(\stat{n}) + \indd{\stopRegionCompl} \int \rho^\prime_{n+1,C_i}\dd\updateProbMeasure \label{eq:derivateRho}\,.
\end{align}
First, the derivatives with respect to the coefficients corresponding to the detection, $C_{\{0,1\}}$,  are calculated.
The derivative of the cost function on $\stopRegion$ given by
\begin{align}
 g^\prime_{n,C_i}(\stat{n}) = & \indd{\stopRegionDec{n}{1-i}}p(\Hyp_i\given\stat{n}) \nonumber \\
		  = & \indd{\stopRegionDec{n}{1-i}} p(\Hyp_i)\frac{p(\stat{n}\given\Hyp_i)}{p(\stat{n})} \,. \label{eq:derCostStopDet}
\end{align}
By using the property $p(\stat{n}\given\stat{n-1}) = p(x_n\given\stat{n-1})$ and defining the short-hand notation
\begin{align}\label{eq:appShortHandz}
 z^i_{n} = \frac{p(\stat{n}\given\Hyp_i)}{p(\stat{n})}\;,
\end{align}
\cref{eq:derCostStopDet} can be written as
\begin{align}
 g^\prime_{n,C_i}(\stat{n}) = & \indd{\stopRegionDec{n}{1-i}} p(\Hyp_i)\frac{p(\stat{n}\given \stat{n-1},\Hyp_i)p(\stat{n-1}\given\Hyp_i) }	{p(\stat{n-1})p(\stat{n}\given\stat{n-1})}\nonumber\\
 = & \indd{\stopRegionDec{n}{1-i}} p(\Hyp_i)z^i_{n-1}\frac{p(\obsScalar{n}\given \stat{n-1},\Hyp_i) }{p(\obsScalar{n}\given\stat{n-1})} \label{eq:derCostStopDetFinal}\,.
\end{align}
Combining \cref{eq:derivateRho} and \cref{eq:derCostStopDetFinal} leads to:
\begin{align*}
  \rho^\prime_{n,C_i}(\stat{n}) & = \indd{\stopRegionDec{n}{1-i}}  p(\Hyp_i)\frac{p(\stat{n}\given\Hyp_i)}{p(\stat{n})} \\
		        & + \indd{\stopRegionCompl[n]} \biggl(\int_{\{\xi_{\stat{n}} \in \stopRegionDec{n+1}{1-i}\}}  p(\Hyp_i) z^i_{n}\frac{p(\xnew\given \stat{n},\Hyp_i) }{p(\xnew\given\stat{n})}  p(\xnew\given\stat{n}) \dd\xnew + \int_{\stopRegionCompl[n+1]} \rho_{n+1,C_i}^\prime  \dd\updateProbMeasure \biggr) \\
		        & = \indd{\stopRegionDec{n}{1-i}}  p(\Hyp_i)z_n^i + \indd{\stopRegionCompl[n]} \biggl( p(\Hyp_i) z_n^i \updateProbMeasure^i\bigl(\stopRegionDec{n+1}{1-i}\bigr)  + \int_{\stopRegionCompl[n+1]} \rho_{n+1,C_i}^\prime  \dd\updateProbMeasure \biggr)
\end{align*}
This coincides with the results for $i\in\{0,1\}$ stated in \cref{theo:costDerivatives}.
Second, the derivatives with respect to the coefficients corresponding to the estimation, i.e., $C_{\{2,3\}}$, have to be calculated.
Using the short hand notation defined in \cref{eq:appShortHandz}, the derivative on $\stopRegion$ is given by
\begin{align}
 g^\prime_{n,C_i}(\stat{n}) = & \indd{\stopRegionDec{n}{i-2}} p(\Hyp_{i-2})z^{i-2}_{n}\Var\left[\paramRV\given\stat{n},\Hyp_{i-2}\right]\,.\label{eq:appGprimeEst}
\end{align}
The derivative on $\stopRegionCompl$ can be expressed as
\begin{align}
 \rho^\prime_{n,C_i}(\stat{n}) =  & \int \rho^\prime_{n+1,C_i}\dd\updateProbMeasure \nonumber\\
			       =  & \int_{\stopRegion[n+1]} g^\prime_{n+1,C_i}\dd\updateProbMeasure + \int_{\stopRegionCompl[n+1]} \rho_{n+1,C_i}^\prime  \dd\updateProbMeasure\,.\label{eq:appRhoPrimeComplEst}
\end{align}
Following the same lines as for $i\in\{0,1\}$, the first integral in the last line can be rewritten as
\begin{align}
 \int_{\stopRegion[n+1]} g^\prime_{n+1,C_i}\dd\updateProbMeasure = & \int_{\{\xi_{\stat{n}} \in \stopRegionDec{n+1}{i-2}\}} p(\Hyp_{i-2})z^{i-2}_{n+1}\Var\left[\paramRV\given\stat{n+1},\Hyp_{i-2}\right] p(\xnew\given\stat{n})\dd\xnew\nonumber\\
 = & z^{i-2}_{n} p(\Hyp_{i-2}) \int_{\{\xi_{\stat{n}} \in \stopRegionDec{n+1}{i-2}\}} \Var\left[\paramRV\given\xi(\stat{n},\xnew),\Hyp_{i-2}\right] \frac{p(\xnew\given\stat{n},\Hyp_{i-2})}{p(\xnew\given\stat{n})} p(\xnew\given\stat{n})\dd\xnew \nonumber\\
  = & z^{i-2}_{n} p(\Hyp_{i-2}) \int_{\{\xi_{\stat{n}} \in \stopRegionDec{n+1}{i-2}\}} \Var\left[\paramRV\given\xi(\stat{n},\xnew),\Hyp_{i-2}\right] p(\xnew\given\stat{n},\Hyp_{i-2})\dd\xnew\,.\label{eq:appRhoPrimeComplEst2}
\end{align}
Combining \cref{{eq:appGprimeEst},{eq:appRhoPrimeComplEst},{eq:appRhoPrimeComplEst2}}, one yields
\begin{align*}
  \rho^\prime_{n,C_i}(\stat{n}) & = \indd{\stopRegionDec{n}{i-2}} p(\Hyp_{i-2})z^{i-2}_{n}\Var\left[\paramRV\given\stat{n},\Hyp_{i-2}\right] \\
		        & + \indd{\stopRegionCompl[n]} \biggl( p(\Hyp_{i-2})z^{i-2}_{n} \int_{\{\xi_{\stat{n}} \in \stopRegionDec{n+1}{i-2}\}} \Var\left[\paramRV\given\xi(\stat{n},\xnew),\Hyp_{i-2}\right]  p(\xnew\given\stat{n},\Hyp_{i-2}) \dd\xnew\\
		        & + \int_{\stopRegionCompl[n+1]} \rho_{n+1,C_i}^\prime  \dd\updateProbMeasure \biggr)\,\\
\end{align*}
which is the result stated in \cref{theo:costDerivatives} for $i\in\{2,3\}$.

It now has to be shown that \cref{eq:interchDiffInt} holds, i.e., that the order of the integral and the differentiation can be interchanged. According to the differentiation lemma \citep[Lemma 16.2.]{bauer2001measure}, the order can be interchanged if the following conditions hold:
\begin{enumerate}
 \item The function $\rho_{n+1}(\stat{n+1})$ has to be \updateProbMeasure-integrable for all $0\leq n < N$ and all $C_i$, $i\in\{0,1,2,3\}$.
 \item The function $\rho_n(\stat{n})$ is differentiable for all $0\leq n \leq N$ and all $\stat{n}\in E_\statWOa$.
 \item A function $h_n^i(\stat{n})$ has to exist, which is independent of $C_i$ and it must further hold that
 \begin{align*}
  \given \rho^\prime_{n,C_i}(\stat{n})\given \leq h_n^i(\stat{n}),\quad i\in\{0,1,2,3\}\,.
 \end{align*}
\end{enumerate}
Condition 1 is fulfilled by \cref{corr:H-integrable}, but conditions 2 and 3 still have to be proven. The proof is only carried out for $i\in\{0,1\}$, since it can be done analogously for $i\in\{2,3\}$.
Suppose that the derivative lemma holds for some $\rho_n$ and some $i\in\{0,1\}$. Further assume that
\begin{align*}
 r^i_n(\stat{n}) = p(\Hyp_i) z_n^i \updateProbMeasure^i\bigl(\stopRegionDec{n+1}{1-i}\bigr) + \int_{\stopRegionCompl[n+1]} \rho_{n+1,C_i}^\prime  \dd\updateProbMeasure
\end{align*}
is its derivative on $\stopRegionCompl$. Now consider the derivative of $\rho_{n-1}$, is well-defined and bounded on $\stopRegion$. On $\stopRegionCompl$, it holds that
\begin{align}\label{eq:derLemmaCond2}
 \frac{\partial \rho_{n-1}}{\partial C_i} = \frac{\partial }{\partial C_i}\int\rho_{n}\dd\updateProbMeasure[n-1] & = p(\Hyp_i) z_{n-1}^i \updateProbMeasure[n-1]^i\bigl(\stopRegionDec{n}{1-i}\bigr) + \int_{\stopRegionCompl[n]} \rho_{n,C_i}^\prime  \dd\updateProbMeasure[n-1]\,.
\end{align}
Hence, the derivative of $\rho_{n-1}$ is upper bounded on $\stopRegionCompl$ by
\begin{align}
 \left\lvert \frac{\partial \rho_{n-1}}{\partial C_i} \right\rvert &= \left\lvert p(\Hyp_i) z_{n-1}^i \updateProbMeasure[n-1]^i\bigl(\stopRegionDec{n}{1-i}\bigr) + \int_{\stopRegionCompl[n]} \rho_{n,C_i}^\prime  \dd\updateProbMeasure[n-1] \right\rvert \nonumber\\
 & \leq  p(\Hyp_i) z_{n-1}^i \updateProbMeasure[n-1]^i\bigl(\stopRegionDec{n}{1-i}\bigr) + \left\lvert\int_{\stopRegionCompl[n]} \rho_{n,C_i}^\prime  \dd\updateProbMeasure[n-1] \right\rvert \nonumber\\
  & \leq  p(\Hyp_i) z_{n-1}^i \updateProbMeasure[n-1]^i\bigl(\stopRegionDec{n}{1-i}\bigr) + \int_{\stopRegionCompl[n]} h_n^i  \dd\updateProbMeasure[n-1] \label{eq:derLemmaCond3}\,.
\end{align}
From \cref{eq:derLemmaCond2,eq:derLemmaCond3}, it follows that the derivative Lemma holds for $n-1$ if it holds for $n$.
The induction basis is given by $\rho_N$, where it holds that
\begin{align*}
 \rho^\prime_{N,C_i} = p(\Hyp_i)z_N^i\ind{\stopRegionDec{n}{1-i}}
\end{align*}
and
\begin{align*}
 \left\lvert \rho^\prime_{N,C_i} \right\rvert \leq p(\Hyp_i)z_N^i = h_N^i(\stat{n})\,.
\end{align*}
This concludes the proof.

\section{Proof of Theorem \ref{theo:derivativesPerformanceRelation}}\label{app:proofDerivativesPerformanceRelation}
Assuming that the optimal policy stated in \cref{cor:optPolicyC} is used, the performance measures defined in \cref{eq:alphaRec} and \cref{eq:betaRec} are written in terms of the different regions of the test which are defined in \cref{eq:defRegionsn,eq:defRegionsN,eq:defStopRegionDec}.

The error probabilities $\errorDet{n}{i}(\stat{n})$ of the scheme at time $n$ and state $\stat{n}$ are then given by:
\begin{align}\label{eq:alphaDefRegions}
 \errorDet{n}{i}(\stat{n}) & = \left\{
		     \begin{array}{ll}
                      0 & \!  \text{for } \stat{n} \in \stopRegionDec{n}{i} \\
                      1 &\!  \text{for } \stat{n} \in \stopRegionDec{n}{1-i} \\
                      \E[\errorDet{n+1}{i}(\stat{n+1})\given\stat{n},\Hyp_i] & \!  \text{for }\stat{n} \in \stopRegionCompl
		     \end{array}
		     \right.
\end{align}
The expected value in \cref{eq:alphaDefRegions} is equivalent to
\begin{align*}
 \E\left[\errorDet{n+1}{i}(\stat{n+1})\given\stat{n},\Hyp_i\right] & = \int \errorDet{n+1}{i} \dd\updateProbMeasure  \\
		 & = \updateProbMeasure^i\bigl(\stopRegionDec{n+1}{1-i}\bigr) + \int_{\stopRegionCompl[n+1]} \errorDet{n+1}{i}  \dd\updateProbMeasure^i
\end{align*}
To show that \cref{theo:derivativesPerformanceRelation} holds, it first has  to be shown that $\errorDet{n}{i}(\stat{n})=\frac{\rho^\prime_{n,C_i}(\stat{n})}{p(\Hyp_i)z_n^i}$ is a valid solution of \cref{eq:alphaDefRegions}.
For $\stat{n}\in\stopRegionDec{n}{i}$, it holds that
\begin{align*}
 \frac{\rho^\prime_{n,C_i}}{p(\Hyp_i)z_n^i} = 0 = \errorDet{n}{i}(\stat{n})\,
\end{align*}
and for $\stat{n}\in\stopRegionDec{n}{1-i}$ it holds that
\begin{align*}
 \frac{\rho^\prime_{n,C_i}}{p(\Hyp_i)z_n^i} = \frac{ p(\Hyp_i) z_n^i}{p(\Hyp_i)z_n^i} = 1 = \errorDet{n}{i}(\stat{n})\,,
\end{align*}
which corresponds to the error probabilities on the stopping region.
On the complement of the stopping region, we have
\begin{align*}
 \frac{\rho^\prime_{n,C_i}(\stat{n})}{p(\Hyp_i)z_n^i} & =  \updateProbMeasure^i\bigl(\stopRegionDec{n+1}{1-i}\bigr)  + \int_{\{\xi_{\stat{n}}\in\stopRegionCompl[n+1]\}} \frac{\rho_{n+1,C_i}^\prime(\xi(\stat{n},\xnew))}{p(\Hyp_i)z_n^i\frac{p(\xnew\given \stat{n},\Hyp_i)}{p(\xnew\given\stat{n})}} p(\xnew\given\stat{n}, \Hyp_i) \dd\xnew \\
 \frac{\rho^\prime_{n,C_i}(\stat{n})}{p(\Hyp_i)z_n^i} & =  \updateProbMeasure^i\bigl(\stopRegionDec{n+1}{1-i}\bigr) + \frac{1}{p(\Hyp_i)z_n^i}\int_{\stopRegionCompl[n+1]}\rho_{n+1,C_i}^\prime \dd\updateProbMeasure \\
\rho^\prime_{n,C_i}(\stat{n}) & =  p(\Hyp_i)z_n^i \updateProbMeasure^i\bigl(\stopRegionDec{n+1}{1-i}\bigr) + \int_{\stopRegionCompl[n+1]}\rho_{n+1,C_i}^\prime \dd\updateProbMeasure
\end{align*}
which is true by \cref{theo:costDerivatives}. Hence, \cref{theo:derivativesPerformanceRelation} holds for the error probabilities $\errorDet{n}{i}(\stat{n})$.

In a similar way, the mean squared error can be written as
\begin{align}\label{eq:betaDefRegions}
  \errorEst{n}{i}(\stat{n}) & = \left\{
		     \begin{array}{ll}
                      0 & \text{for } \stat{n} \in \stopRegionDec{n}{i} \\
                      \Var[\paramRV\given\stat{n},\Hyp_i] & \text{for } \stat{n} \in \stopRegionDec{n}{1-i} \\
                      \E[\errorEst{n+1}{i}(\stat{n+1})\given \stat{n},\Hyp_i] & \text{for }\stat{n} \in \stopRegionCompl
		     \end{array}
		     \right.
\end{align}
where the expected value can be expressed as
\begin{align*}
 \E\bigl[\errorEst{n+1}{i} & (\stat{n+1})\given\stat{n},\Hyp_i\bigr] =   \int \errorEst{n+1}{i}\dd\updateProbMeasure  \\
		 & = \int_{\{\xi_{\stat{n}} \in \stopRegionDec{n+1}{i}\}} \Var\bigl[\paramRV\given\xi_{\stat{n}}(\xnew),\Hyp_{i}\bigr]  p(\xnew\given\stat{n},\Hyp_{i}) \dd\xnew +\int_{\stopRegionCompl[n+1]} \errorEst{n+1}{i} \dd\updateProbMeasure^i\,.
\end{align*}
It now has to be shown that $\errorEst{n}{i}(\stat{n})=\frac{\rho^\prime_{n,C_j}(\stat{n})}{p(H_{j-2}) z_n^{j-2}}$  also solves \cref{eq:betaDefRegions}.
For $\stat{n}\in\stopRegionDec{n}{i}$ and with the substitution $i=j-2$, it holds that
\begin{align*}
 \frac{\rho^\prime_{n,C_j}(\stat{n})}{p(H_{j-2}) z_n^{j-2}} & =  0 = \errorEst{n}{i}(\stat{n})\,
\end{align*}
and for $\stat{n}\in\stopRegionDec{n}{1-i}$ with $i=j-2$ it holds that
\begin{align*}
 \frac{\rho^\prime_{n,C_j}(\stat{n})}{p(H_{j-2}) z_n^{j-2}} =  \Var[\paramRV\given\stat{n},\Hyp_{j-2}] = \Var[\paramRV\given\stat{n},\Hyp_i] = \errorEst{n}{i}(\stat{n}) \,,
\end{align*}
which is the definition of the estimation error on the stopping region. For the complement of the stopping region, it also has to be shown that $\errorEst{n}{i}(\stat{n})=\frac{\rho^\prime_{n,C_j}(\stat{n})}{p(H_{j-2}) z_n^{j-2}}$ solves \cref{eq:betaDefRegions}:
\begin{align*}
 \frac{\rho^\prime_{n,C_j}(\stat{n})}{p(H_{j-2}) z_n^{j-2}} & = \int_{\{\xi_{\stat{n}} \in \stopRegionDec{n+1}{j-2}\}}\Var\bigl[\paramRV\given\xi(\stat{n},\xnew),\Hyp_{j-2}\bigr]  p(\xnew\given\stat{n},\Hyp_{j-2}) \dd\xnew\\
	  & + \int_{\{\xi_{\stat{n}}\in\stopRegionCompl[n+1]\}} \frac{\rho_{n+1,C_j}^\prime(\xi(\stat{n},\xnew))}{p(\Hyp_{j-2})z_n^{j-2}\frac{p(\xnew\given \stat{n},\Hyp_{j-2})}{p(\xnew\given\stat{n})}} p(\xnew\given\stat{n}, \Hyp_{j-2}) \dd\xnew \\
 \frac{\rho^\prime_{n,C_j}(\stat{n})}{p(H_{j-2}) z_n^{j-2}} & = \int_{\{\xi_{\stat{n}} \in \stopRegionDec{n+1}{j-2}\}} \Var\bigl[\paramRV\given\xi(\stat{n},\xnew),\Hyp_{j-2}\bigr]  p(\xnew\given\stat{n},\Hyp_{j-2}) \dd\xnew\\
	  & + \frac{1}{p(\Hyp_{j-2})z_n^{j-2}} \int_{\stopRegionCompl[n+1]} \rho_{n+1,C_j}^\prime\dd\updateProbMeasure \\
 \rho^\prime_{n,C_j}(\stat{n}) & = p(H_{j-2}) z_n^{j-2} \int_{\{\xi_{\stat{n}} \in \stopRegionDec{n+1}{j-2}\}} \Var\bigl[\paramRV\given\xi(\stat{n},\xnew),\Hyp_{j-2}\bigr]  p(\xnew\given\stat{n},\Hyp_{j-2}) \dd\xnew\\
	  & + \int_{\stopRegionCompl[n+1]} \rho_{n+1,C_j}^\prime\dd\updateProbMeasure
\end{align*}
which is true by \cref{theo:costDerivatives}. Hence, \cref{theo:derivativesPerformanceRelation} holds also for the estimation error $\errorEst{n}{i}(\stat{n})$.

Finally, the detection and estimation error at $n=0$ with sufficient statistic $\stat{0}$ can be expressed as
\begin{align*}
  \rho^\prime_{0,C_i}(\stat{0}) & = p(\Hyp_i) \errorDetOpt{0}{i}(\stat{0}) \quad\quad\;\;\; i\in\{0,1\}\,, \\
  \rho^\prime_{0,C_i}(\stat{0}) & = p(\Hyp_{i-2}) \errorEstOpt{0}{i-2}(\stat{0}) \quad  i\in\{2,3\}\,,
\end{align*}
which concludes the proof.
\section{Proof of Theorem \ref{theo:maxC}}\label{app:proofTheoMaxC}
It has to be shown that for $C=\{C_0,C_1,C_2,C_3\}$ the solution of
\begin{align}\label{eq:appLagrangianDual}
 \max_{C\geq0}\, L_\errConstr(C) = \max_{C\geq0}\, \rho_0(\stat{0}) - \sum_{i=0}^1 p(\Hyp_i)C_i\errConstr_i - \sum_{i=2}^3 p(\Hyp_{i-2})C_i\errConstr_i
\end{align}
coincides with the solution of \cref{prb:constrProblem}.
According to \cref{theo:derivativesPerformanceRelation}, it holds for the derivative of $L_\errConstr(C)$ with respect to $C_i$ that
\begin{align*}
 \frac{\partial L_\errConstr(C)}{\partial C_i} & = \frac{\partial \rho_0(\stat{0})}{\partial C_i} - p(\Hyp_i)\errConstr_i = p(\Hyp_i)\left(\errorDetOpt{0}{i}(\stat{0};\policyOptC) - \errConstr_i\right)\,i\in\{0,1\}\,,\\
 \frac{\partial L_\errConstr(C)}{\partial C_i} & = \frac{\partial \rho_0(\stat{0})}{\partial C_i} - p(\Hyp_{i-2})\errConstr_i = p(\Hyp_i)\left(\errorEstOpt{0}{i}(\stat{0};\policyOptC) - \errConstr_i\right)\,i\in\{2,3\}\,.
\end{align*}
Due to space constraints, the proof of optimality is only shown for the detection constraints because it follows analogously for the estimation constraints.
Since the constraints on the detection and estimation errors are inequality constraints, $C_i^\star$ is an optimal solution of \cref{eq:appLagrangianDual} if the derivative vanishes for a non-negative $C_i^\star$ or there exist a non-positive derivative for $C_i^\star=0$.
First of all, assume that there exist a non-negative and bounded $C_i^\star$ for which die derivative vanishes. In this case, it holds that
\begin{align*}
 \errorDetOpt{0}{i}(\stat{0};\policyOptCerr) = \errConstr_i\,,
\end{align*}
i.e., the constraints on the error probabilities are fulfilled with equality. For this case, it has to be shown that the coefficient $C_i^\star$ is non-negative and finite.
We now consider the limit where a coefficient $C_i$ tends to infinity, which is a sequential scheme that never decides in favor of $\Hyp_{i-1}$. By inspecting the limit
\begin{align}\label{eq:gradientLimit}
 \lim_{C_i\rightarrow\infty} \frac{\partial L_\errConstr(C)}{\partial C_i} & =  p(\Hyp_i)\left( 0 - \errConstr_i\right) < 0,
\end{align}
one can see that this contradicts the fact that an infinitely large $C_i$ maximizes \cref{eq:appLagrangianDual}, because a negative gradient would only lead to an optimal value if the corresponding Lagrange multiplier is equal to zero.
Now, the case $C_i=0$ needs closer inspection. In this case the gradient becomes
\begin{align*}
 \frac{\partial L_\errConstr(C)}{\partial C_i}\biggr\rvert_{C_i=0} & =  p(\Hyp_i)\left( \errorDetTilde{0}{i}(\stat{0};\policyOptC) - \errConstr_i\right)\,,
\end{align*}
where $\errorDetTilde{0}{i}(\stat{0};\policyOptC)$ is the actual detection error. The cost of deciding in favor of $\Hyp_{1-i}$ is determined only by $C_{2+i}$. At this point, it has to be recalled that the detection error $\errorDet{0}{i}$ depends on the cost for a wrong detection and the cost for an inaccurate estimate so that the constraint on the detection errors can be implicitly fulfilled by the corresponding estimation constraint.
The two cases in which the actual detection error is smaller and in which the actual detection error is larger than the constraint have to be distinguished. If the actual detection error is smaller than the constraint, i.e., the gradient is negative for a zero Lagrange multiplier, this results in an optimum of $L_\errConstr(C)$ (complementary slackness). Opposed to this, when the actual detection error is larger than the constraint, it results in a positive derivative, which contradicts the assumption that $C_i=0$ maximizes $L_\errConstr(C)$.
Due to the fact that $L_\errConstr(C)$ is concave in $C$, the gradient is positive for $C_i=0$ and that $L_\errConstr(C)$ is decreasing in the limit (see \cref{eq:gradientLimit}), we know that there exist a finite and positive $C_i^\star$ which maximizes $L_\errConstr(C)$. 
With the same line of arguments, it can be shown that this holds also for $i\in\{2,3\}$. That is, the resulting scheme fulfills the requirements on the detection and estimation errors and is, due to the definition of $\rho_n$, of minimum run-length.

It is left to show that the optimal objective is equivalent to the run-length of the scheme. The optimal value of the dual objective can be re-written as:
\begin{align}
   L_\errConstr(C_\errConstr^\star) & = \rho_0(\stat{0};\policyOptCerr) - \sum_{i=0}^1 p(\Hyp_i)C_{i,\errConstr}^\star\errConstr_i - \sum_{i=2}^3 p(\Hyp_{i-2})C_{i,\errConstr}^\star\errConstr_i \nonumber \\
				  & = V^\star(\policyOptCerr) - \sum_{i=0}^1 p(\Hyp_i)C_{i,\errConstr}^\star\errConstr_i - \sum_{i=2}^3 p(\Hyp_{i-2})C_{i,\errConstr}^\star\errConstr_i \nonumber \\
	& = \E[\tau(\stopOptCerr)] + \sum_{i=0}^1 p(\Hyp_i)\bigl( C_{i,\errConstr}^\star \errorDetOpt{0}{i}(\stat{0};\policyOptCerr) + C_{2+i,\errConstr}^\star \errorEstOpt{0}{i}(\stat{0};\policyOptCerr) \bigr ) \nonumber\\
	 & \phantom{ = \E[\tau(\stopOptCerr)] }\; - \sum_{i=0}^1 p(\Hyp_i)\bigl( C_{i,\errConstr}^\star \errConstr_i + C_{2+i,\errConstr}^\star \errConstr_{2+i} \bigr ) \nonumber\\
	 & = \E[\tau(\stopOptCerr)] = \E[\tau(\stopOptErr)] \nonumber
\end{align}
Hence, \cref{eq:optMaxC} is indeed the Lagrangian dual of \cref{prb:constrProblem}.
This concludes the proof.
\section{Regularized Problem Formulation}\label{app:regularizedFormulation}
The regularized formulation of \cref{eq:jointDetEstLPReg} is given by
 \begin{align}
  \begin{split}
   \max_{C\geq0,\rho_n\in\mathcal{L}}\;&\rho_0(\stat{0}) - \sum_{i=0}^1 p(\Hyp_i)C_i\errConstr_i - \sum_{i=2}^3 p(\Hyp_{i-2})C_i\errConstr_i + \frac{\regConst}{N+1}\sum_{n=0}^N\int\rho_n(\stat{n})\dd\mu(\stat{n})\label{eq:jointDetEstLPReg} \\
  \text{s.t.} \quad  & \rho_n(\stat{n}) \leq  \auxVarCostOpt{0,n}(\stat{n}) \quad\text{for } 0\leq n\leq N \\
  & \rho_n(\stat{n}) \leq  \auxVarCostOpt{1,n}(\stat{n}) \quad\text{for } 0\leq n\leq N \\
  & \rho_n(\stat{n}) \leq 1+\int \rho_{n+1}\dd\updateProbMeasure \quad \text{for } 0\leq n < N    
  \end{split}
 \end{align}
where $\regConst$ is a small regularization constant and $\mu(\stat{n})$ is some strictly increasing measure on $(\stateSpaceStat, \metricStat)$. This regularized formulation is used to directly enforce the maximization over the entire domain.
In \cref{eq:jointDetEstLP}, the maximization over the entire domain is only implicit due to the recursive definition of $\rho_n$ and, thus, depends on a sufficiently strong coupling between $\rho_0(\stat{0})$ and $\rho_n(\stat{n})$ through $\updateProbMeasure$. However, this coupling is decreasing with increasing $n$.
First, it has to be shown, that the objective of the regularized problem is still bounded.
For the regularization term without the constant scaling factor, it holds that
\begin{align}
 \sum_{n=0}^N\int\rho_n(\stat{n})\dd\mu(\stat{n}) \leq \sum_{n=0}^N\int g(\stat{n})\dd\mu(\stat{n}) \leq \sum_{n=0}^N\left(\int \auxVarCostOpt{0,n}(\stat{n})\dd\mu(\stat{n}) + \int\auxVarCostOpt{1,n}(\stat{n})\dd\mu(\stat{n})\right)\,.\label{eq:regIneq}
\end{align}
It is assumed, that $\mu(\stat{n})$ has been chosen such that
\begin{align*}
 \int p(\Hyp_i\given\stat{n})\dd\mu(\stat{n}) & = p(\Hyp_i)\,,\quad i\in\{0,1\}\,, \\
 \int p(\Hyp_i\given\stat{n})\Var\left[\paramRV\given\stat{n},\Hyp_i\right]\dd\mu(\stat{n}) & = p(\Hyp_i)\Var\left[\paramRV\given\Hyp_i\right]\,,\quad i\in\{0,1\}\,.
\end{align*}
The integrals over the two auxiliary variables $\auxVarCostOpt{i,n}(\stat{n})$ are hence given by
\begin{align}
\int \auxVarCostOpt{i,n}(\stat{n})\dd\mu(\stat{n}) = C_{1-i}p(\Hyp_{1-i}) + C_{2+i} p(\Hyp_i) \Var\left[\paramRV\given\Hyp_i\right]\,.\label{eq:regAuxInt}
\end{align}
Combining \cref{eq:regIneq} and \cref{eq:regAuxInt} yields an bound on the regularization term
\begin{align}
\frac{\regConst}{N+1}\sum_{n=0}^N\int\rho_n(\stat{n})\dd\mu(\stat{n}) < \regConst \left(\sum_{i=0}^1 C_{i}p(\Hyp_{i}) + C_{2+i} p(\Hyp_i) \Var\left[\paramRV\given\Hyp_{i-2}\right] \right)\,.
\end{align}
Hence, the objective in \cref{eq:jointDetEstLPReg} is upper bounded by
\begin{align}
 \rho_0(\stat{0}) - \sum_{i=0}^1 p(\Hyp_i)C_i(\errConstr_i-\regConst) - \sum_{i=2}^3 p(\Hyp_{i-2})C_i(\errConstr_i-\regConst\Var\left[\paramRV\given\Hyp_i\right])\,.\label{eq:regUpperBoundFinal}
\end{align}
As long as $\errConstr_i-\regConst>0$ for all $i\in\{0,1\}$ and $\errConstr_i-\regConst\Var[\paramRV\given\Hyp_i]>0$ for $i\in\{2,3\}$, boundedness of the objective is guaranteed. That means that the constant has to be chosen according to
\begin{align}
 \regConst < \min\left\{\errConstr_0,\errConstr_1,\frac{\errConstr_2}{\Var[\paramRV\given\Hyp_0]},\frac{\errConstr_3}{\Var[\paramRV\given\Hyp_1]}\right\}\,.
\end{align}
From \cref{eq:regUpperBoundFinal}, it can be seen that the resulting test has lower target error probabilities and estimation errors than the original test and is hence not strictly optimal any more.

\FloatBarrier
\section{Results of the Shift-in-Variance Test with Relaxed Estimation Constraints} \label{app:relaxedShiftInVariance}
In the following, the results of the shift-in-variance test with relaxed estimation constraints under $\Hyp_1$ are presented. The results shown in \cref{tbl:ResultsRandomVarMod,tbl:ResultsRandomVarWeak} differ only in the estimation constraints under $\Hyp_1$. The original problem has an estimation constraint of $0.250$ under $\Hyp_1$, which is first relaxed to $0.300$ and then to $0.500$ in a second step. 
As one can see, all constraints, except for the estimation constraints under $\Hyp_1$, are almost fulfilled with equality by the optimal scheme.
Since the detection constraint under $\Hyp_1$ dominates the corresponding estimation constraint for the example shown in \cref{tbl:ResultsRandomVarWeak}, the empirical estimation errors are below the allowed threshold $\errConstr_3$.

\begin{table}[!ht]
 \centering
 \caption{Shift-in-variance test: Constraints and simulation results - Moderate estimation constraint under $\Hyp_1$.}
  \subtable[Detection and estimation errors]{
 \includegraphics{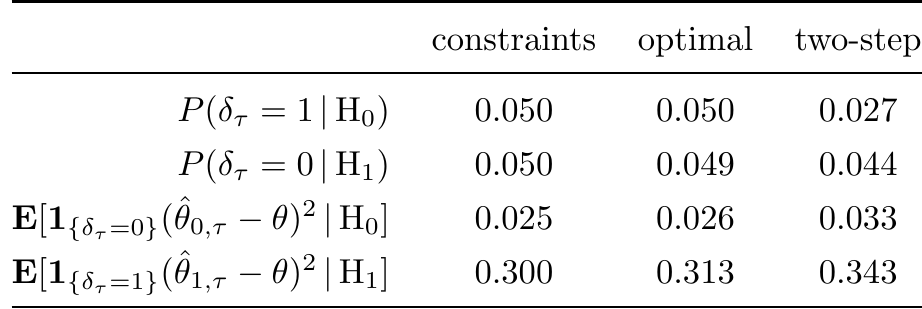} 
 }\\
  \subtable[Run-lengths: The second column contains the expected run-length of the optimal scheme obtained as the output of the \ac{LP}.]{
  \includegraphics{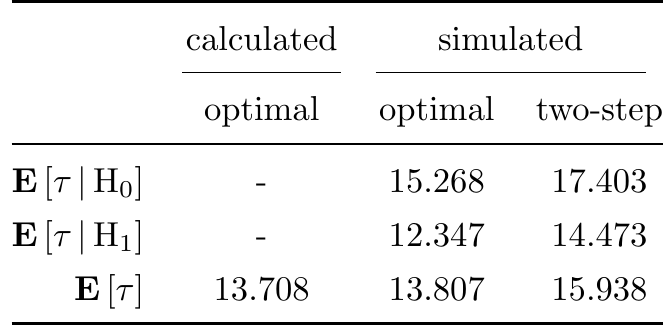}
  }
  \label{tbl:ResultsRandomVarMod}
\end{table}

\begin{table}[!ht]
 \renewcommand{\arraystretch}{1.3}
 \centering
 \caption{Shift-in-variance test: Constraints and simulation results - Weak estimation constraint under $\Hyp_1$.}
 \subtable[Detection and estimation errors]{
\includegraphics{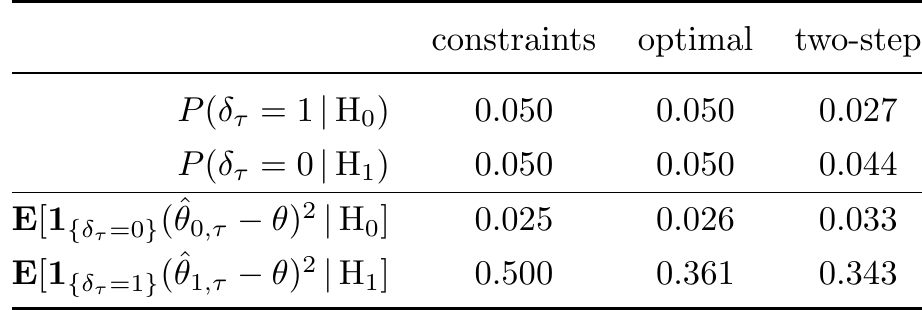}
 }\\
 \subtable[Run-lengths: The second column contains the expected run-length of the optimal scheme obtained as the output of the \ac{LP}]{
\includegraphics{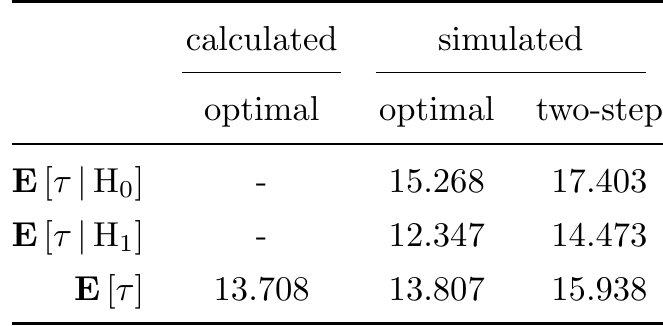}

 }
  \label{tbl:ResultsRandomVarWeak}
\end{table}
    
\FloatBarrier
 
\section*{Acknowledgements}
The authors would like to thank the anonymous reviewers and the editors for their time and effort.

\section*{Funding}

The work of Dominik Reinhard is supported by the German Research Foundation (DFG) under grant number 390542458. 

 \end{document}